\crefname{corollary}{Corollary}{Corollaries}
\Crefname{corollary}{Corollary}{Corollaries}
\crefname{lemma}{Lemma}{Lemmas}
\Crefname{lemma}{Lemma}{Lemmas}
\crefname{theorem}{Theorem}{Theorems}
\Crefname{theorem}{Theorem}{Theorems}
\crefname{observation}{Observation}{Observations}
\Crefname{observation}{Observation}{Observations}
\crefname{definition}{Definition}{Definitions}
\Crefname{definition}{Definition}{Definitions}
\crefname{proposition}{Proposition}{Propositions}
\Crefname{proposition}{Proposition}{Propositions}
\crefname{section}{Section}{Sections}
\Crefname{section}{Section}{Sections}
\crefname{figure}{Figure}{Figures}
\Crefname{figure}{Figure}{Figures}
\crefname{equation}{}{}
\Crefname{equation}{}{}
\renewcommand{\subset}{\subseteq}
\newcommand{\ceil}[1]{\left\lceil{#1}\right\rceil}
\newcommand{\floor}[1]{\left\lfloor{#1}\right\rfloor}
\newcommand{\cond}{\mathrel{}\middle\vert\mathrel{}}
\newcommand{\Oo}{\mathcal O} 
\newcommand{\abs}[1]{\left \lvert #1 \right \rvert}
\newcommand{\set}[1]{\left \{ #1 \right \}}
\newcommand{\N}{{\mathbb{N}}}
\newcommand{\E}{\mathbb{E}}
\newcommand\drop[1]{}
\newcommand{\calF}{\mathcal{F}}
\newcommand{\calQ}{\mathcal{Q}}
\newcommand{\calA}{\mathcal{A}}
\newcommand{\etal}{{\em et al.}}
\definecolor{shade}{RGB}{235,235,235}
\newsavebox{\box@tikzpicture}
  \pgfmathsetmacro\width@scale@picture{#2/\wd\box@tikzpicture}%
\newcommand*\samethanks[1][\value{footnote}]{\footnotemark[#1]}
\title{Near-Optimal Induced Universal Graphs for Bounded Degree Graphs}
\author{Mikkel Abrahamsen \thanks{Research partly supported by Mikkel Thorup's
    Advanced Grant from the Danish Council for Independent Research
    under the Sapere Aude research career programme.} \and Stephen Alstrup \thanks{Research partly supported by the FNU
    project AlgoDisc -- Discrete Mathematics, Algorithms, and Data Structures.} \and Jacob Holm \and Mathias Bæk Tejs Knudsen \samethanks[1]\ \samethanks[2]
    \and Morten Stöckel\thanks{Research supported by Villum Fonden.}}
\institute{University of Copenhagen \\ \email{ \{miab,s.alstrup,jaho,knudsen,most\}@di.ku.dk}}
\date{}
\begin{document}
\setcounter{page}{0}
\maketitle

\begin{abstract}
A graph $U$ is an induced universal graph for a family $\mathcal{F}$ of graphs if every graph in $\mathcal{F}$ is a vertex-induced subgraph of $U$. For the family of all undirected graphs on $n$ vertices Alstrup, Kaplan, Thorup, and Zwick [STOC 2015] give an induced universal graph with $O\!\left(2^{n/2}\right)$ vertices, matching a lower bound by Moon [Proc. of Glasgow Math. Association 1965].


\newcommand{\changedmbtk}[1]{{\color{blue} #1}}
\newcommand{\removedmbtk}[1]{}

Let $k=\ceil{D/2}$. Improving asymptotically on previous results by Butler [Graphs and Combinatorics 2009] and Esperet, Arnaud and Ochem [IPL 2008], we give an induced universal graph with $O\!\left(\frac{k2^k}{k!}n^k \right)$ vertices for the family of graphs with $n$ vertices of maximum degree $D$. For constant $D$, Butler gives a lower bound of $\Omega\!\left(n^{D/2}\right)$.
\removedmbtk{Our upper bound is the first to beat this lower bound when $D$ is sufficiently large.}
For an odd constant $D\geq 3$, Esperet \etal~and Alon and Capalbo [SODA 2008] give a graph with $O\!\left(n^{k-\frac{1}{D}}\right)$ vertices. Using their techniques for any (including constant) even values of $D$ gives asymptotically worse bounds than we present.
\removedmbtk{For any $D$ we also give a lower bound $\Omega\!\left((\frac{n}{2eD})^{D/2}\right)$.}

For large $D$, i.e. when $D = \Omega\left(\log^3 n\right)$, the previous best upper bound was $\binom{n}{\ceil{D/2}} n^{O(1)}$ due to
Adjiashvili and Rotbart [ICALP 2014]. We give upper and lower bounds showing that the size is
$\binom{\floor{n/2}}{\floor{D/2}}2^{\pm\tilde{O}\left(\sqrt{D}\right)}$, where $\tilde{O}$ hides log-factors in $n$ and $D$.
Hence the optimal size is $2^{\tilde{O}(D)}$ and our construction comes within a factor of $2^{\tilde{O}\left(\sqrt{D}\right)}$ from
this. The previous results were larger by at least a factor of $2^{\Omega(D)}$.

\removedmbtk{
We give tighter lower and upper bounds of
\[
\binom{\floor{n/2}}{\floor{D/2}}
\cdot 2^{-O \!\left (\sqrt{D\log n} \cdot \log(n/D) \right )} \text{ and } \binom{\floor{n/2}}{\floor{D/2}} \cdot 2^{O \!\left (\sqrt{D\log n} \cdot \log(n/D) \right )},
\]
respectively, where the upper bound is a randomized construction.}

Part of our solution for the above results is to construct an induced universal graph with $2n-1$ vertices for the family of graphs with maximum degree $2$. 
This proves the correctness of a conjecture by Esperet \etal~stating the existence of such a graph with $2n+o(n)$ vertices.

In addition, we give results for acyclic graphs with maximum degree $2$, cycle graphs, and investigate the relationship between induced universal graphs and adjacency labeling schemes. From a labeling perspective our bounds are the first to give labels for any value of $D$ that are at most $o(n)$ bits longer than the shortest possible labels.

\end{abstract}

\thispagestyle{empty}
\newpage
\setcounter{page}{1}

\section{Introduction}
A graph $G=(V, E)$ is said to be an \emph{induced universal graph} for a family $\cal F$ of graphs if it contains each graph in $\cal F$ as a vertex-induced subgraph. A graph $H=(V',E')$ is contained in $G$ as a \emph{vertex-induced subgraph} if $V'
\subseteq V$ and $E'=\{vw\mid v,w \in V' \wedge vw \in E\}$. Induced universal graphs have been studied since the 1960s~\cite{moon1965minimal,Rado64}, and bounds on
the sizes of induced universal graphs have been given
for many families of graphs, including general, bipartite~\cite{AlstrupKTZ14}, and bounded arboricity graphs~\cite{adjacencytrees2015}.
We later define the classic distributed data structure \emph{adjacency labeling scheme} and describe how it is directly related to induced universal graphs. In Table~\ref{tab:adjacency2} in \Cref{sec:overview} below we give an overview of previous results and results in this paper.

\subsection{Overview of new and existing results}\label{sec:overview}
We give an overview in Table~\ref{tab:adjacency2} of dominating existing and new results. All bounds are on sizes of induced universal graphs. In the table, ``P'' refers to a result in this paper, $k=\ceil{D/2}$, $L=(\sqrt{D\log n} \cdot \log(n/D))$ and $U=(\sqrt{D\log n} \cdot \log(n/D))$. The ``A'' and ``B'' case below represent two different constructions. The upper bound in ``B'' is a randomized construction, whereas both lower bounds hold for both upper bounds.

\begin{table*}[ht]
	\renewcommand{\arraystretch}{1.3}
	\small
	\centering
	\makebox[0pt][c]{
		\begin{tabular}{|c|c|c|c|}
			\hline
			
			\hline
			\bf Graph family & \bf Lower bound & \bf Upper bound & \bf Lower/Upper\\
			\hline
			\noalign{\vskip 2mm} \hline
			General&   $2^{\frac{n-1}{2}}$ & $O( 2^{\frac{n}{2}})$ &  \cite{moon1965minimal}/ \cite{AlstrupKTZ14} \\
			\hline
			Tournaments &   $2^{\frac{n-1}{2}}$ & $O( 2^{\frac{n}{2}})$ &  \cite{moon1968topics}/\cite{AlstrupKTZ14}  \\
			\hline
			Bipartite&   $\Omega(2^{\frac{n}{4}})$ & $O( 2^{\frac{n}{4}})$ & \cite{Lozin2007}/\cite{AlstrupKTZ14} 
			\\
			\hline \noalign{\vskip 2mm}
			\hline
			A: Max degree $D$ & $\Omega((\frac{n}{2eD})^{D/2})$ & $O\!\left(\frac{\min (n,k2^k)}{k!}n^k \right)$ &  P/P and ~\cite{icalpnoy14} \\
			\hline
			B: Max degree $D$ & $\binom{\floor{n/2}}{\floor{D/2}}
			\cdot 2^{-O(L)}$ & $ \binom{\floor{n/2}}{\floor{D/2}} \cdot 2^{O(U)}$ &  P/P \\
			\hline
			Max degree $D = o\!\left(\sqrt{n}\right)$ or $D \ge \frac{(2/3+\Omega(1))n}{\ln n}$ &
			$\binom{\floor{n/2}}{\floor{D/2}}\cdot n^{-O(1)}$ & 
			&  \cite{mckay1990asymptotic,mckay1991asymptotic} \\
			\hline
			Constant odd degree $D$ & $\Omega(n^{\frac{D}{2}})$ &  $O(n^{k-\frac{1}{D}})$ & \cite{Butler_induced-universalgraphs}/\cite{privatealon,AlonCapalbo2008,Esperet2008} \\
			\hline
			Max degree 2&  $11 \floor{n/6}$ & $2n-1$& \cite{Esperet2008}/P    \\
			\hline
			Acyclic, max degree 2&  $\floor{3/2n}$ & $\floor{3/2n}$& P/P    \\ 
			\hline
			A cycle aware of $n$&  $n+ \Omega(\log \log n)$ & $n+\log n + O(1)$  & P/P    \\
			\hline
			A cycle not aware of $n$&  $n + \Omega\!\left(\sqrt[3]{n}\right)$ & $n+ O(\sqrt{n})$ & P/P    \\
			\hline
			\noalign{\vskip 2mm} \hline
			Excluding a fixed minor
			& $\Omega(n)$ & $n^2 (\log {n})^{O(1)} $ & \cite{gavoille2007shorter}  \\
			\hline
			Planar& $\Omega(n)$ &$n^2(\log n)^{O(1)}$ &  \cite{gavoille2007shorter} \\
			\hline
			Planar, constant degree & $\Omega(n)$ & $O(n^2)$& \cite{Chung90} \\
			\hline
			Outerplanar & $\Omega(n)$ &  $n (\log n)^{O(1)}$ &  \cite{gavoille2007shorter} \\
			\hline
			Outerplanar, constant degree & $\Omega(n)$& $O(n)$& \cite{Chung90}\\
			\hline
			\noalign{\vskip 2mm} \hline
			Treewidth $l$ & $n2^{\Omega(l)}$& $n (\log \frac{n}{l})^{O(l)} $& \cite{gavoille2007shorter}\\
			
			\hline
			Constant arboricity $l$ & $\Omega(n^l)$ & $O(n^l)$ & \cite{alstruprauhe}/\cite{adjacencytrees2015} \\
			\hline
		\end{tabular}
	}
	\caption{Induced-universal graphs for various families of graphs.  ``P'' is results in this paper. For the max degree results $k=\ceil{D/2}$. In the result for families of graphs with an excluded minor, the~$O(1)$ term in the exponent depends on the fixed minor excluded.}
	\label[table]{tab:adjacency2}
\end{table*}

\subsection{Maximum degree $2$ and maximum degree $D$}
Let $g_v(\calF)$ be the smallest number of vertices in any induced universal graph for a family of graphs $\calF$. Let $\mathcal{G}_D$ be the family of graphs with $n$ vertices and maximum degree $D$. In the families of graphs we study in this paper, a graph always has $n$ vertices, unless explicitly stated otherwise.

{\bf {\large Maximum degree $2$.}} Butler~\cite{Butler_induced-universalgraphs} shows that $g_v(\mathcal{G}_2)\leq 6.5n$. That was subsequently improved to $g_v(\mathcal{G}_2)\leq 2.5n+O(1)$ by Esperet \etal~\cite{Esperet2008} who also show the lower bound $g_v(\mathcal{G}_2)\geq 11 \floor{n/6}$. Esperet \etal~\cite{Esperet2008} conjecture that $g_v(\mathcal{G}_2)\leq 2n+o(n)$
and raise as an open problem to prove or disprove this.
We show the correctness of the conjecture by proving $g_v(\mathcal{G}_2)\leq 2n-1$. The $11 \floor{n/6}$ lower bound is based on a family of graphs whose largest component has $3$ vertices. We show matching $\frac{11}{6}n+\Oo(1)$ upper bounds for the family of graphs in $\mathcal{G}_2$ whose largest component is sufficiently small ($\leq6$ vertices), and for the family of graphs in $\mathcal{G}_2$ whose smallest component is sufficiently large ($\geq10$ vertices).

{\bf {\large Maximum degree $D$.}}
Let $k=\ceil{D/2}$. To give an upper bound for any value of $D$, Butler~\cite{Butler_induced-universalgraphs} first establishes: 

\begin{corollary}[\cite{Butler_induced-universalgraphs}]\label[corollary]{Butlersplit}
	Let $G \in \mathcal{G}_D$ be a graph on $n$ vertices with maximum degree $D$. Then $G$ can be decomposed into $k$ edge disjoint subgraphs where the maximum degree of each subgraph is at most $2$.
\end{corollary}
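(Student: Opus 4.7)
The statement is equivalent to the claim that the edge set of $G$ admits a proper edge-coloring into $k$ color classes, each of which induces a subgraph of maximum degree $2$. I would prove it via the classical Eulerian-circuit / König reduction.

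\medskip

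\textbf{Step 1: Parity reduction.} Write $D' = 2k$, so $D \le D'$. Construct an auxiliary multigraph $G'$ containing $G$ as a subgraph such that every vertex of $G'$ has even degree and maximum degree at most $2k$. To do this, pair up the odd-degree vertices of $G$ arbitrarily and, for each pair $(u,v)$, add an auxiliary edge between them (a parallel edge is allowed; we work in the multigraph setting). Each vertex of odd degree has degree at most $2k-1$ in $G$ and gains exactly one incident auxiliary edge, so its degree in $G'$ becomes even and at most $2k$; even-degree vertices are unchanged. (If one prefers to avoid multi-edges, one can instead introduce auxiliary vertices of small degree; the argument below is insensitive to this choice.)

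\medskip

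\textbf{Step 2: Eulerian orientation.} Each connected component of $G'$ is Eulerian, since all degrees are even. Fix an Eulerian circuit in each component and orient every edge in the direction in which it is traversed. Then every vertex $v$ of $G'$ satisfies $d^+(v) = d^-(v) = d_{G'}(v)/2 \le k$.

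\medskip

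\textbf{Step 3: Bipartite lift and edge-coloring.} Build a bipartite multigraph $B$ with vertex set $\{v^+ : v \in V(G')\} \cup \{v^- : v \in V(G')\}$, and for every directed edge $u \to v$ of $G'$ place an edge $u^+ v^-$ in $B$. By Step~2, $B$ has maximum degree at most $k$. König's edge-coloring theorem yields a proper coloring of $E(B)$ using exactly $k$ colors, i.e.\ a decomposition of $E(B)$ into $k$ matchings $M_1,\dots,M_k$.

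\medskip

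\textbf{Step 4: Push back to $G$.} Each $M_i$ corresponds to a set $H_i' \subseteq E(G')$ of undirected edges with the property that every vertex of $G'$ is the head of at most one edge of $H_i'$ and the tail of at most one edge of $H_i'$; hence the subgraph $H_i'$ has maximum degree at most $2$. Finally, set $H_i = H_i' \cap E(G)$; the $H_i$ are edge-disjoint, their union covers $E(G)$, and each has maximum degree at most $2$ since it is a subgraph of $H_i'$. The \emph{main obstacle}, and really the only non-routine ingredient, is producing the $k$-edge-coloring of $B$: this is handed to us by König's theorem thanks to the degree bound established in Step~2, which is in turn the whole point of the Eulerian orientation.
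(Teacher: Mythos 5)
Your proof is correct. Note that the paper does not prove this statement at all: it is quoted as a corollary of Butler's result, which in turn rests on Petersen's 2-factorization theorem, and your argument is exactly the classical proof of that fact --- pad the graph to make all degrees even (the pairing of odd-degree vertices is possible by the handshake lemma, and working with multigraphs is harmless), take an Eulerian orientation so that in- and out-degrees are at most $k=\ceil{D/2}$, pass to the bipartite lift, decompose it into $k$ matchings by K\H{o}nig's edge-colouring theorem, and pull the colour classes back, discarding the auxiliary edges. All steps are sound, including the degree bookkeeping in Step~1 for both parities of $D$ and the observation in Step~4 that being the head of at most one and the tail of at most one edge per colour bounds the degree by $2$. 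This is also consonant with how the paper itself uses the idea later: the proof of \Cref{thmDetUpper} orients the edges of the residual graph so that every vertex has at most $k-1$ outgoing edges, which is precisely your Eulerian-orientation step.
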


To achieve an upper bound this can be combined with:

\begin{theorem}[\cite{Chung90}] \label[theorem]{ChungSplit}
	Let $\calF$ and $\calQ$ be two families of graphs and let $G$ be an
	induced universal graph for $\calF$.
	Suppose that every graph in the family $\mathcal{Q}$ can be edge-partitioned
	into $\ell$ parts, each of which forms a graph in $\mathcal{F}$. Then $g_v(\mathcal{Q}) \leq |V[G]|^\ell$.
\end{theorem}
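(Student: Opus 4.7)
The plan is to use a standard tensor/product construction on vertex sets. Let $G$ be an induced universal graph for $\calF$ with vertex set $V(G)$. I would define a graph $G^{\otimes \ell}$ with vertex set $V(G)^\ell$, declaring two distinct tuples $(u_1,\ldots,u_\ell)$ and $(v_1,\ldots,v_\ell)$ adjacent in $G^{\otimes \ell}$ iff there exists some coordinate $i \in \{1,\ldots,\ell\}$ with $u_i v_i \in E(G)$. Clearly $|V(G^{\otimes\ell})| = |V(G)|^\ell$, so it suffices to show that $G^{\otimes\ell}$ is an induced universal graph for $\calQ$.

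Given an arbitrary $H \in \calQ$, by hypothesis there is an edge-partition $E(H) = E(H_1) \sqcup \cdots \sqcup E(H_\ell)$ where each $H_i$ is (viewed on vertex set $V(H)$) a graph belonging to $\calF$. Since $G$ is induced universal for $\calF$, for every $i$ there is an injection $\phi_i : V(H) \to V(G)$ such that for any two distinct $u,v \in V(H)$, $\phi_i(u)\phi_i(v) \in E(G)$ iff $uv \in E(H_i)$. Define $\Phi : V(H) \to V(G)^\ell$ by $\Phi(v) = (\phi_1(v), \ldots, \phi_\ell(v))$; this is injective since each $\phi_i$ is.

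It remains to verify that $\Phi$ realises $H$ as an induced subgraph of $G^{\otimes\ell}$. For distinct $u,v \in V(H)$, by the definition of $G^{\otimes\ell}$ and of $\Phi$,
\[
\Phi(u)\Phi(v) \in E(G^{\otimes\ell}) \iff \exists i : \phi_i(u)\phi_i(v) \in E(G) \iff \exists i : uv \in E(H_i) \iff uv \in E(H),
\]
where the last equivalence uses that the $E(H_i)$ partition $E(H)$. Hence $\Phi(V(H))$ induces a copy of $H$ in $G^{\otimes \ell}$, giving $g_v(\calQ) \le |V(G)|^\ell$.

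The proof has no real obstacle; the only subtle point is choosing the right adjacency rule on $V(G)^\ell$ (an ``OR'' over coordinates rather than an ``AND''), which is dictated by the fact that we are taking the \emph{union} of the edge sets $E(H_i)$. One must also note that embedding $H_i$ \emph{as an induced subgraph} of $G$ is essential: non-edges of $H$ give non-edges in every $H_i$, and hence non-edges of $G$ in every coordinate, which is exactly what is needed to rule out spurious edges of $G^{\otimes\ell}$ between $\Phi(u)$ and $\Phi(v)$.
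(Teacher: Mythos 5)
Your proof is correct, and it is the standard argument behind this result: the paper itself states \Cref{ChungSplit} as a quoted theorem of Chung without reproving it, and the ``OR''-product construction on $V[G]^\ell$ together with coordinatewise induced embeddings of the spanning subgraphs $H_1,\dots,H_\ell$ is exactly how the cited bound is obtained. Your closing remark correctly identifies the one point where induced-ness (rather than mere subgraph containment) is essential, namely ruling out spurious adjacencies coming from non-edges of $H$.
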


Butler~\cite{Butler_induced-universalgraphs} concludes $g_v(\mathcal{G}_D)\leq (6.5n)^k $. Similarly Esperet \etal~\cite{Esperet2008} achieve $g_v(\mathcal{G}_D)\leq (2.5n+O(1))^k$, and we achieve $g_v(\mathcal{G}_D)\leq (2n-1)^k=O(2^k n^k)$.

For constant maximum degree $D$, Butler~\cite{Butler_induced-universalgraphs} also shows $g_v(\mathcal{G}_D)=\Omega(n^{D/2})$. 

When $D$ is even and constant, the bounds are hence very tight:
$g_v(\mathcal{G}_D)=\Theta(n^{D/2})$. However, for non-constant $D$ we can, using another approach but still building on top of our maximum degree $2$ solution, beat Butler's lower bound for constant degree: For any value of $D$, we prove the upper bound $g_v(\mathcal{G}_D)=O\!\left(\frac{k2^k}{k!}n^k \right)$.
We also give a lower bound for any value of $D$: $\Omega\!\left((\frac{n}{2eD})^{D/2}\right)$.

{\bf {\large Constant odd degree.}}
A \emph{universal} graph for a family of graphs ${\cal F}$ is a graph that contains each graph from ${\cal F}$ as a subgraph (not necessarily vertex induced). The challenge is to construct universal graphs with as few edges as possible.

A graph has \emph{arboricity} $k$ if the edges of the graph can be partitioned into at most $k$ forests. Graphs with maximum degree $D$ have arboricity bounded by $\floor{\frac{D}{2}}+1$~\cite{chartrand68,Lovasz66}.

When $D$ is odd and constant, some improvement have been achieved \cite{AlonCapalbo2008,Esperet2008} on the above bounds on $g_v(\mathcal{G}_D)$
by arguments involving
universal graphs and graphs with bounded arboricity. Let $\mathcal{A}_k$ denote a family of graphs with arboricity at most $k$.
\begin{theorem}[\cite{Chung90}] \label[theorem]{Arboricity}
 Let $G$ be a universal graph for $\mathcal{A}_k$ and $d_i$ the degree of vertex $i$ in $G$. Then $g_v(\mathcal{A}_k) \leq \sum_{i}(d_i+1)^k$.
\end{theorem}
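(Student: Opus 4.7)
My plan is to construct $U$ by taking, for each vertex $v_i$ of $G$, a family of labeled copies that record for each of the $k$ forests which neighbor of $v_i$ (if any) serves as the parent. Concretely, I would order the neighbors of each $v_i$ as $1,\ldots,d_i$ and set
\[
V(U)=\bigl\{(v_i,p_1,\ldots,p_k):v_i\in V(G),\ p_\ell\in\{0,1,\ldots,d_i\}\bigr\},
\]
so $\abs{V(U)}=\sum_i(d_i+1)^k$, matching the claimed bound. The intended meaning of a label is $p_\ell=0$ for ``root in forest $\ell$'' and $p_\ell=j\ge 1$ for ``forest-$\ell$ parent is the $j$-th neighbor of $v_i$ in $G$''. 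I would declare $(v_i,p_\bullet)$ and $(v_j,q_\bullet)$ adjacent in $U$ iff $v_iv_j\in E(G)$ and, for some coordinate $\ell$, either $p_\ell$ equals the position of $v_j$ in $v_i$'s neighbor list or $q_\ell$ equals the position of $v_i$ in $v_j$'s neighbor list.

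To show that every $H\in\mathcal{A}_k$ embeds into $U$ as an induced subgraph, I would first use universality of $G$ to fix an injective map $\phi\colon V(H)\hookrightarrow V(G)$ realizing $H$ as a (not necessarily induced) subgraph of $G$, partition $E(H)$ into forests $F_1,\ldots,F_k$ via the arboricity bound, and root each tree of each $F_\ell$ arbitrarily. Each $u\in V(H)$ is then labeled by $(\phi(u),p_1(u),\ldots,p_k(u))$, where $p_\ell(u)=0$ if $u$ is a root of $F_\ell$ and otherwise $p_\ell(u)$ is the position of $\phi(\mathrm{par}_\ell(u))$ in the neighbor list of $\phi(u)$ in $G$. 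Injectivity of $\phi$ transfers to the labeling since the first coordinate alone distinguishes vertices.

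The final step, and the one requiring the most care, is to check that the subgraph of $U$ induced on the image is exactly $H$. The forward direction is routine: an edge $uw\in F_\ell\subseteq E(H)$ with, say, $w=\mathrm{par}_\ell(u)$ forces $p_\ell(u)$ to encode exactly the position of $\phi(w)$ among $\phi(u)$'s neighbors, and $\phi(u)\phi(w)\in E(G)$ by construction, so the labels are adjacent in $U$. The reverse direction is where adjacency must be tight enough to avoid spurious edges: if the labels of $u$ and $w$ are adjacent in $U$ via coordinate $\ell$, then one of the two really has the other as its forest-$\ell$ parent in $H$, which forces $uw\in E(H)$. The reason to record the \emph{positional} index in $v_i$'s ordered neighbor list rather than a boolean ``has a parent'' flag is precisely to make this implication go through; this is the one subtle design choice in the construction, but once it is in place the verification is immediate.
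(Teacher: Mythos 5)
Your construction is correct: the vertex set $\bigl\{(v_i,p_1,\ldots,p_k)\bigr\}$ has exactly $\sum_i(d_i+1)^k$ elements, and the positional parent-pointers together with injectivity of the subgraph embedding $\phi$ give both directions of the induced-subgraph verification without spurious edges. The paper itself states this theorem only as a citation to \cite{Chung90} and gives no proof, and your argument is essentially the standard one from that source (encode, for each of the $k$ forests in an arboricity decomposition, the index of the parent among the $G$-neighbors of the host vertex), so there is nothing to add.
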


Alon and Capalbo~\cite{alon2007sparse} describes a universal graph with $n$ vertices of maximum degree $c(D)n^{1-2/D}\log^{4/D}n$ for the family $\mathcal{G}_D$,
where $D\geq 3$ and $c(D)$ is a constant. Using this bound in Theorem~\ref{Arboricity},
Esperet \etal~\cite{Esperet2008} note that for odd $D$ (and hence arboricity $k=\ceil{\frac{D}{2}}$), we get
$g_v(\mathcal{G}_D)\leq c_1(D)n^{k-\frac{1}{D}}\log^{2+\frac{2}{D}}n$, for a constant $c_1(D)$.\footnote{In~\cite{Esperet2008} a typo states that the maximum degree for the universal graph in~\cite{alon2007sparse} is $c(D)n^{2-2/D}\log^{4/D}n$. The theorem in~\cite{alon2007sparse} only states the total number of edges being $c(D)n^{2-2/D}\log^{4/D}n$, however the maximum degree is $c(D)n^{1-2/D}\log^{4/D}n$~\cite{privatealon}.}
Using the slightly better universal graphs from~\cite{AlonCapalbo2008} the maximum degree is reduced to $c(D)n^{1-2/D}$~\cite{privatealon}, giving $g_v(\mathcal{G}_D)\leq c_2(D)n^{k-\frac{1}{D}}$, for a constant $c_2(D)$. Note that using this technique for even values of $D$ would give $g_v(\mathcal{G}_D)\leq c_3(D)n^{\frac{D}{2}+1-\frac{2}{D}}$, for a constant $c_3(D)$, which is asymptotically worse even for constant values of $D$, compared to any of the new upper bounds presented in this paper.
 
In~\cite{alon2010universality} it is stated that the methods in~\cite{AlonCapalbo2008} can be used to achieve $g_v(\mathcal{G}_D)=O(n^{D/2})$ for constant odd values of $D>1$, however according to~\cite{privatealon} this still has to be checked more carefully, and the hidden constant in the $O$-notation is not small. 

\subsection{Adjacency labeling schemes and induced universal graphs}
An \emph{adjacency labeling scheme} for a given family $\calF$ of graphs assigns 
\emph{labels} to the vertices of each graph in $\calF$ such that a \emph{decoder} given the
labels of two vertices from a graph, and no other information, can determine whether or not the vertices are adjacent in the graph. The labels
are assumed to be bit strings, and the goal is to minimize the maximum label
size. A $b$-bit labeling scheme uses at most $b$ bits per label. Information theoretical studies of adjacency
labeling schemes go back to the 1960s~\cite{Breuer66,BF67}, and efficient labeling schemes were introduced in~\cite{KNR92,muller}.
For graphs with bounded degree $D$, it was shown in~\cite{BF67} that labels of size $2nD$ can be constructed such that two vertices are adjacent whenever the Hamming distance~\cite{hamming} of their labels is at most $4D-4$.
A labelling scheme for $\calF$ is said to have \emph{unique labels} if no two vertices in the same graph from
$\calF$ are given the same label.

\begin{theorem}[\cite{KNR92}] \label[theorem]{KNRreduction}
	A family $\calF$ of graphs has a $b$-bit adjacency labeling scheme with
	unique labels iff $g_v(\calF) \leq 2^b$.
\end{theorem}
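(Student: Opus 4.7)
The plan is to prove the equivalence by two explicit constructions, one for each direction, using the induced universal graph as the "shared state" of the decoder.

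For the forward direction ($\Rightarrow$), assume $\calF$ has a $b$-bit scheme with unique labels, and let $\calD\colon \{0,1\}^b \times \{0,1\}^b \to \{0,1\}$ be its decoder. I would define $U$ to have vertex set $V(U) = \{0,1\}^b$ (so $|V(U)| = 2^b$) and edge set $E(U) = \set{xy \cond x \neq y \text{ and } \calD(x,y) = 1}$. Given any $G \in \calF$ with labeling $\ell \colon V(G) \to \{0,1\}^b$, uniqueness makes $\ell$ injective, and correctness of the decoder means that for all distinct $u,v \in V(G)$, $uv \in E(G)$ iff $\calD(\ell(u),\ell(v)) = 1$ iff $\ell(u)\ell(v) \in E(U)$. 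Hence $\ell$ exhibits $G$ as a vertex-induced subgraph of $U$, so $g_v(\calF) \leq 2^b$.

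For the reverse direction ($\Leftarrow$), assume $g_v(\calF) \leq 2^b$ and fix an induced universal graph $U$ for $\calF$ with at most $2^b$ vertices. Fix an injection $\iota \colon V(U) \to \{0,1\}^b$, and for every $G \in \calF$ fix an induced-subgraph embedding $\phi_G \colon V(G) \to V(U)$ (which exists by definition of $U$). Assign to each $v \in V(G)$ the label $\ell(v) = \iota(\phi_G(v))$; since $\iota \circ \phi_G$ is injective, labels are unique. The decoder, which is allowed to depend only on the two input labels and on the fixed graph $U$ (hard-coded into the scheme), recovers the corresponding vertices of $U$ via $\iota^{-1}$ on its image and outputs $1$ iff they are adjacent in $U$; on pairs outside the image of $\iota$, the decoder may output anything. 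Correctness follows because $\phi_G$ is an induced-subgraph embedding.

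The only subtle points are that (i) $U$ is part of the scheme, not of the input, so it is fine for the decoder to have $U$ hard-coded; (ii) the decoder must be defined on all of $\{0,1\}^b \times \{0,1\}^b$, but this is harmless because correctness of a labeling scheme is only required on pairs of labels that actually arise in some graph of $\calF$; and (iii) uniqueness of labels corresponds exactly to the injectivity of the embedding into $U$, which is automatic for an induced-subgraph embedding. There is no real obstacle here beyond keeping the two sides of the bijection between schemes and universal graphs straight.
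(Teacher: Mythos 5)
Your proof is correct and is exactly the standard correspondence underlying this result, which the paper simply cites from [KNR92] without reproving: labels as vertex names of the universal graph in one direction, and the decoder's adjacency relation as the universal graph in the other. The only point worth noting is that your edge set $E(U)=\set{xy \cond x\neq y,\ \calD(x,y)=1}$ implicitly uses that the decoder answers consistently under swapping the two labels, which holds on all label pairs arising from a graph in $\calF$ since adjacency is symmetric, so the argument goes through.
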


From a labeling perspective the above new upper and lower bounds are at most an additive $O(D+\log n)$ term from optimality.

\subsection{Better bounds for larger $D$, $D=\Omega(\log^3 n)$}
We have another approach which for large $D$, $D=\Omega(\log^3 n)$, gives better bounds than the ones presented above for constant $D$. The previous best upper bound for such large $D$ was $\binom{n}{\ceil{D/2}} n^{O(1)}$ due to Adjiashvili and Rotbart~\cite{icalpnoy14}. For any $D$ we prove the lower and upper bounds
\[\binom{\floor{n/2}}{\floor{D/2}}
\cdot 2^{-O \!\left (\sqrt{D\log n} \cdot \log(n/D) \right )} \text{ and }  \binom{\floor{n/2}}{\floor{D/2}} \cdot 2^{O \!\left (\sqrt{D\log n} \cdot \log(n/D) \right )},\]
where the upper bound is a randomized construction. From a labeling perspective our bounds are the first to give labels for any value of $D$ that are at most $o(n)$ bits longer than the shortest possible labels.
An asymptotic enumeration of the number of $D$-regular graphs due to McKay and Wormald \cite{mckay1990asymptotic,mckay1991asymptotic}
combined with Stirling's formula gives a stronger lower bound of 
$\binom{\floor{n/2}}{\floor{D/2}} \cdot n^{-O(1)}$ whenever $D = o\left(\sqrt{n}\right)$ or 
$D > \frac{cn}{\ln n}$ for a constant $c > \frac{2}{3}$.

\subsection{Acyclic graphs and cycle graphs}
On our way to understand the family
$\mathcal{G}_2$ better, we first examine two other basic families of graphs. 

For the family $\mathcal{AC}$ of acyclic graphs on $n$
vertices with maximum degree $2$
we show an upper bound matching exactly the lower bound in~\cite{Esperet2008}, which makes us conclude $g_v(\mathcal{AC})=\floor{3/2n}$.
This lower bound is not explicitly stated in~\cite{Esperet2008}, but follows directly from the construction of the lower bound for $g_v(\mathcal{G}_2)$.

We also study the family $\mathcal{C}_n$ of graphs consisting of one cycle of length $\leq n$ (and no other edges or vertices). For this family we show $n+ \Omega(\log \log n)\leq g_v(\mathcal{C}_n) \leq n+\log n + O(1)$. 

\subsection{Oblivious decoding}
From a labeling perspective one can assume all labels have the same length~\cite{AlstrupKTZ14}. Hence, if the decoder does not know what $n$ is,
it will always be able to compute $n$ approximately.
However, we show that the label size in an optimum labeling scheme
can be smaller if the decoder knows $n$ precisely.
To be more specific, let $\mathcal{F}_n$ be a family of graphs for each $n=1,2,\ldots$.
We show that there is a labeling scheme of $\mathcal F_n$ using $f(n)$ labels
that enables a decoder not aware of $n$ to answer adjacency queries
iff there is a family of graphs $G_1,G_2,\ldots$ such that $G_n$ is an induced universal
graph for $\mathcal F_n$, $|G_n|=f(n)$, and $G_n$ is an induced subgraph of
$G_{n+1}$ for every $n$.
Next we show that with this extra requirement to the induced universal graph we have
$n + \Omega\!\left(\sqrt[3]{n}\right) \leq g_v(\mathcal{C}_n) \leq n+ O(\sqrt{n})$.
The lower bound is true for infinitely many $n$, but for specific $n$ it might not hold.

For the other problems studied in this paper, the decoder does not need to know $n$, but the lower bounds hold even if it does. To the best of our knowledge this is the first time this relationship between labeling schemes and induced universal graphs
has been described and examples have been given where the complexities differ.

\subsection{Related results}
For the family of general, undirected graphs on $n$ vertices,
Alstrup \etal~\cite{AlstrupKTZ14} give an induced universal graph with
$O(2^{n/2})$ vertices, which matches a
lower bound by Moon \cite{moon1965minimal}. More recently Alon~\cite{Alonconstant2016} shows the existents of a construction having a better constant factor than the one in~\cite{AlstrupKTZ14}.

It follows from~\cite{adjacencytrees2015,alstruprauhe} that $g_v(\calA_k)=\theta(n^k)$ for the family $\calA_k$ of graphs with constant arboricity $k$ and $n$ vertices. Using  universal graphs constructed by Babai \etal~\cite{BCEGS82}, Bhatt \etal~\cite{BCLR89}, and Chung
\etal~\cite{CG78,CG79,CG83,CGP76}, Chung~\cite{Chung90} obtains the
best currently known bounds for e.g.~induced universal graphs for planar and
outerplanar bounded degree graphs.
Labeling schemes are being widely used and well-studied in the theory community:
Chung~\cite{Chung90} gives labels of size $\log n+O(\log \log n)$ for adjacency
labeling in trees, which was improved to $\log n + O(\log^* n)$~\cite{alstruprauhe} and in~\cite{bonichon2006short,Chung90,Fraigniaud2009randomized,fraigniaudkorman2,KMS02} to $\log n + \Theta(1)$ for various special cases of trees. Finally it was improved to $\log n + \Theta(1)$ for general trees~\cite{adjacencytrees2015}. 

Using labeling schemes, it is possible to avoid costly access to large
global tables and instead only perform local and distributed computations. Such properties are
used in applications such as XML search engines~\cite{AKM01}, network routing
and distributed algorithms~\cite{Cowen01,EilamGP03,Gavoille01,ThZw05}, dynamic
and parallel settings ~\cite{CohenKaplan2010,dynamicKormanP07}, and various
other applications~\cite{Korman2010,peleg2,SK85}.

A survey on induced universal graphs and adjacency labeling can be found in~\cite{AlstrupKTZ14}. See~\cite{gavoillepeleg} for a survey on
labeling schemes for various queries.

\subsection{Preliminaries}\label{sec:prelim}
Let $[n]=\{0, \ldots, n-1\}$, $\N_0 = \set{0,1,2,\ldots}$, $\N=\N_1=\set{1,2,\ldots}$,
and let $\log n$ refer to $\log_2 n$. For a graph $G$, let $V[G]$ be the set of vertices
and $E[G]$ be the set of edges of $G$,
and let $\abs{G}=\abs{V[G]}$ be the number of vertices. We denote the maximum degree of graph $G$ as
$\Delta(G)$. For $i \in \N$, let $P_{i}$ denote a path with $i$ vertices,
and for $i>2$, let $C_i$ denote a simple cycle with $i$ vertices.

Let $G$ and $U$ be two graphs and let $\lambda\colon V[G] \to V[U]$ be an injective function.
If $\lambda$ has the property that
$uv\in E[G]$ if and only if $\lambda(u)\lambda(v)\in E[U]$, we say that $\lambda$
is an \emph{embedding function} of $G$ into $U$. $G$ is an \emph{induced subgraph} of
$U$ if there exists an embedding function of $G$ into $U$, and in that case, we say that
$G$ is \emph{embedded} in $U$ and that $U$ \emph{embeds} $G$.
Let $\mathcal F$ be a family of graphs. $U$ is an \emph{induced universal graph} of
$\mathcal F$ if $G$ is an induced subgraph of $U$ for each $G\in\mathcal F$.

\newcommand{\F}{\mathcal{F}}
Let $\F$ be a family of graphs and for each positive integer $n$, let $\F_n$ be the 
graphs in $\F$ on $n$ vertices.
For $f : \N_0 \to \N_0$ we say that $\F$ can be labeled using $f(n)$ labels with a
\emph{size aware decoder} if there exists a 
decoder $d_n : \N_0 \times \N_0 \to \{0,1\}$ for $n\in\N$,
that satisfies the following: For
every graph $G \in \F_n$ there is an encoding function
$e : G \to [f(n)]$ such that $u,v \in G$ are adjacent iff
$d_n(e(u),e(v)) = 1$.
We say that $\F$ can be labeled using $f(n)$ labels with a
\emph{size oblivious decoder} if there exists $d : \N_0 \times \N_0 \to \{0,1\}$ 
with the following property:
For every $n\in\N$ and every $G\in\F_n$, there is an encoding function
$e : G \to [f(n)]$ such that $u,v \in G$ are adjacent iff
$d(e(u),e(v)) = 1$.
The following theorem explains the relation between 
size aware and size oblivious decoders and induced universal graphs.
\begin{proposition}\label[proposition]{pro:aware}
Given a family of graphs $\F$ and a function $f : \N \to \N$ the following holds.

1) $\F$ can be labeled using $f(n)$ labels with a size aware decoder iff there
exists a family of graphs $G_1, G_2, \ldots$ such that $G_n$ is an induced universal
graph for $\F_n$ and $\abs{G_n} = f(n)$ for every $n$.

2) $\F$ can be labeled using $f(n)$ labels with a size oblivious decoder iff there
exists a family of graphs $G_1, G_2, \ldots$ such that $G_n$ is an induced universal
graph for $\F_n$, $\abs{G_n} = f(n)$, and $G_n$ is an induced subgraph of
$G_{n+1}$ for every $n$.
\end{proposition}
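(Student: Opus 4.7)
Both parts of the proposition are parametric versions of Theorem~\ref{KNRreduction}, and in each direction the argument mirrors the standard equivalence between induced universal graphs and adjacency labelling schemes.

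For Part 1 in the direction $(\Rightarrow)$, given the size-aware decoder $d_n$, define $G_n$ on vertex set $[f(n)]$ by setting $ij\in E[G_n]$ iff $d_n(i,j)=1$; one may symmetrise arbitrarily, since on any pair of labels produced by a valid encoding the decoder's correctness forces $d_n(i,j)=d_n(j,i)$. For any $G\in\F_n$ the encoding $e : V[G] \to [f(n)]$ is injective and satisfies $uv\in E[G] \iff d_n(e(u),e(v))=1 \iff e(u)e(v)\in E[G_n]$, so $e$ embeds $G$ as an induced subgraph of $G_n$. In the direction $(\Leftarrow)$, identify $V[G_n]$ with $[f(n)]$ and let $d_n(i,j)=1$ iff $ij\in E[G_n]$; any embedding of $G\in\F_n$ into $G_n$ then doubles as the required encoding.

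For Part 2 one runs the same construction, this time with the single decoder $d$ in place of $d_n$. In direction $(\Rightarrow)$, the graphs $G_n$ on $[f(n)]$ share the common edge relation inherited from $d$, so the set inclusion $[f(n)]\subseteq [f(n+1)]$ witnesses that $G_n$ is an induced subgraph of $G_{n+1}$; here $f$ may be taken non-decreasing, because otherwise no nested family of the prescribed sizes can exist on the graph side. In direction $(\Leftarrow)$, inductively identify $V[G_n]$ with $[f(n)]$ so that each given induced embedding $G_n\hookrightarrow G_{n+1}$ becomes the set inclusion $[f(n)]\subseteq [f(n+1)]$; this identification is possible because $\abs{G_n}=f(n)\leq f(n+1)=\abs{G_{n+1}}$ is forced by the nesting hypothesis. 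Now define $d(i,j)=1$ iff $ij\in E[G_m]$ for some (equivalently, every) $m$ with $i,j<f(m)$; the induced-subgraph hypothesis guarantees this value is independent of the chosen $m$. For any $G\in\F_n$, composing an embedding $G\hookrightarrow G_n$ with the identification $V[G_n]=[f(n)]$ yields the required encoding, and $d$ is set arbitrarily on pairs not covered by any $[f(m)]$.

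The only step beyond straightforward bookkeeping is the inductive choice of compatible identifications $V[G_n]\cong [f(n)]$ in the $(\Leftarrow)$ direction of Part 2: given the identification of $V[G_{n-1}]$ with $[f(n-1)]$, one extends it via the induced embedding $G_{n-1}\hookrightarrow G_n$ to the image of $V[G_{n-1}]$ in $V[G_n]$, and then assigns the remaining labels $[f(n)]\setminus [f(n-1)]$ to the remaining vertices of $G_n$ arbitrarily. This collapses the whole family $\set{G_n}_{n\geq 1}$ into a single graph on $\bigcup_{n\geq 1}[f(n)]\subseteq\N_0$, which is exactly the common domain that a size-oblivious decoder can query uniformly in $n$.
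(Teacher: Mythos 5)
Your proof is correct and follows essentially the same route as the paper's: in each direction you build $G_n$ on vertex set $[f(n)]$ from the decoder (or read the decoder off $G_n$ after identifying $V[G_n]$ with $[f(n)]$), and in Part 2 you use the single decoder $d$ to make the inclusions $[f(n)]\subseteq[f(n+1)]$ induced embeddings, exactly as in the paper. The only difference is that you spell out the ``wlog'' identification of the nested family with initial segments of $\N_0$ (and the symmetry/monotonicity caveats) slightly more explicitly than the paper does, which is fine.
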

\begin{proof}
1) First assume that $\F$ can be labeled using $f(n)$ labels with a size aware
decoder. Let $d_n : \N_0 \times \N_0 \to \{0,1\}$ be the corresponding family
of decoders. Let $G_n$
be defined on the vertex set $[f(n)]$ such that $i \neq j$ are adjacent
iff $d_n(i,j) = 1$. Obviously, $G_n$ contains exactly $f(n)$ nodes.
Furthermore $G_n$ is an induced universal graph of $\F_n$.
This shows the first direction.

For the other direction let such a family $G_1, G_2, \ldots$ be given. Wlog assume
that the vertex set of $G_n$ is $[f(n)]$. We let
$d_n : \N_0 \times \N_0 \to \{0,1\}$ be
defined by $d_n(u,v) = 1$ iff $u,v < f(n)$ and $u,v$ are adjacent in $G_n$. For
$G \in \F_n$ let $\lambda : G \to G_n$ be an embedding function of $G$ in $G_n$.
For a node $u \in V[G]$ we assign it the label $\lambda(u)$. Then for nodes
$u,v \in V[G]$ we have $d_n(\lambda(u),\lambda(v)) = 1$ iff $\lambda(u)$ and
$\lambda(v)$ are adjacent. This happens iff $u$ and $v$ are adjacent as desired.

2) The first direction is as in the first part of the theorem. We just
need to note that $\lambda : [f(n)] \to [f(n+1)]$ is an embedding function of
$G_n$ in $G_{n+1}$.

Now consider the other direction and let such a family $G_1, G_2, \ldots$ be given
and wlog assume that the vertex set of $G_n$ is $[f(n)]$. Furthermore wlog assume 
that $\lambda_n : G_n \to G_{n+1}$ given by $k \to k$ is an embedding function of
$G_n$ in $G_{n+1}$. Let $d : \N_0 \times \N_0 \to \{0,1\}$ be defined in the
following way. For $u,v \in \N_0$ let $d(u,v) = 1$ iff there exists $n$ such that
$f(n) > u,v$ and $u$ and $v$ are adjacent in $G_n$. We note that if one such $n$ 
exists it must hold for all $n$ with $f(n) > u,v$ that $u$ and $v$ are adjacent in
$G_n$. Hence we can assign labels in the same way as the first part of the theorem.
\end{proof}


\section{General $D$}

\newcommand{\G}{\mathcal{G}}

In this section we present two upper bounds on $g_v(\G_D)$, the number of nodes
in the smallest induced universal graph for graphs on $n$ nodes with bounded
degree $D$. In \Cref{thmDetUpper} we give a deterministic construction of an
induced universal graph for $\G_D$ that relies on the induced universal graph constructed
in \Cref{sec:max2upper}. In \Cref{thmRandUpper} we give a randomized construction
of an induced universal graph for $\G_D$ that with probability $\frac{1}{2}$ has
a small number of nodes. Combining the two results shows the existence of an
adjacency labeling scheme for $\G_D$ of size
$\log \binom{\floor{n/2}|}{\floor{D/2}} +
O\!\left(\min\set{D+\log n,\sqrt{D\log n}\log(n/d)}\right)$.

In \Cref{corLowerBoundLabelSizeBoundedDeg} and \Cref{corRandLowerBound}
we give lower bounds on $g_v(\G_D)$. These
lower bounds imply that any adjacency labeling scheme for $\G_D$ must have labels of
size at least
$\log \binom{\floor{n/2}|}{\floor{D/2}} -
O\!\left(\min\set{D,\sqrt{D\log n}\log(n/d)}\right)$,
which means that the upper bounds
are tight up to an additive term of size
$O\!\left(\min\set{D+\log n,\sqrt{D\log n}\log(n/d)}\right)$,
which is at most $O(\sqrt{n \log n})$. Previous labeling
schemes use labels that are larger by an additive term of size
$\Omega(D)$, which is $\Omega(n)$
when $D = \Omega(n)$, so this is the first adjacency labeling scheme for $\G_D$
where the dominating term is optimal.

\subsection{Upper bounds on $g_v(\G_D)$}
We show the following deterministic bound.
\begin{theorem}
	\label{thmDetUpper}
	For the family $\G_D$ of graphs with bounded degree $D$ on $n$ nodes
	\[
		g_v(\G_D)
		\le 
		2^{k+1} \cdot \frac{n^k}{(k-1)!}
		, \ \ \text{where} \ k = \ceil{D/2}
	\]
\end{theorem}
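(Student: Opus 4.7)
The plan is to combine three ingredients: the decomposition of \Cref{Butlersplit}, the $2n-1$ vertex induced universal graph $U_2$ for $\G_2$ built in \Cref{sec:max2upper}, and a symmetry argument that shaves a $(k-1)!$ factor off the naive product construction obtained from \Cref{ChungSplit}.

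Given $G\in\G_D$, I first apply \Cref{Butlersplit} to decompose $G$ into $k=\ceil{D/2}$ edge-disjoint subgraphs $G_1,\ldots,G_k$ of maximum degree $2$, and then embed each $G_j$ into $U_2$ via an embedding $\lambda_j$. Every vertex $v\in V[G]$ now carries the label tuple $(\lambda_1(v),\ldots,\lambda_k(v))\in V[U_2]^k$. A direct invocation of \Cref{ChungSplit} yields an induced universal graph of size $(2n-1)^k=\Theta(2^kn^k)$, which is a factor of roughly $(k-1)!$ worse than the claimed $2^{k+1}n^k/(k-1)!$.

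To save this extra $(k-1)!$ I would exploit that the order of the colour classes $G_1,\ldots,G_k$ is arbitrary: any permutation of the label tuple of $v$ is an equally valid encoding. Restricting to one canonical representative per $S_k$-orbit of tuples --- i.e.\ the multiset $\{\!\{\lambda_1(v),\ldots,\lambda_k(v)\}\!\}$ --- drops the count to $\binom{2n+k-2}{k}=O((2n)^k/k!)$, exactly the improvement I am after; an extra $O(k)$ multiplicative blow-up for auxiliary decoration then gives the bound in \Cref{thmDetUpper}.

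The main obstacle is that adjacency in the universal graph must remain well-defined after passing to multisets: a $G$-edge $uv$ is witnessed by $\lambda_j(u)\lambda_j(v)\in E[U_2]$ in a single unknown colour $j$, but the symmetrization forgets which coordinate of one multiset is paired with which coordinate of the other. My plan to resolve this is to pad each multiset with a small amount of extra data --- for instance, a distinguished slot index in $[k]$ together with a few bits of adjacency type --- and to declare two decorated multisets adjacent iff the entries in their distinguished slots are $U_2$-adjacent. The delicate step is (a) to choose, for each $G\in\G_D$, the distinguished slots of all vertices consistently so that every $G$-edge is witnessed, and (b) to verify that non-adjacent vertices never produce a decorated pair triggering a spurious edge in the universal graph; provided the decoration carries only the promised $O(k)$ overhead, the final count is $O(k\cdot(2n)^k/k!)=O(2^{k+1}n^k/(k-1)!)$, matching the bound of \Cref{thmDetUpper}.
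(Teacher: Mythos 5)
Your plan breaks at exactly the step you flag as delicate, and the break is not repairable within the budget you allow yourself. After quotienting the $k$-tuple $(\lambda_1(v),\ldots,\lambda_k(v))$ down to a multiset, adjacency must still be decidable, and your fix --- one distinguished slot per vertex plus a few bits, with adjacency read off only from the distinguished slots --- cannot satisfy your requirement (a): a vertex $u$ of degree up to $D$ generally has incident edges in several of the colour classes $G_1,\ldots,G_k$, but it gets a single distinguished slot, so at most the edges of one colour class at $u$ can ever be witnessed (already an edge $uv$ in colour $1$ and an edge $uw$ in colour $2$ force a contradiction). Requirement (b) fails for a related reason: once the coordinates are forgotten, the decoder cannot tell whether an entry of $u$'s multiset and an entry of $v$'s multiset come from the \emph{same} embedding, and a coincidence $\lambda_i(u)\lambda_j(v)\in E[U_n]$ with $i\neq j$ says nothing about $G$; ruling out such spurious edges would require encoding the colour/pairing information, which is far more than an $O(k)$ multiplicative (i.e.\ $O(\log k)$-bit) decoration. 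So the $(k-1)!$ saving cannot be obtained by symmetrising the product construction of \Cref{ChungSplit} in this way.

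The paper gets the factor $1/(k-1)!$ from unordered \emph{neighbour sets}, not from quotienting the $k$-fold product by $S_k$, and this sidesteps the pairing problem entirely. Using \Cref{Butlersplit}, split $E[H]$ into $H_0$ with $\Delta(H_0)\le 2$ and $H_1$ with $\Delta(H_1)\le 2(k-1)$. Only $H_0$ is embedded, via $f$, into the $(2n-1)$-vertex universal graph $U_n$ of \Cref{sec:max2upper}; the edges of $H_1$ are oriented so that every vertex has out-degree at most $k-1$, and $u$ receives the label $(f(u),f(S_u))$, where $S_u$ is its set of out-neighbours in $H_1$. Two labels $(x,A)$ and $(y,B)$ are adjacent iff $x\in B$, or $y\in A$, or $xy\in E[U_n]$; since $f$ is an injective embedding function, this reproduces adjacency in $H$ exactly, with no false positives. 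The vertex count is $(2n-1)\cdot\abs{[2n-1]^{\le k-1}}\le 2(2n-1)^k/(k-1)!\le 2^{k+1}n^k/(k-1)!$, the $(k-1)!$ coming from the fact that $f(S_u)$ is an unordered set of size at most $k-1$, and the colour-matching issue never arises because the membership test $f(v)\in f(S_u)$ is colour-free.
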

\begin{proof}
For a set $S$ we let $S^{\le k}$ denote the set of all subsets of $S$ of size
$\le k$. We note that $\abs{S^{\le k}} \le 2\frac{\abs{S}^k}{k!}$ whenever $S$
is finite.

We will show that $g_v(\G_D) \le 2\frac{(2n-1)^k}{(k-1)!}$. Fix $n,D$, let
$k = \ceil{D/2}$ and let $U_n$ be the induced universal graph
for $\G_2$ defined in \Cref{sec:max2upper}. We note that $V[U_n] = [2n-1]$.
We define the graph $G$ to have vertex set $[2n-1] \times [2n-1]^{\le k-1}$ and
such that there is an edge between $(x,A)$ and $(y,B)$ iff $x \in B$, $y \in A$
or $x$ and $y$ are adjacent in $U_n$. Since $G$ has the desired number of nodes
we proceed to show that $G$ is an induced universal graph for $\G_D$.
Let $H$ be a graph in $\G_D$.
By \Cref{Butlersplit} we know that we can decompose the edges of $H$ into $H_0$
and $H_1$ such that $\Delta(H_0) \le 2, \Delta(H_1) \le 2(k-1)$. We can find an
embedding function $f : V[H] \to V[U_n]$ of $H_0$ in $U_n$ by
the universality of $U_n$.
By the same argument as in the first part of the proof we can orient the edges
of $H_1$ such that any node has at most $k-1$ outgoing edges in $H_1$. For
$u \in V[H]$ let $S_u$ be the set of nodes $v$ such that there exists an edge
between $u$ and $v$ in $H_1$ oriented from $u$ to $v$. We see that $u$ and $v$
are adjacent iff $f(u)$ and $f(v)$ are adjacent in $U_n$ or it holds that 
$u \in S_v$ or $v \in S_u$. Therefore $\lambda : V[H] \to V[G]$ defined by
$u \to (f(u), f(S_u))$ is an embedding function of $H$ in $G$. Hence $G$ must
be an induced universal graph for $\G_D$.
\end{proof}

The intuition behind the randomized bound below is the following. Consider placing all $n$ vertices on a circle in a randomly chosen order and rename the vertices with indices
$[n]$ following the order on the circle.
Now, a vertex $v \in [n]$ remembers its neighbours in the next half of the circle,
i.e., $v$ stores all the adjacant vertices among
$\{v+1, \ldots, v+\lceil n/2 \rceil\}$ (where indices are taken modulu $n$).
If two vertices $u,v$ are adjacant,
then clearly either $u$ stores the index of $v$ or conversely,
hence an adjacancy query can be answered.
A Chernoff bound implies that vertex $v$ with high probability stores at most
$D/2 + O(\sqrt{D \log n})$ indices. It follows that there exists an order of the points
on the circle where every vertex stores that many neighbours and the theorem follows.
\begin{theorem}\footnote{In the full version we improve this to $\binom{\floor{n/2}}{\floor{D/2}}
		\cdot 
		2^{O \!\left (\sqrt{D\log D} \cdot \log(n/D) \right )}$}
	\label{thmRandUpper}
	For the family $\G_D$ of graphs with bounded degree $D$ on $n \ge 2D$ nodes
	\[
		g_v(\G_D) \le 
		\binom{\floor{n/2}}{\floor{D/2}}
		\cdot 
		2^{O \!\left (\sqrt{D\log n} \cdot \log(n/D) \right )}
	\]
\end{theorem}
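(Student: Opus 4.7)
The plan is to formalize the cyclic-placement intuition stated right before the theorem. Given $H \in \G_D$, let $\pi : V[H] \to [n]$ be a uniformly random bijection and, for each $v \in V[H]$, define the set of ``forward-neighbour offsets''
\[
F_v = \set{(\pi(u)-\pi(v)) \bmod n \,:\, uv \in E[H],\ (\pi(u)-\pi(v)) \bmod n \in \set{1,\ldots,\floor{n/2}}},
\]
breaking ties at offset exactly $n/2$ (for even $n$) in favour of the endpoint with the smaller $\pi$-value, so that every edge is witnessed by exactly one of its endpoints' forward sets. Fix a parameter $k^\star = \floor{D/2} + C\sqrt{D\log n}$ for a constant $C$ to be chosen. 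Define the candidate induced universal graph $U$ on the vertex set
\[
V[U] = \set{(x,S) \,:\, x \in [n],\ S \subseteq \set{1,\ldots,\floor{n/2}},\ \abs{S} \le k^\star},
\]
with $(x,S)\sim_U(y,T)$ iff $(y-x)\bmod n \in S$ or $(x-y)\bmod n \in T$. Whenever $\abs{F_v} \le k^\star$ for every $v$, the map $v \mapsto (\pi(v),F_v)$ is by design an embedding of $H$ into $U$.

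The next step is to show that a suitable $\pi$ exists for every $H$. Fix $v$ and condition on $\pi(v)$: the $d_v \le D$ neighbours of $v$ then occupy a uniformly random $d_v$-subset of $[n]\setminus\set{\pi(v)}$, so $\abs{F_v}$ is hypergeometrically distributed with mean at most $D/2 + O(1)$. Invoking the Chernoff--Hoeffding inequality for sampling without replacement (Hoeffding, 1963) yields
\[
\Prp{\abs{F_v} > \tfrac{D}{2} + C\sqrt{D\log n}} \le n^{-\Omega(C^2)}.
\]
A union bound over the $n$ vertices, with $C$ a sufficiently large constant, shows that with positive probability $\abs{F_v} \le k^\star$ holds for every $v$ simultaneously. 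Hence a good permutation exists and $H$ embeds into $U$.

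To finish I need to count $\abs{V[U]}$. Since $n \ge 2D$ forces $k^\star \le \floor{n/4}$, the binomial sum is dominated by its top term,
\[
\abs{V[U]} \;=\; n \sum_{i=0}^{k^\star} \binom{\floor{n/2}}{i} \;\le\; n(k^\star+1)\binom{\floor{n/2}}{k^\star}.
\]
Using $\binom{m}{a+b}/\binom{m}{a} \le (m/a)^b$ with $a = \floor{D/2}$ and $b = k^\star - \floor{D/2} = O(\sqrt{D\log n})$ gives
\[
\frac{\binom{\floor{n/2}}{k^\star}}{\binom{\floor{n/2}}{\floor{D/2}}} \;\le\; \left(\frac{\floor{n/2}}{\floor{D/2}}\right)^{k^\star-\floor{D/2}} \;=\; 2^{O(\sqrt{D\log n}\,\log(n/D))},
\]
and one checks separately (splitting on $D \ge \log n$ versus $D < \log n$) that under $n \ge 2D$ the polynomial prefactor $n(k^\star+1)$ only contributes $O(\log n) = O(\sqrt{D\log n}\log(n/D))$ extra bits, so it is absorbed into the same exponential.

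The main obstacle I anticipate is the probabilistic step: the forward-neighbour indicators of a single vertex are not independent but arise from sampling $d_v$ positions without replacement from $[n]\setminus\set{\pi(v)}$, so the right tool is Hoeffding's bound for hypergeometric variables rather than the textbook i.i.d.\ Chernoff bound. A secondary point of care is the regime where $D$ is small, in which $\sqrt{D\log n}$ can exceed $D/2$; here one must verify that both the tail bound and the binomial-ratio estimate still yield the claimed shape, which is precisely what the hypothesis $n \ge 2D$ is used for.
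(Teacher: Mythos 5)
Your proposal is correct and takes essentially the same route as the paper's proof: a uniformly random circular placement, forward-neighbour sets of expected size at most $D/2$, a Chernoff--Hoeffding bound for sampling without replacement plus a union bound to extract one good permutation, and the estimate $\binom{m}{a+b}\le\binom{m}{a}(m/a)^b$ to compare $\binom{\floor{n/2}}{k^\star}$ with $\binom{\floor{n/2}}{\floor{D/2}}$; the differences (tie-breaking at offset $n/2$ instead of taking $n$ odd, and absorbing the polynomial prefactor directly instead of falling back on \Cref{thmDetUpper} for $D\le\log n$) are cosmetic. One small caveat: when $D=\Theta(n)$ the claim that $n\ge 2D$ forces $k^\star\le\floor{n/4}$ can fail by an additive $O(\sqrt{D\log n})$, and for $D=1$ the ratio bound divides by $\floor{D/2}=0$, but both regimes are repaired by a one-line adjustment and the extra factor is still absorbed into $2^{O(\sqrt{D\log n}\cdot\log(n/D))}$.
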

\begin{proof}
Fix $n,D$ and wlog assume that $n$ is odd.
For $D \le \log n$ the result follows from \Cref{thmDetUpper} so assume that
$D \ge \log n$. Let $G$ be a graph in $\G_D$, and wlog assume that $V[G] = [n]$.
Let $\pi : [n] \to [n]$ be a permutation of $[n]$ chosen uniformly at random.
For each $u \in V[G]$ let $S_u$ be the set of differences
$\pi(v) - \pi(u) \bmod n$ where $v$ is a neighbour of $u$ and 
$\pi(v) - \pi(u) \bmod n$ is at most $\floor{\frac{n}{2}}+1$. That is:
\[
	S_u = \set{(\pi(v) - \pi(u)) \bmod n \mid
		(u,v) \in E[G], \
		(\pi(v) - \pi(u)) \bmod n \in \set{1,2,\ldots,\floor{\frac{n}{2}}}}
\]
Given two nodes $u,v$ we can determine whether $u$ are adjacent from $\pi(u),\pi(v)$
and $S_u,S_v$ in the following way. If
$(\pi(u)-\pi(v)) \bmod n \le \floor{\frac{n}{2}}+1$ they are adjacent iff
$(\pi(u)-\pi(v)) \bmod n \in S_v$. Otherwise they are adjacent iff
$(\pi(v)-\pi(u)) \bmod n \in S_u$.

We note that $\E(\abs{S_u}) = \frac{\deg_G(u)}{2} \le \frac{D}{2}$.
By a standard Chernoff bound without replacement we see that
\begin{align}
	\label{eqNeighbourBound}
	\abs{S_u} \le D', 
	\ \ \text{where} \ 
	D' = \floor{\frac{D}{2} + O \!\left ( \sqrt{D\log n} \right )}
\end{align}
with probability $\ge 1 - \frac{1}{2n}$ for a given vertex $u \in V[G]$.
So with probability at least $\frac{1}{2}$ we have that \eqref{eqNeighbourBound}
holds for every $u \in V[G]$. In particular there exists $\pi$ such that
\eqref{eqNeighbourBound} holds for every $u \in V[G]$. Fix such a $\pi$.

Let $D'' = \min\set{\floor{\frac{n}{2}},D'}$. Then for any node $u$ we can encode
$\pi(u)$ and $S_u$ using at most $O(\log n) + \log \binom{\floor{n/2}}{D''}$ bits.
Hence we conclude that:
\[
	g_v(\G_D) \le \binom{\floor{n/2}}{D''} n^{O(1)}
\]
The conclusion now follows from the following estimate
\[
	\binom{\floor{n/2}}{D''} \le 
	\binom{\floor{n/2}}{\floor{D/2}}
	\cdot 
	\left ( \frac{\floor{n/2}}{\floor{D/2}} \right )^{D''-\floor{D/2}}
	\le 
	\binom{\floor{n/2}}{\floor{D/2}}
	\cdot 
	\left ( \frac{n}{D} \right )^{O\!\left(\sqrt{D\log n}\right )}
\]
\end{proof}

\subsection{Lower bounds on $g_v(\G_D)$}

We now show lower bounds on  $g_v(\G_D)$. Our first lower bound follows from counting perfect matchings. 

\begin{lemma}
	\label[lemma]{lemBipartiteBoundedDegLowerBound}
	Let $n, D$ be positive integers where $n$ is even. Let $V = [n]$. The number
	of graphs $G$ with $\Delta(G) \le D$ and vertex set $V$ is at least
	$\frac{\left((n/2)!\right)^{D}}{D^{Dn/2}}$.
\end{lemma}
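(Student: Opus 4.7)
The plan is to reduce the problem to counting labeled bipartite graphs and then count those via a multiplicity argument on $D$-tuples of perfect matchings. Split $V = [n]$ into two equal halves $A$ and $B$ of size $n/2$ each. Every bipartite graph with parts $A, B$ and maximum degree at most $D$ on each side is a graph on $V$ with $\Delta \le D$, and distinct edge sets yield distinct graphs on $V$, so it suffices to lower-bound the number $N$ of such bipartite graphs.

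I will overcount $N$ by parametrizing with $D$-tuples of perfect matchings. Let $T$ be the set of ordered tuples $(\sigma_1, \ldots, \sigma_D)$ where each $\sigma_i \colon A \to B$ is a bijection; clearly $|T| = ((n/2)!)^D$. Each tuple produces a bipartite graph $G(\sigma_1, \ldots, \sigma_D)$ on $V$ with edge set $\bigcup_{i=1}^{D} \{\{v, \sigma_i(v)\} : v \in A\}$, and this graph has maximum degree at most $D$ (it is the union of $D$ perfect matchings between $A$ and $B$). Counting tuples by the graph they produce yields $|T| = \sum_G M(G)$, where the sum is over bipartite graphs produced by at least one tuple and $M(G)$ is the number of producing tuples for $G$.

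The main step is the multiplicity bound $M(G) \le D^{Dn/2}$ for every produced graph $G$. For each $v \in A$, the sequence $(\sigma_1(v), \ldots, \sigma_D(v))$ lies in $N_G(v)^D$, whose size is $d_v^D \le D^D$ since $d_v = \deg_G(v) \le D$. Since a tuple in $T$ is fully determined by these sequences for $v \in A$ and $|A| = n/2$, we get $M(G) \le D^{Dn/2}$. Substituting into the previous identity gives $((n/2)!)^D \le N \cdot D^{Dn/2}$, which rearranges to the desired lower bound $N \ge ((n/2)!)^D / D^{Dn/2}$.

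There is no serious obstacle here; the multiplicity bound is loose in that it ignores both the bijectivity of each $\sigma_i$ and the requirement that the union equal (rather than merely be contained in) $E(G)$, but overcounting producing tuples only strengthens the resulting lower bound on $N$.
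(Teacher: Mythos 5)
Your proof is correct and follows essentially the same route as the paper: both count ordered $D$-tuples of perfect matchings between the two halves of $V$ and bound the multiplicity of each resulting union graph by $D^{Dn/2}$ using the fact that each matched partner must be one of at most $D$ neighbours. Your per-vertex organization of the multiplicity bound is just a repackaging of the paper's per-matching bound of $D^{n/2}$ raised to the $D$-th power.
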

\begin{proof}
Let $V_0 = [n/2], V_1 = [n] \setminus [n/2]$. Let $M_0,M_1,\ldots,M_{r-1}$ be all 
perfect matchings of $V_0$ and $V_1$ where $r = (n/2)!$. Now consider the following
family of graphs being the union of $D$ such perfect matchings:
\[
  \F = \set{G \mid V[G] = V, E[G] = M_{i_0} \cup \ldots M_{i_{D-1}},
    i_0,\ldots,i_{D-1} \in [r]}
\]
Every graph in $\F$ is the union of $D$ perfect matchings and therefore has max 
degree $\le D$. Now fix $G \in \F$ and let $M$ be a perfect matching $G$. We can
write $M = \set{(u,f(u)) \mid u \in V_0}$ for some bijective function
$f : V_0 \to V_1$. There are at most $D$ ways to choose $f(u)$ for every $u \in V_0$
since $(u,f(u))$ must be an edge of $G$. Hence there are at most $D^{n/2}$ ways to
choose a perfect matching of $G$, and $G$ can be written as a union of $D$ perfect
matchings in at most $D^{Dn/2}$ ways. Since $G$ was arbitrarily chosen this must
hold for any $G \in \F$. Since there are $r^D$ ways to choose $D$ perfect matchings
we conclude that $\F$ consists of at least $\frac{r^D}{D^{nD/2}}$ graphs as desired.
\end{proof}
As an immediate corollary of \Cref{lemBipartiteBoundedDegLowerBound} we get
a lower bound on the number of nodes in an induced universal graph, shown below in \cref{corLowerBoundLabelSizeBoundedDeg}.
\begin{corollary}
	\label[corollary]{corLowerBoundLabelSizeBoundedDeg}
	The induced universal graph for the family $\G_D$ of graphs with bounded 
	degree $D$ and $n$ nodes has at least 
	$\Omega \!\left ( \left ( \frac{n}{2eD} \right )^{D/2} \right )$ nodes.
\end{corollary}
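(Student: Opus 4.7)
The plan is to deduce this corollary from \Cref{lemBipartiteBoundedDegLowerBound} by a standard counting/information-theoretic lower bound. The idea is that an induced universal graph $U$ on $N$ vertices cannot contain too many distinct induced subgraphs on $n$ labelled vertices, so if $\G_D$ is large, then $N$ must be large too.

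Concretely, suppose $U$ is an induced universal graph for $\G_D$ and set $N = \abs{V[U]}$. For every labelled graph $G \in \G_D$ on vertex set $[n]$ there exists an injective embedding $\lambda : [n] \to V[U]$, and $G$ is uniquely recovered from $\lambda$ since the edges of $G$ are precisely the pairs $(i,j)$ such that $\lambda(i)\lambda(j) \in E[U]$. Hence the number of labelled graphs in $\G_D$ is bounded by the number of injections $[n]\to V[U]$, which is at most $N^n$. Combining this with \Cref{lemBipartiteBoundedDegLowerBound} (applied with $n$ even; odd $n$ reduces to the even case with a constant loss), we get
\[
	N^n \;\ge\; \frac{((n/2)!)^D}{D^{Dn/2}},
\]
so taking $n$th roots yields
\[
	N \;\ge\; \frac{((n/2)!)^{D/n}}{D^{D/2}}.
\]
Stirling's approximation $(n/2)! \ge (n/(2e))^{n/2}$ then gives $((n/2)!)^{D/n} \ge (n/(2e))^{D/2}$, and therefore $N \ge (n/(2eD))^{D/2}$, which is exactly the claimed bound.

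I do not foresee any real obstacle: once \Cref{lemBipartiteBoundedDegLowerBound} is granted, this is a textbook counting argument. The only minor care needed is the parity reduction when $n$ is odd and the Stirling estimate, neither of which affects the $\Omega$-statement.
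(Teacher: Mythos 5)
Your proposal is correct and follows essentially the same route as the paper: bound the number of labelled graphs in $\G_D$ by the number of embeddings into the universal graph (at most $\abs{V[U]}^n$), invoke \Cref{lemBipartiteBoundedDegLowerBound}, take $n$th roots, and finish with Stirling's formula. The only cosmetic difference is how odd $n$ is handled (your constant-loss parity reduction versus the paper's use of floors together with $D\le n$), which does not affect the $\Omega$-bound.
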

\begin{proof}
Let $G$ be the induced universal graph for the family $\G_D$. Let $V = [n]$.
Any graph $H$ from $\G_D$ on the vertex set $V$ is uniquely defined by the embedding
function $f$ of $H$ in $G$. Since there are no more than $\abs{V[G]}^n$ ways to
choose $f$, \Cref{lemBipartiteBoundedDegLowerBound} gives that
$\abs{V[G]}^n \ge \frac{\left(\floor{n/2}!\right)^{D}}{D^{D\floor{n/2}}}$.
The result now follows from Stirling's formula.
\[
\abs{V[G]} \ge 
\left (
  \frac{\left(\floor{n/2}!\right)^{2/n}}{D^{\floor{n/2}/(n/2)}}
\right )^{D/2}
\]
We note that $\floor{n/2}/(n/2) = 1$ when $n$ is even. When $n$ is odd we have
$\floor{n/2} = \frac{n-1}{2}$. Hence $\floor{n/2}/(n/2) = 1 - \frac{1}{n}$. Since
$D \le n$ we have $D^{1-\frac{1}{n}} = \Theta(D)$.
\end{proof}

Our second lower bound comes from bounding the probability that a random graph on $n$ vertices, where each edge exists with probability around $D/n$, has max degree $D$.  
\begin{lemma}
	\label{lemRandLowerBound}
	Let $n, D$ be positive integers where $n \ge 2D$.
	Let $V = [n]$. The number
	of graphs $G$ with $\Delta(G) \le D$ and vertex set $V$ is at least
	$\binom{\binom{n}{2}}{\floor{nD/2}}
	\cdot 2^{-O \!\left (n\sqrt{D\log n} \cdot \log(n/D) \right )}$.
\end{lemma}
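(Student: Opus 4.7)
The plan is to count graphs with a carefully chosen fixed number of edges and argue that at least half of them have maximum degree at most $D$. Set $N = \binom{n}{2}$, $M = \floor{nD/2}$, and $m = \floor{n(D - C\sqrt{D\log n})/2}$ for a sufficiently large absolute constant $C$. For $D \le C^2 \log n$ the advertised bound follows from \Cref{corLowerBoundLabelSizeBoundedDeg}, so I may assume $D \ge C^2 \log n$; in that regime $m \ge nD/4$ so $m/N = \Theta(D/n)$. There are exactly $\binom{N}{m}$ graphs on $V=[n]$ with precisely $m$ edges, and I will show that at least $\binom{N}{m}/2$ of them belong to $\G_D$.

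Sample such a graph $H$ uniformly. For a fixed vertex $v$, the degree $\deg_H(v)$ is the number of chosen edges among the $n-1$ edges incident to $v$, so it follows a hypergeometric distribution with mean $\mu = 2m/n \le D - C\sqrt{D\log n}$. Since the hypergeometric distribution satisfies the same multiplicative Chernoff bound as the binomial, with $\delta := (D-\mu)/\mu = \Theta\!\left(\sqrt{\log n / D}\right) \le 1$ one obtains
\[
  \Prp{\deg_H(v) \ge D} \le \exp\!\left(-\tfrac{\delta^2\mu}{3}\right) \le \exp\!\left(-\Omega(C^2 \log n)\right),
\]
which is at most $\tfrac{1}{2n}$ for $C$ chosen large enough. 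A union bound over the $n$ vertices then gives $\Delta(H) \le D$ with probability at least $\tfrac12$.

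It remains to compare $\binom{N}{m}$ to $\binom{N}{M}$. Writing $s := M - m = O(n\sqrt{D\log n})$, a direct expansion yields
\[
  \frac{\binom{N}{m}}{\binom{N}{M}}
  \;=\; \prod_{i=0}^{s-1} \frac{M-i}{N-m-i}
  \;\ge\; \left(\frac{m}{N}\right)^{\!s}
  \;=\; \Theta\!\left(\frac{D}{n}\right)^{\!O(n\sqrt{D\log n})}
  \;=\; 2^{-O\!\left(n\sqrt{D\log n}\,\log(n/D)\right)},
\]
using $m/N = \Theta(D/n)$. Combining with the previous paragraph gives at least $\binom{N}{M} \cdot 2^{-O(n\sqrt{D\log n}\,\log(n/D))}$ graphs in $\G_D$, as claimed.

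The main obstacle is ensuring the hypergeometric tail beats the union bound; this is exactly why the edge count is shifted below the average by $C\sqrt{D\log n}$, so the Chernoff exponent becomes $\Omega(C^2\log n)$ and can be made to dominate $\log(2n)$ by tuning $C$. Once the concentration step is in place, the binomial ratio estimate is routine, and the cost of the shift by $s$ is exactly the $2^{-O(s\log(n/D))}$ factor appearing in the target.
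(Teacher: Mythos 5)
Your construction follows the paper's proof in all essentials: both arguments take a random sparse graph whose expected degree is pulled down to $D-\Theta\!\left(\sqrt{D\log n}\right)$, apply a Chernoff bound plus a union bound over the $n$ vertices to get constant success probability, compare the resulting binomial coefficient to $\binom{\binom{n}{2}}{\floor{nD/2}}$ via the same $(m/N)^s$-type ratio estimate, and fall back to the earlier counting bound when $D$ is small. The one genuine difference is the sampling model: the paper uses $G(n,p)$ with $p=D'/(n-1)$ and must then convert ``probability $\ge 1/3$'' into a count via the pointwise bound $\Pr(G=G_i)\le\binom{\binom{n}{2}}{nD''/2}^{-1}$, which also forces it to condition on a second event (a lower bound on $\abs{E[G]}$); your uniform model over graphs with exactly $m$ edges makes all candidate graphs equiprobable, so ``probability $\ge 1/2$'' immediately yields ``at least $\binom{N}{m}/2$ graphs,'' which is cleaner bookkeeping and removes one of the paper's two goodness conditions. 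Two small repairs are needed. First, your small-$D$ fallback cites \Cref{corLowerBoundLabelSizeBoundedDeg}, which bounds $g_v(\mathcal{G}_D)$ and is not a statement about the number of graphs; the statement you actually need is \Cref{lemBipartiteBoundedDegLowerBound} (exactly what the paper invokes), whose count does imply the claimed bound for $D=O(\log n)$ because there the discrepancy is $2^{O(nD)}$ and $nD=O\!\left(n\sqrt{D\log n}\log(n/D)\right)$. Second, with the fallback threshold set at $D\le C^2\log n$, your claims $m\ge nD/4$ and $\delta\le 1$ fail near the boundary: at $D=C^2\log n$ one has $D-C\sqrt{D\log n}=0$, so $m$ is essentially zero. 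Raising the fallback threshold by a constant factor (say to $D\le 8C^2\log n$, which the small-$D$ argument still covers) restores both inequalities, and with that adjustment the proof goes through.
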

\begin{proof}
Fix $n, D$. For $D \le \log^2 n$ the result follows from
\Cref{lemBipartiteBoundedDegLowerBound}, so assume
that $D \ge \log^2 n$.
Let $D' = D - O\!\left(\sqrt{D\log n}\right)$ be an integer.
Let $G$ be a random $G(n,p)$ graph where $p = \frac{D'}{n-1}$ and
$V[G] = [n]$. That is, $G$
is a random graph on $n$ nodes and for every pair $u,v \in V[G]$ there is an
edge between $u$ and $v$ with probability $p$.
We say that $G$ is \emph{good} if it satisfies the following two properties:
\begin{enumerate}
	\item[1] $\Delta(G) \le D$.
	\item[2] $\abs{E[G]} \ge nD''$ where $D''$ is an even integer satisfying
	  $\frac{nD''}{2} = \frac{nD'}{2} - O(\sqrt{nD})$.
\end{enumerate}
We note that $D'' = D - O(\left(\sqrt{D\log n}\right)$.
We will argue that $G$ satisfies  Property 1 with probability at least $\frac{1}{3}$.
By a Chernoff bound the probability that $u \in V[G]$ has more than $D$
neighbours is at most $\frac{1}{3n}$ if $D'$ is chosen sufficiently small.
So with probability at least $\frac{2}{3}$ we have $\Delta(G) \le D$.
Similarly, with probability at least $\frac{2}{3}$ we have
$\abs{E[G]} \ge \frac{nD'}{2} - O(\sqrt{nD})$ if we choose the constant
in the $O$-notation large enough. So with probability at least
$\frac{1}{3}$ $G$ is good.

Let $r$ be the number of good graphs and enumerate them $G_1,G_2,\ldots,G_r$.
The probability that $G = G_i$ is $p^{\abs{E[G_i]}}(1-p)^{\binom{n}{2}-\abs{E[G_i]}}$.
Since $G_i$ is good we know that $\abs{E[G_i]} \ge \frac{nD''}{2}$. Hence the probability
is at most:
\[
  p^{nD''/2}(1-p)^{\binom{n}{2}-nD''/2}
  \le 
  \binom{\binom{n}{2}}{nD''/2}^{-1}
\]
Where the inequality follows from the binomial expansion of $(p+(1-p))^{\binom{n}{2}}$.
Hence we see that:
\[
  \frac{1}{3} \le 
  \sum_{i=1}^r \Pr(G=G_i) \le 
  r\binom{\binom{n}{2}}{\frac{nD''}{2}}^{-1}
\]
And hence there are at least $\frac{1}{3}\binom{\binom{n}{2}}{nD''/2}$ graphs
with vertex set $[n]$ and maximum degree $\le D$.
Now the result follows from the following estimate:
\[
  \binom{\binom{n}{2}}{\frac{nD''}{2}}
  \ge 
  \binom{\binom{n}{2}}{\floor{\frac{nD}{2}}}
  \left ( \frac{\binom{n}{2}}{nD''/2} \right )^{nD''-nD}
  \ge 
  \binom{\binom{n}{2}}{\floor{\frac{nD}{2}}}
  \left ( \frac{n}{D} \right )^{-O\!\left(n\sqrt{D \log n}\right)}
\]
\end{proof}
As previously we get a bound on  $g_v(\G_D)$. 
\begin{corollary}
	\label[corollary]{corRandLowerBound}
	For the family $\G_D$ of graphs with bounded degree $D$ on $n \ge 2D$ nodes
	\[
		g_v(\G_D) \ge 
		\binom{\floor{n/2}}{\floor{D/2}}
		\cdot 
		2^{-O \!\left (\sqrt{D\log n} \cdot \log(n/D) \right )}
	\]
\end{corollary}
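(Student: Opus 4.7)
The plan is to combine the counting bound from \Cref{lemRandLowerBound} with a standard pigeonhole argument on embeddings. Let $U$ be an induced universal graph for $\G_D$ with $N=\abs{V[U]}$ vertices. Every graph $H$ on vertex set $[n]$ with $\Delta(H)\le D$ is realized by some injective embedding $[n]\to V[U]$, and distinct labeled graphs arise from distinct injections. Since the number of injections is at most $N^n$, we get $N^n \ge \#\set{H : V[H]=[n],\ \Delta(H)\le D}$. Combining this with \Cref{lemRandLowerBound} and taking $n$-th roots yields
\[
N \;\ge\; \binom{\binom{n}{2}}{\floor{nD/2}}^{1/n} \cdot 2^{-O(\sqrt{D\log n}\,\log(n/D))},
\]
so it remains to compare $\binom{\binom{n}{2}}{\floor{nD/2}}^{1/n}$ to $\binom{\floor{n/2}}{\floor{D/2}}$.

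For this comparison I would apply Stirling in the form $\ln\binom{M}{K} = K\ln(eM/K) - K^2/(2M) - \frac{1}{2}\ln(2\pi K) + O(1)$ (valid in the relevant range since $n\ge 2D$ forces $K\le M/2$) to $M_1=\binom{n}{2}, K_1=\floor{nD/2}$ and $M_2=\floor{n/2}, K_2=\floor{D/2}$. The leading terms on each side are both essentially $(nD/2)\ln(en/D)$; they differ only through $\ln((n-1)/n)=O(1/n)$ contributing $O(D)$ in total, while the $K^2/M$ and $\log K$ corrections contribute $O(n\log n)$. Dividing by $n$, the slack is $O(\log n + D/n) = O(\log n)$ per vertex. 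This is absorbed into the error factor $2^{-O(\sqrt{D\log n}\,\log(n/D))}$ since in the regime $D\ge 1,\, n\ge 2D$ a short calculation shows $\sqrt{D\log n}\,\log(n/D) = \Omega(\log n)$, completing the proof.

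The main obstacle will be making the Stirling estimate uniform across the whole range $1\le D\le n/2$: the $K^2/(2M)$ correction becomes nontrivial as $D$ approaches $n/2$, and the floors $\floor{n/2}$ and $\floor{D/2}$ introduce parity-dependent subleading terms. The cleanest way to avoid case analysis is probably to switch to the binary-entropy form $\ln\binom{M}{K} = M\,H(K/M) - \frac12 \ln(2\pi K(1-K/M)) + O(1)$ and compare $M_1 H(K_1/M_1)$ to $n M_2 H(K_2/M_2)$ via a Taylor expansion of $H$ in the small perturbation $D/(n-1)-D/n = D/(n(n-1))$, controlling the error by $|H'(D/n)|\cdot D/n^2 = O(\log(n/D)/n)$ per unit and hence $O(\log(n/D))$ after multiplying by $n(n-1)/2$ and dividing by $n$. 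In the very small $D$ regime one may alternatively fall back on \Cref{corLowerBoundLabelSizeBoundedDeg}, which is already within constants of $\binom{\floor{n/2}}{\floor{D/2}}$ when $D$ is constant.
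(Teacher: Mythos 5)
Your first paragraph is precisely the paper's own proof: the paper applies the embedding-counting (pigeonhole) argument of \Cref{corLowerBoundLabelSizeBoundedDeg} to \Cref{lemRandLowerBound} and then simply asserts, within the $2^{\pm O(\sqrt{D\log n}\,\log(n/D))}$ slack, that $\binom{\binom{n}{2}}{\floor{nD/2}}^{1/n}$ equals $\binom{\floor{n/2}}{\floor{D/2}}$ up to that factor, so your route is essentially identical, merely spelling out the binomial comparison the paper leaves implicit. One caution on your bookkeeping: the second-paragraph claim that the $K^2/(2M)$ and $\log K$ corrections contribute only $O(n\log n)$ (hence $O(\log n + D/n)$ per vertex) is false once $D = \omega\!\left(\sqrt{n\log n}\right)$, since the $K_1^2/(2M_1)$ term is $\Theta(D^2)$ and per vertex $\Theta(D^2/n)$, which exceeds the available slack when $D=\Theta(n)$; the comparison only works because this term nearly cancels against $n\cdot K_2^2/(2M_2)$ from the other side, which is exactly what your entropy-form plan in the last paragraph captures (indeed one can check directly that $n\floor{n/2}H\!\left(\floor{D/2}/\floor{n/2}\right) - \binom{n}{2}H\!\left(\floor{nD/2}/\binom{n}{2}\right) = O(n)$ bits, i.e.\ $O(1)$ per vertex), so that route, not the crude Stirling accounting, is the one to keep, and the absorption is then valid since $\sqrt{D\log n}\,\log(n/D) = \Omega(\log n)$ for $1\le D\le n/2$.
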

\begin{proof}
By the same argument as for \Cref{corLowerBoundLabelSizeBoundedDeg} we get that
\Cref{lemRandLowerBound} implies:
\[
  g_v(\G_D) \ge 
  \binom{\binom{n}{2}}{\floor{nD/2}}^{1/n}
  \cdot 2^{-O \!\left (\sqrt{D\log n} \cdot \log(n/D) \right )}
  =
  \binom{\floor{n/2}}{\floor{D/2}}
  \cdot 2^{-O \!\left (\sqrt{D\log n} \cdot \log(n/D) \right )}
\]
\end{proof}

\section{Paths}\label{sec:paths}
 \pgfdeclarelayer{background}
      \pgfdeclarelayer{foreground}
      \pgfsetlayers{background,main,foreground}

 We show that there exists an induced universal graph with $\lfloor 3n/2 \rfloor$ vertices and at most $\lfloor 3n/2 \rfloor - 1$ edges for the family $\mathcal{AC}$ of graphs consisting of a set of paths with a total of $n$ vertices.
Our new induced universal graph matches the following lower bound.
\begin{theorem}[Claim 1, \cite{Esperet2008}]\label{thm:pathlower}
Every induced universal graph that embeds any acyclic graph $G$ on $n$ vertices with $\Delta(G)\leq 2$ has at least $\lfloor 3n/2 \rfloor$ vertices.
\end{theorem}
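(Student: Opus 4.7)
The plan is to exhibit two specific graphs in $\mathcal{AC}$ whose embeddings cannot share too many vertices of $U$, and then use an inclusion--exclusion argument on the two embedded vertex sets.

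First I would fix an induced universal graph $U$ for $\mathcal{AC}_n$ and consider the following two graphs, both clearly in $\mathcal{AC}_n$:
$G_A$, consisting of $\lfloor n/2\rfloor$ copies of $P_2$ together with (if $n$ is odd) one extra copy of $P_1$; and
$G_B$, consisting of $n$ copies of $P_1$ (i.e.\ $n$ isolated vertices).
Let $S\subseteq V[U]$ be the image of an embedding of $G_A$ in $U$, and let $I\subseteq V[U]$ be the image of an embedding of $G_B$ in $U$. Since embeddings are induced, $|S|=|I|=n$, the subgraph $U[S]$ is isomorphic to $G_A$, and $I$ is an independent set in $U$.

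The central step is to bound $|S\cap I|$. Since $I$ is independent in $U$, the set $S\cap I$ is independent in $U[S]=G_A$. The graph $G_A$ is a matching on $2\lfloor n/2\rfloor$ vertices together with possibly one isolated vertex, so its independence number is $\lfloor n/2\rfloor + (n-2\lfloor n/2\rfloor)=\lceil n/2\rceil$. Hence $|S\cap I|\leq \lceil n/2\rceil$, which yields
\[
  |V[U]| \;\geq\; |S\cup I| \;=\; |S|+|I|-|S\cap I| \;\geq\; 2n - \lceil n/2\rceil \;=\; \lfloor 3n/2\rfloor.
\]
A brief case check confirms $2n-\lceil n/2\rceil=\lfloor 3n/2\rfloor$ for both parities of $n$, completing the argument.

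There is essentially no obstacle here: both $G_A$ and $G_B$ are unambiguously in $\mathcal{AC}_n$, and the independence-number computation in $G_A$ is immediate from its matching structure. The only mild care needed is the parity bookkeeping in the odd case, where $G_A$ carries an extra $P_1$ that contributes one more element to the independent set but also one more to $|S|$, so the final equality with $\lfloor 3n/2\rfloor$ still holds.
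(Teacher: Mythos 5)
Your proof is correct and takes essentially the same approach as the paper: it uses the same two witness graphs ($n$ copies of $P_1$, and $\lfloor n/2\rfloor$ copies of $P_2$ plus possibly one $P_1$) and the same key observation that each embedded $P_2$ can meet the embedded independent set in at most one vertex. Your inclusion--exclusion phrasing via the independence number of the matching is just a repackaging of the paper's direct count $n+\lfloor n/2\rfloor=\lfloor 3n/2\rfloor$.
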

\begin{proof}
The theorem can be extracted from the proof of Claim $1$ in \cite{Esperet2008}. Consider the two graphs $G'$ and $G''$, with $G'$ consisting of $n$ $P_1$ and $G''$ consising of $\floor{n/2}$ $P_2$ plus at most one $P_1$. An induced universal subgraph must contain $n$ disjoint $P_1$ to embed $G'$, and $\floor{n/2}$ disjoint $P_2$ to embed $G''$. Each $P_2$ in the embedding of $G''$ can overlap with at most one $P_1$ from the embedding of $G'$ in the induced universal graph, hence we need at least $n + \lfloor n/2 \rfloor = \lfloor 3n/2 \rfloor$ vertices in the induced universal graph.
\end{proof}
Our induced universal graph $U^p_n$ that matches \Cref{thm:pathlower} is defined as below.

\begin{definition}\label{def:upn}
  Let
  \begin{align*}
    s(x)&:=
    \begin{cases}
      x+2&\text{ if }x\equiv2\pmod{3}
      \\
      x+1&\text{ otherwise}
    \end{cases}
  \end{align*}
For any $n \in \N$ let $U^p_n$ be the graph over vertex set $\left[ \lfloor 3n/2 \rfloor \right]$, and for all $u < v \in [\lfloor 3n/2 \rfloor]$ let there be an edge $(u,v)$ iff $v = s(u)$ (See \Cref{fig:Up_n}).
\end{definition}
\begin{figure}[h]
  \centering
    \begin{tikzpicture}[scale=0.5]
      \pgfmathtruncatemacro{\n}{6};
      \pgfmathtruncatemacro{\vmax}{2*\n};
      \pgfdeclarelayer{background}
      \pgfdeclarelayer{foreground}
      \pgfsetlayers{background,main,foreground}
      \begin{scope}[
        vertex style/.style={
          draw,
          circle,
          minimum size=3mm,
          inner sep=0pt,
          outer sep=0pt%
        },
        selected vertex style/.style={
          draw,
          circle,
          fill=blue!20,
          minimum size=3mm,
          inner sep=0pt,
          outer sep=0pt%
        },
        selected edge style/.style={
          rounded corners,line width=1.5mm,blue!20,cap=round%
        }
      ]
        \node[vertex style] (v0) at (0,-1) {\tiny $0$};
        \node[vertex style] (v1) at (0,0) {\tiny $1$};
        \node[vertex style] (v2) at (1,0) {\tiny $2$};
        \node[vertex style] (v3) at (2,-1) {\tiny $3$};
        \node[vertex style] (v4) at (2,0) {\tiny $4$};
        \node[vertex style] (v5) at (3,0) {\tiny $5$};
        \node[vertex style] (v6) at (4,-1) {\tiny $6$};
        \node[vertex style] (v7) at (4,0) {\tiny $7$};
        \node[vertex style] (v8) at (5,0) {\tiny $8$};
        \node[vertex style] (v9) at (6,-1) {\tiny $9$};
        \node[vertex style] (v10) at (6,0) {\tiny $10$};
        \node[vertex style] (v11) at (7,0) {\tiny $11$};
        \node[vertex style] (v12) at (8,-1) {\tiny $12$};
        \node[vertex style] (v13) at (8,0) {\tiny $13$};
        \node[vertex style] (v14) at (9,0) {\tiny $14$};
        \node[vertex style] (v15) at (10,-1) {\tiny $15$};

        \draw[thick,color=blue] (v0) -- (v1);
        \draw[thick,color=blue] (v3) -- (v4);
        \draw[thick,color=blue] (v6) -- (v7);
        \draw[thick,color=blue] (v9) -- (v10);
        \draw[thick,color=blue] (v12) -- (v13);

        \draw[thick,color=blue] (v1) -- (v2);
        \draw[thick,color=red] (v2) -- (v4);
        \draw[thick,color=blue] (v4) -- (v5);
        \draw[thick,color=red] (v5) -- (v7);
        \draw[thick,color=blue] (v7) -- (v8);
        \draw[thick,color=red] (v8) -- (v10);
        \draw[thick,color=blue] (v10) -- (v11);
        \draw[thick,color=red] (v11) -- (v13);
        \draw[thick,color=blue] (v13) -- (v14);


      \end{scope}
    \end{tikzpicture}
    \caption{$U^p_{11}$, with each $(u,v)$ colored \textcolor{blue}{blue}
      if $v=u+1$ and \textcolor{red}{red} if $v=u+2$, i.e., the edges come from cases $1$ and $2$ of \Cref{def:upn}, respectively.}
    \label[figure]{fig:Up_n}
\end{figure}
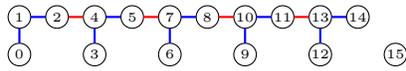
When $n$ is even $U^p_n $ is a simple caterpillar graph, where every second vertex of the main path has an extra vertex connected to it. When $n$ is odd $U^p_n$ is a caterpillar and a single isolated vertex (see \Cref{fig:Up_n}).

In the following, we show that $U^p_n$ embeds the family of acyclic graphs over $n$ vertices with maximum degree $2$, and that the corresponding decoder is size oblivious.
We proceed by showing that a few particular paths can be embedded in the graph, and finally that any graph in $\mathcal{AC}$ can be embedded using the same
embedding techniques.

\begin{definition}\label{def:bb}
Let $G$ and $U$ be graphs and let $\lambda$ be an embedding function of $G$ into $U$. We say that a vertex $v \in V[U]$ is
\emph{used} if there exists $v' \in V[G]$ such that $\lambda(v') = v$. 
A node $w\in V[U]$ is \emph{allocated} if
$w$ is used or adjacent to a used vertex.
A subset $X \subseteq V[U]$ is \emph{allocated}
when every vertex $x \in X$ is allocated.
\end{definition}

This definition allows us to argue that, given a specific graph $G\in\mathcal{AC}$ with $n$ vertices, we can allocate a part of our induced universal graph $U^p_{n}$
for embedding a part of $G$, and then the remaining unallocated part of $U^p_{n}$ is available for embedding for the remaining part of $G$. We divide $G$ into several parts and show that each part can be
embedded in $U^p_{n}$.
For each part we embed,
the number of allocated vertices divded by the number of used vertices is at most
$\lfloor 3n/2 \rfloor$. \Cref{thm:paths} implies
that these embedding strategies can be combined, i.e., applied consecutively one after the other,
implying that all of $G$ can be embedded in $U^p_{n}$.

Our argument relies heavily on allocation of blocks as defined below.
\begin{definition}\label{def:block}
Let \emph{block} $B_j$ of $U^p_n$, where $j\in[\floor{\frac{n}{2}}]$, be the vertices $\{3j, 3j+1, 3j+2 \}$.
\end{definition}

In \Cref{lem:combined}, we show how to embed single paths of any length
and simple families of paths in $U^p_{n}$ efficiently.
\Cref{fig:p2k1,fig:p3p1} shows how the cases are handled.
\begin{figure}[h]
  \centering
    \begin{tikzpicture}[scale=0.5]
      \pgfmathtruncatemacro{\n}{6};
      \pgfmathtruncatemacro{\vmax}{2*\n};
      \pgfdeclarelayer{background}
      \pgfdeclarelayer{foreground}
      \pgfsetlayers{background,main,foreground}
    \begin{scope}[
        vertex style/.style={
          draw,
          circle,
          minimum size=3mm,
          inner sep=0pt,
          outer sep=0pt%
        },
        selected vertex style/.style={
          draw,
          circle,
          fill=blue!20,
          minimum size=3mm,
          inner sep=0pt,
          outer sep=0pt%
        },
        selected edge style/.style={
          rounded corners,line width=1.5mm,blue!20,cap=round%
        }
      ]
        \node[selected vertex style] (v0) at (0,-1) {\tiny $0$};
        \node[selected vertex style] (v1) at (0,0) {\tiny $1$};
        \node[selected vertex style] (v2) at (1,0) {\tiny $2$};
        \node[vertex style] (v3) at (2,-1) {\tiny $3$};
        \node[selected vertex style] (v4) at (2,0) {\tiny $4$};
        \node[selected vertex style] (v5) at (3,0) {\tiny $5$};
        \node[vertex style] (v6) at (4,-1) {\tiny $6$};
        \node[selected vertex style] (v7) at (4,0) {\tiny $7$};
        \node[selected vertex style] (v8) at (5,0) {\tiny $8$};
        \node[vertex style] (v9) at (6,-1) {\tiny $9$};
        \node[selected vertex style] (v10) at (6,0) {\tiny $10$};
        \node[vertex style] (v11) at (7,0) {\tiny $11$};
        \node[vertex style] (v12) at (8,-1) {\tiny $12$};
        \node[vertex style] (v13) at (8,0) {\tiny $13$};
        \node[vertex style] (v14) at (9,0) {\tiny $14$};

        \draw[thick,color=blue] (v0) -- (v1);
        \draw[thick,color=blue] (v3) -- (v4);
        \draw[thick,color=blue] (v6) -- (v7);
        \draw[thick,color=blue] (v9) -- (v10);
        \draw[thick,color=blue] (v12) -- (v13);

        \draw[thick,color=blue] (v1) -- (v2);
        \draw[thick,color=red] (v2) -- (v4);
        \draw[thick,color=blue] (v4) -- (v5);
        \draw[thick,color=red] (v5) -- (v7);
        \draw[thick,color=blue] (v7) -- (v8);
        \draw[thick,color=red] (v8) -- (v10);
        \draw[thick,color=blue] (v10) -- (v11);
        \draw[thick,color=red] (v11) -- (v13);
        \draw[thick,color=blue] (v13) -- (v14);

         \begin{pgfonlayer}{background}

          \draw[selected edge style]
                 (v0.center) -- (v1.center) -- (v2.center) -- (v4.center) -- (v5.center) -- (v7.center) -- (v8.center)
                 -- (v10.center);

          \draw[very thick,dotted,red] ($(v0.south west)+(-0.15,-0.15)$) rectangle ($(v2.north east)+(0.15,0.15)$);
          \draw[very thick,dotted,red] ($(v3.south west)+(-0.15,-0.15)$) rectangle ($(v5.north east)+(0.15,0.15)$);
          \draw[very thick,dotted,red] ($(v6.south west)+(-0.15,-0.15)$) rectangle ($(v8.north east)+(0.15,0.15)$);
          \draw[very thick,dotted,red] ($(v9.south west)+(-0.15,-0.15)$) rectangle ($(v11.north east)+(0.15,0.15)$);
         \end{pgfonlayer}
      \end{scope}
    \end{tikzpicture}
    ~~~~
     \begin{tikzpicture}[scale=0.5]
      \pgfmathtruncatemacro{\n}{6};
      \pgfmathtruncatemacro{\vmax}{2*\n};
      \pgfdeclarelayer{background}
      \pgfdeclarelayer{foreground}
      \pgfsetlayers{background,main,foreground}
      \begin{scope}[
        vertex style/.style={
          draw,
          circle,
          minimum size=3mm,
          inner sep=0pt,
          outer sep=0pt%
        },
        selected vertex style/.style={
          draw,
          circle,
          fill=blue!20,
          minimum size=3mm,
          inner sep=0pt,
          outer sep=0pt%
        },
        selected edge style/.style={
          rounded corners,line width=1.5mm,blue!20,cap=round%
        }
      ]
        \node[selected vertex style] (v0) at (0,-1) {\tiny $0$};
        \node[selected vertex style] (v1) at (0,0) {\tiny $1$};
        \node[selected vertex style] (v2) at (1,0) {\tiny $2$};
        \node[vertex style] (v3) at (2,-1) {\tiny $3$};
        \node[selected vertex style] (v4) at (2,0) {\tiny $4$};
        \node[selected vertex style] (v5) at (3,0) {\tiny $5$};
        \node[vertex style] (v6) at (4,-1) {\tiny $6$};
        \node[selected vertex style] (v7) at (4,0) {\tiny $7$};
        \node[selected vertex style] (v8) at (5,0) {\tiny $8$};
        \node[selected vertex style] (v9) at (6,-1) {\tiny $9$};
        \node[selected vertex style] (v10) at (6,0) {\tiny $10$};
        \node[vertex style] (v11) at (7,0) {\tiny $11$};
        \node[vertex style] (v12) at (8,-1) {\tiny $12$};
        \node[vertex style] (v13) at (8,0) {\tiny $13$};
        \node[vertex style] (v14) at (9,0) {\tiny $14$};

        \draw[thick,color=blue] (v0) -- (v1);
        \draw[thick,color=blue] (v3) -- (v4);
        \draw[thick,color=blue] (v6) -- (v7);
        \draw[thick,color=blue] (v9) -- (v10);
        \draw[thick,color=blue] (v12) -- (v13);

        \draw[thick,color=blue] (v1) -- (v2);
        \draw[thick,color=red] (v2) -- (v4);
        \draw[thick,color=blue] (v4) -- (v5);
        \draw[thick,color=red] (v5) -- (v7);
        \draw[thick,color=blue] (v7) -- (v8);
        \draw[thick,color=red] (v8) -- (v10);
        \draw[thick,color=blue] (v10) -- (v11);
        \draw[thick,color=red] (v11) -- (v13);
        \draw[thick,color=blue] (v13) -- (v14);

         \begin{pgfonlayer}{background}

          \draw[selected edge style]
                 (v0.center) -- (v1.center) -- (v2.center) -- (v4.center) -- (v5.center) -- (v7.center) -- (v8.center)
                 -- (v10.center) -- (v9.center);

          \draw[very thick,dotted,red] ($(v0.south west)+(-0.15,-0.15)$) rectangle ($(v2.north east)+(0.15,0.15)$);
          \draw[very thick,dotted,red] ($(v3.south west)+(-0.15,-0.15)$) rectangle ($(v5.north east)+(0.15,0.15)$);
          \draw[very thick,dotted,red] ($(v6.south west)+(-0.15,-0.15)$) rectangle ($(v8.north east)+(0.15,0.15)$);
          \draw[very thick,dotted,red] ($(v9.south west)+(-0.15,-0.15)$) rectangle ($(v11.north east)+(0.15,0.15)$);
         \end{pgfonlayer}
      \end{scope}
    \end{tikzpicture}
    \caption{(left) $U^p_{10}$, with a $P_{8}$ embedded, and (right) $U^p_{10}$, with a $P_{9}$ embedded.  The embedded paths are shown embedded in \textcolor{blue!40}{blue}, and the allocated blocks in \textcolor{red}{red}. Both cases use at most $\lfloor 3n/2 \rfloor $ vertices.}\label[figure]{fig:p2k1}
\end{figure}
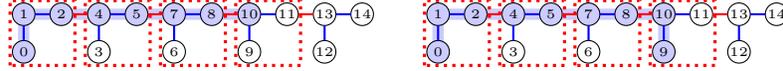

\begin{figure}[h]
  \centering
      \begin{tikzpicture}[scale=0.5]
      \pgfmathtruncatemacro{\n}{6};
      \pgfmathtruncatemacro{\vmax}{2*\n};
      \pgfdeclarelayer{background}
      \pgfdeclarelayer{foreground}
      \pgfsetlayers{background,main,foreground}
      \begin{scope}[
        vertex style/.style={
          draw,
          circle,
          minimum size=3mm,
          inner sep=0pt,
          outer sep=0pt%
        },
        selected vertex style/.style={
          draw,
          circle,
          fill=blue!20,
          minimum size=3mm,
          inner sep=0pt,
          outer sep=0pt%
        },
        selected edge style/.style={
          rounded corners,line width=1.5mm,blue!20,cap=round%
        }
      ]
        \node[selected vertex style] (v0) at (0,-1) {\tiny $0$};
        \node[selected vertex style] (v1) at (0,0) {\tiny $1$};
        \node[selected vertex style] (v2) at (1,0) {\tiny $2$};
        \node[vertex style] (v3) at (2,-1) {\tiny $3$};
        \node[vertex style] (v4) at (2,0) {\tiny $4$};
        \node[selected vertex style] (v5) at (3,0) {\tiny $5$};
        \node[selected vertex style] (v6) at (4,-1) {\tiny $6$};
        \node[selected vertex style] (v7) at (4,0) {\tiny $7$};
        \node[vertex style] (v8) at (5,0) {\tiny $8$};
        \node[vertex style] (v9) at (6,-1) {\tiny $9$};
        \node[vertex style] (v10) at (6,0) {\tiny $10$};
        \node[vertex style] (v11) at (7,0) {\tiny $11$};
        \node[vertex style] (v12) at (8,-1) {\tiny $12$};
        \node[vertex style] (v13) at (8,0) {\tiny $13$};
        \node[vertex style] (v14) at (9,0) {\tiny $14$};

        \draw[thick,color=blue] (v0) -- (v1);
        \draw[thick,color=blue] (v3) -- (v4);
        \draw[thick,color=blue] (v6) -- (v7);
        \draw[thick,color=blue] (v9) -- (v10);
        \draw[thick,color=blue] (v12) -- (v13);

        \draw[thick,color=blue] (v1) -- (v2);
        \draw[thick,color=red] (v2) -- (v4);
        \draw[thick,color=blue] (v4) -- (v5);
        \draw[thick,color=red] (v5) -- (v7);
        \draw[thick,color=blue] (v7) -- (v8);
        \draw[thick,color=red] (v8) -- (v10);
        \draw[thick,color=blue] (v10) -- (v11);
        \draw[thick,color=red] (v11) -- (v13);
        \draw[thick,color=blue] (v13) -- (v14);

         \begin{pgfonlayer}{background}

          \draw[selected edge style]
                 (v0.center) -- (v1.center) -- (v2.center);

            \draw[selected edge style]
                 (v5.center) -- (v7.center) -- (v6.center);

          \draw[very thick,dotted,red] ($(v0.south west)+(-0.15,-0.15)$) rectangle ($(v2.north east)+(0.15,0.15)$);
          \draw[very thick,dotted,red] ($(v3.south west)+(-0.15,-0.15)$) rectangle ($(v5.north east)+(0.15,0.15)$);
          \draw[very thick,dotted,red] ($(v6.south west)+(-0.15,-0.15)$) rectangle ($(v8.north east)+(0.15,0.15)$);
         \end{pgfonlayer}
      \end{scope}
    \end{tikzpicture}
    ~~~~
    \begin{tikzpicture}[scale=0.5]
      \pgfmathtruncatemacro{\n}{6};
      \pgfmathtruncatemacro{\vmax}{2*\n};
      \pgfdeclarelayer{background}
      \pgfdeclarelayer{foreground}
      \pgfsetlayers{background,main,foreground}
        \begin{scope}[
        vertex style/.style={
          draw,
          circle,
          minimum size=3mm,
          inner sep=0pt,
          outer sep=0pt%
        },
        selected vertex style/.style={
          draw,
          circle,
          fill=blue!20,
          minimum size=3mm,
          inner sep=0pt,
          outer sep=0pt%
        },
        selected edge style/.style={
          rounded corners,line width=1.5mm,blue!20,cap=round%
        }
      ]
        \node[selected vertex style] (v0) at (0,-1) {\tiny $0$};
        \node[selected vertex style] (v1) at (0,0) {\tiny $1$};
        \node[selected vertex style] (v2) at (1,0) {\tiny $2$};
        \node[selected vertex style] (v3) at (2,-1) {\tiny $3$};
        \node[vertex style] (v4) at (2,0) {\tiny $4$};
        \node[selected vertex style] (v5) at (3,0) {\tiny $5$};
        \node[selected vertex style] (v6) at (4,-1) {\tiny $6$};
        \node[vertex style] (v7) at (4,0) {\tiny $7$};
        \node[selected vertex style] (v8) at (5,0) {\tiny $8$};
        \node[selected vertex style] (v9) at (6,-1) {\tiny $9$};
        \node[vertex style] (v10) at (6,0) {\tiny $10$};
        \node[selected vertex style] (v11) at (7,0) {\tiny $11$};
        \node[selected vertex style] (v12) at (8,-1) {\tiny $12$};
        \node[vertex style] (v13) at (8,0) {\tiny $13$};
        \node[vertex style] (v14) at (9,0) {\tiny $14$};

        \draw[thick,color=blue] (v0) -- (v1);
        \draw[thick,color=blue] (v3) -- (v4);
        \draw[thick,color=blue] (v6) -- (v7);
        \draw[thick,color=blue] (v9) -- (v10);
        \draw[thick,color=blue] (v12) -- (v13);

        \draw[thick,color=blue] (v1) -- (v2);
        \draw[thick,color=red] (v2) -- (v4);
        \draw[thick,color=blue] (v4) -- (v5);
        \draw[thick,color=red] (v5) -- (v7);
        \draw[thick,color=blue] (v7) -- (v8);
        \draw[thick,color=red] (v8) -- (v10);
        \draw[thick,color=blue] (v10) -- (v11);
        \draw[thick,color=red] (v11) -- (v13);
        \draw[thick,color=blue] (v13) -- (v14);

         \begin{pgfonlayer}{background}

          \draw[selected edge style]
                 (v0.center) -- (v1.center) -- (v2.center);

          \draw[selected edge style]
                 (v5.center) -- (v5.center) ;
            \draw[selected edge style]
                 (v6.center) -- (v6.center) ;
            \draw[selected edge style]
                 (v8.center) -- (v8.center) ;
            \draw[selected edge style]
                 (v9.center) -- (v9.center) ;
            \draw[selected edge style]
                 (v11.center) -- (v11.center) ;
            \draw[selected edge style]
                 (v12.center) -- (v12.center) ;

          \draw[very thick,dotted,red] ($(v0.south west)+(-0.15,-0.15)$) -- ($(v1.north west)+(-0.15,0.15)$) --  ($(v2.north east)+(0.15,0.15)$) --  ($(v2.south east)+(0.15,-0.15)$)
          --  ($(v1.south east)+(0.15,-0.15)$) --  ($(v0.south east)+(0.15,-0.15)$) --  ($(v0.south west)+(-0.15,-0.15)$) ;

          \draw[very thick,dotted,green] ($(v3.south west)+(-0.15,-0.15)$) rectangle ($(v3.north east)+(0.15,0.15)$);
           \draw[very thick,dotted,black] ($(v5.south west)+(-0.15,-1.15)$) rectangle ($(v13.north east)+(0.15,0.15)$);
         \end{pgfonlayer}
      \end{scope}
    \end{tikzpicture}
    \caption{(left) $U^p_{10}$ with two $P_{3}$ embedded, and (right) $U^p_{10}$ with one $P_3$ and a number of $P_1$ embedded. Embedded paths are shown in \textcolor{blue!40}{blue}, and allocated blocks in \textcolor{red}{red}. The extra allocation to handle one $P_3$ and one $P_1$ is shown in dotted \textcolor{green}{green}. The trivial extension of $6$ additional $P_1$ is shown in dotted black. All cases use at most $\lfloor 3n/2 \rfloor $ vertices.}

    \label[figure]{fig:p3p1}
\end{figure}
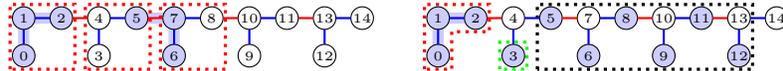

\begin{lemma}\label{lem:combined}
Let $k > 0$ unless otherwise specified. For the induced universal graph $U^p_{n}$ it holds that

\begin{enumerate}[label=(\alph*)]
  \item $U^p_{2k}$ embeds $P_{2k}$. \label{case:p2k}
  \item $U^p_{2k}$ embeds $P_{2k+1}$ when $k > 1$. \label{case:p2k1}
  \item $U^p_{k}$ embeds $k$ $P_1$. \label{case:p1}
  \item $U^p_{6}$ embeds two $P_3$. \label{case:2p3}
  \item $U^p_{3+l}$ embeds one $P_3$ and $l \geq 0$ $P_1$. \label{case:p3p1}
\end{enumerate}
\end{lemma}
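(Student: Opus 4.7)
The plan is to prove the five cases by exhibiting an explicit embedding for each and verifying that the induced subgraph has exactly the claimed edges. The unifying observation is that every edge $uv$ of $U^p_n$ with $u<v$ satisfies $v-u\in\{1,2\}$, since $s(x)\in\{x+1,x+2\}$; furthermore, which of the two differences occurs is determined by $u \bmod 3$. This reduces verification in each case to a short residue check on the chosen vertex set.

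For case \ref{case:p2k}, I use the ``main path'' of $U^p_{2k}$, namely the $2k$ vertices $\{3j+1, 3j+2 : j \in [k]\}$ listed in the order $1,2,4,5,\dots,3k-2,3k-1$. Consecutive entries are related by $s$, and there are no further edges because any $u\in\{3j+1,3j+2\}$ has $s(u) \in \{3j+2, 3(j+1)+1\}$, which lies in the set only at the intended successor. For case \ref{case:p2k1} I extend this path by prepending the pendant $0$, whose only neighbour in $U^p_{2k}$ is $1=s(0)$; no vertex maps to $0$ under $s$, so exactly one edge is added.

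For cases \ref{case:p1}, \ref{case:2p3}, and \ref{case:p3p1} I use independent vertices of the form $3j$ or $3j+2$ drawn from appropriate blocks, together with the trailing isolated vertex $3\floor{n/2}$ when $n$ is odd. The key fact is that the $s$-image of any $3j$ or $3j+2$ is $\equiv 1 \pmod 3$, so it never lies in a set of this form. For \ref{case:p1} I take $\{3j, 3j+2 : j \in [\floor{k/2}]\}$ augmented with the trailing isolated vertex when $k$ is odd, giving exactly $k$ pairwise non-adjacent vertices. For \ref{case:2p3} I embed the two copies of $P_3$ on $B_0 = \{0,1,2\}$ and $B_2 = \{6,7,8\}$; the intervening block $B_1$ separates them because $s(\{0,1,2\}) \cap B_2 = \emptyset$ and no vertex of $B_2$ maps into $B_0$. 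For \ref{case:p3p1} I embed $P_3$ on $B_0$ and select $l$ vertices from $\{3j, 3j+2 : 1 \le j < \floor{(3+l)/2}\}$ together with the trailing isolated vertex when available; $U^p_{3+l}$ always contains at least $l+1$ such candidates, and none is adjacent to $\{0,1,2\}$ because the only external $s$-neighbour of $\{0,1,2\}$ is $4$, which is never of the form $3j$ or $3j+2$ with $j \ge 1$.

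The main obstacle is not finding the embeddings — the figures already suggest them — but the systematic verification that no stray edges appear within or across the chosen vertex sets. Organising this bookkeeping around the single characterisation that every edge of $U^p_n$ is of the form $(x, s(x))$ with $s(x) - x \in \{1,2\}$ governed by $x \bmod 3$ reduces each case to a short residue check that is essentially routine.
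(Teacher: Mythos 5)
Your proof is correct and takes essentially the same approach as the paper: explicit block-based embeddings, verified by the observation that every edge of $U^p_n$ has the form $(x,s(x))$ with $s(x)-x\in\{1,2\}$ determined by $x\bmod 3$, with only cosmetic differences in the chosen vertices (e.g.\ the pure main path $1,2,4,5,\ldots$ for $P_{2k}$ instead of starting at the pendant $0$, and blocks $B_0,B_2$ for the two $P_3$'s instead of $\{0,1,2\}$ and $\{5,6,7\}$). Your embeddings also keep their used and allocated vertices inside an initial run of blocks (plus the trailing isolated vertex), so they remain compatible with the way the lemma is later combined in \Cref{thm:paths}.
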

\begin{proof}
We show that the cases in \Cref{lem:combined} can be embedded in our induced universal graph. The general strategy is to allocate blocks as in \Cref{def:block} and embed the input in these blocks, where at least two out of the three vertices in a block are used, yielding the desired ratio between allocated and used vertices.

Consider first case \ref{case:p2k}, where we are to embed an even path. See \Cref{fig:p2k1}. We allocate $k$ block $B_0, \ldots, B_{k-1}$ and embed the path by using three vertices in $B_0$, one vertex in $B_{k-1}$, and two vertices in $B_1, \ldots, B_{k-2}$. Say $n=2k$, then the size of the induced universal graph is thus $\lfloor 3n/2 \rfloor $. 

In the second case \ref{case:p2k1} we are to embed an odd path. See \Cref{fig:p2k1}. We use the same strategy as above, the difference being that two vertices are used in $B_{k-1}$, namely the two with the smallest label. As we embed one more vertex in the same number of blocks we are still within $\lfloor 3n/2 \rfloor $. 

For \ref{case:p1} we can embed a number $k$ of $P_1$ by allocating $\lfloor k/2 \rfloor$ blocks and embedding two $P_1$ per block. If $k$ is odd we embed the last $P_1$ by allocating the isolated vertex labelled $\floor{3k/2}-1$.

When the input is \ref{case:2p3} we embed the two $P_3$ by allocating three blocks as seen in \Cref{fig:p3p1}. As we allocate $9$ vertices and use $6$, this strategy is within the $\lfloor 3n/2 \rfloor $ bound.

In the final case \ref{case:p3p1} we allocate the first block $B_0$ and embed the $P_3$ in the block. Next we apply case \ref{case:p1} on the remaining $P_1$.  See \Cref{fig:p3p1}.  In total we then use $3 + \lfloor 3l/2 \rfloor$ vertices.
\end{proof}

We are now ready to state the main theorem of this section.
\Cref{case:main}
follows from consecutive applications of \Cref{lem:combined} and a careful order in which
we embed the vertices of some given graph $G\in\mathcal{AC}$ with $n$ vertices in
$U^p_n$.
In particular, it is crucial that the $P_3$'s and $P_1$'s in $G$
are embedded second to last and last, respectively,
since after embedding one of these two parts of $G$,
the next unallocated vertex is not necessarily the first vertex of a new block.

\begin{theorem}\label{thm:generalupperbound}\label{thm:paths}
  The graph family $U^p_0, U^p_1, \ldots$ has the property that for $n \in \N_1$

  \begin{enumerate}[label=(\alph*)]
    \item $U^p_n$ has $\lfloor 3n/2 \rfloor$ vertices and $\max\set{0, 3\floor{\frac{n}{2}}-1}$ edges.\label{case:size}
    \item $U^p_n$ is an induced subgraph of $U^p_{n+1}$.\label{case:induced}
    \item $U^p_n$ is an induced universal graph for the family of acyclic graphs $G$ with $\Delta(G)\leq2$ and $n$ vertices.\label{case:main}
  \end{enumerate}
\end{theorem}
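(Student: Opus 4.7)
Parts (a) and (b) are direct consequences of the definition of $U^p_n$. For (a), $\abs{V[U^p_n]}=\floor{3n/2}$ is immediate, and the edge count follows by splitting the edges of $U^p_n$ into \emph{within-block} edges $(3j,3j+1)$ and $(3j+1,3j+2)$ and \emph{bridge} edges $(3j+2,3j+4)$, then doing a brief case analysis on the parity of $n$, yielding a total of $3\floor{n/2}-1$ for $n\ge 2$ and $0$ for $n\le 1$. For (b), since the function $s$ does not depend on $n$, the identity map is an embedding of $U^p_n$ into $U^p_{n+1}$: its image $[\floor{3n/2}]$ is contained in $[\floor{3(n+1)/2}]$, and an edge $(u,v)$ with $v=s(u)$ lies in $U^p_n$ iff both $u,v<\floor{3n/2}$.

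For part (c), fix $G\in\mathcal{AC}$ on $n$ vertices; then $G$ is a disjoint union of paths. Let $a_3$ and $a_1$ denote the number of $P_3$ and $P_1$ components of $G$. The plan is to embed the components of $G$ one at a time by applying the cases of \Cref{lem:combined} in the following order: (i) for each path $P_m$ with $m=2$ or $m\ge 4$, use case (a) if $m$ is even and case (b) if $m$ is odd (so $m\ge 5$); (ii) pair up the $P_3$ components and apply case (d) to each pair; (iii) the final step handles any remainder -- if $a_3$ is odd, apply case (e) with the leftover $P_3$ and all $a_1$ copies of $P_1$; else if $a_1>0$ apply case (c) to all $P_1$'s; otherwise stop. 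Each step $i$ uses a contiguous region $[p_{i-1},p_i)$ of $U^p_n$ starting where the previous step ended (with $p_0=0$), and a short inspection of the cases shows that its size $r_i$ satisfies $r_i\le\floor{3n_i/2}$, where $n_i$ is the number of $G$-vertices embedded in step $i$. Summing and using integrality, $\sum_i r_i\le\floor{3n/2}=\abs{V[U^p_n]}$, so the whole embedding fits into $U^p_n$.

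The main obstacle is verifying compatibility between consecutive steps: no edge of $U^p_n$ between vertices in distinct regions should join used vertices of distinct components of $G$. Within each region, correctness follows from \Cref{lem:combined}. Across regions, the only candidate edges of $U^p_n$ are the bridges $(3j+2,3j+4)$, which cross block boundaries. The key structural observation is that in cases (a), (b), and (d) the vertex $3j+2$ in the final block of the region is never among the vertices used by the embedding, so the outgoing bridge has an unused endpoint and cannot produce any spurious adjacency with vertices placed in the following region. Cases (c) and (e) do not share this property and can leave a used vertex near the right boundary of their region, but they are scheduled last, so no following region exists. Concatenating the individual embeddings thus yields a valid embedding function of $G$ into $U^p_n$, establishing (c).
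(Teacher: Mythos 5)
Your proposal is correct and follows essentially the same route as the paper: parts (a) and (b) are read off \Cref{def:upn}, and part (c) embeds the even paths, the odd paths of length at least $5$, the pairs of $P_3$'s, and finally the leftover $P_3$/$P_1$'s in that order via the cases of \Cref{lem:combined}, exactly as in the paper's proof of \Cref{thm:paths}. Your explicit observation that the vertex congruent to $2 \pmod 3$ in the last block of each region remains unused (so bridge edges cannot create spurious adjacencies between regions) is just an unpacking of the ``allocated'' bookkeeping the paper formalizes in \Cref{def:bb}, so there is no substantive difference.
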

\begin{proof}
Cases \ref{case:size} and \ref{case:induced} follow immediately from \Cref{def:upn}.

Say the input graph $G$ consists of set $S_{\mathrm{even}}$ of $P_{2k}$ for $k \geq 1$, set $S_{\mathrm{odd}}$ of $P_{2k+1}$ for $k \geq 2$, set $S_{3}$ of $P_3$, and set $S_{1}$ of $P_1$. Let the cardinality of each set be the number of paths in the set and let the total number of vertices in each set be denoted $l_{\mathrm{even}}$, $l_{\mathrm{odd}}$, $l_{3}$, and $l_1$ respectively. Observe that every acyclic graph with $\Delta(G)\leq2$ has such a partition. The proof direction is to embed the different parts of the input in the right order, such that our embedding strategy from \Cref{lem:combined} can be applied.

We start by embedding $S_{\mathrm{even}}$ by allocating the at most $3l_{\mathrm{even}}/2$ first vertices of $U^p_{n}$ by $|S_{\mathrm{even}}|$ invocations of \Cref{lem:combined} case \ref{case:p2k}. It follows that the next unallocated vertex is the first vertex of a new block, i.e., after embedding the even length paths we have spent at most $l_{\mathrm{even}}/2$ of blocks of size $3$. We allocate the next $\lfloor 3l_{\mathrm{odd}}/2 \rfloor$ vertices of $U^p_{n}$ by performing $|S_{\mathrm{odd}}|$ invocations of \Cref{lem:combined} case \ref{case:p2k1}. Again, the next unallocated vertex is the first vertex of a new block.

The next step is embedding $S_3$ and $S_1$. If $|S_3|$ is even then proceed by invoking \Cref{lem:combined} case \ref{case:2p3} $|S_3|/2$ times to embed all pairs of $P_3$. This allocation uses $3 l_3 /2$ vertices from $U^p_{n}$. We can then embed $S_1$ trivially as in \Cref{lem:combined} case \ref{case:p1}, allocating additionally $\lfloor 3 l_1 / 2 \rfloor$ vertices of $U^p_{n}$.
If $|S_3|$ is odd then allocate all the pairs using $|S_3|-1$ invocations of \Cref{lem:combined} case \ref{case:2p3}, which uses $3(l_3 - 3)/2$ vertices of $U^p_{n}$. We proceed by invoking \Cref{lem:combined} case \ref{case:p3p1} to embed the remaining $P_3$ along with $S_1$.

In total we need to allocate at most $\lfloor 3(l_{\mathrm{even}} + l_{\mathrm{odd}} + l_3 + l_1 )/2 \rfloor = \lfloor 3n/2 \rfloor$ vertices, and hence $U^p_{n}$ induces any acyclic graph $G$ on $n$ vertices where $\Delta(G)\leq2$.

\end{proof}

\section{Maximum degree $2$ upper bounds}\label{sec:max2upper}

In this section we prove upper bounds on $g_v(\mathcal{G}_2)$, and on
$g_v(\mathcal{F})$ for several special families
$\mathcal{F}\subseteq\mathcal{G}_2$.

\subsection{$2n-1$ upper bound for $g_v(\mathcal{G}_2)$}
\pgfdeclarelayer{background}
\pgfdeclarelayer{foreground}
\pgfsetlayers{background,main,foreground}

\newenvironment{tikzpicture-Un}[1][1]{%
  \begin{tikzpicture}[scale=0.5]
    \pgfmathtruncatemacro{\n}{#1};
    \pgfmathtruncatemacro{\vmax}{2*\n-2};
    \begin{scope}[
        vertex style/.style={
          draw,
          circle,
          minimum size=3mm,
          inner sep=0pt,
          outer sep=0pt%
        },
        selected vertex style/.style={
          draw,
          circle,
          fill=blue!20,
          minimum size=3mm,
          inner sep=0pt,
          outer sep=0pt%
        },
        selected edge style/.style={
          rounded corners,line width=1.5mm,blue!20,cap=round%
        }
      ]
      \foreach \i in {0,...,\vmax}{%
        \pgfmathtruncatemacro{\x}{div(\i+1,2)};
        \pgfmathtruncatemacro{\y}{mod(\i+1,2)*(2*mod(\x,2)-1)};
        \node[vertex style] (v\i) at (\x,\y) {\tiny $\i$};
      };
}{%
      \foreach \i in {0,...,\vmax}{%
        \pgfmathtruncatemacro{\nexti}{\i+1};
        \ifthenelse{\i=2 \OR \nexti>\vmax \OR \nexti=2}{%
        }{%
          \draw[thick,color=red] (v\i) -- (v\nexti);
        }

        \pgfmathtruncatemacro{\nexti}{4-mod(\i,2)+\i}
        \ifthenelse{\i=2 \OR \nexti>\vmax \OR \nexti=2}{%
        }{%
          \pgfmathtruncatemacro{\imod}{mod(\i,2)}
          \ifthenelse{\imod=1}{%
            \draw[thick,color=ForestGreen] (v\i) -- (v\nexti);
          }{%
            \draw[thick,color=blue] (v\i) -- (v\nexti);
          }
        }

      };
    \end{scope}
  \end{tikzpicture}
}%

\NewEnviron{tikzpicture-Un-induced}[2]{
  \begin{tikzpicture-Un}[#1]
    \BODY
    \begin{pgfonlayer}{background}
      \foreach \i in {#2}{%
        \node[selected vertex style] at (v\i) {\tiny $\i$};
      }
      \foreach \i in {#2}{%
        \ifthenelse{\i=2}{%
        }{%
          \foreach \j in {#2}{%
            \ifthenelse{\i<\j \AND \NOT \j=2}{%
              \pgfmathtruncatemacro{\nexti}{\i+1};
              \ifthenelse{\nexti=\j}{%
                \draw[selected edge style] (v\i.center) -- (v\j.center);
              }{}
              \pgfmathtruncatemacro{\nexti}{4-mod(\i,2)+\i}
              \ifthenelse{\nexti=\j}{%
                \draw[selected edge style] (v\i.center) -- (v\j.center);
              }{}
            }{
            }
          }
        }
      }
    \end{pgfonlayer}
  \end{tikzpicture-Un}
}%

\newcommand{\tikzUnInduced}[2][]{
  \begin{tikzpicture-Un-induced}{#2}{#1}
  \end{tikzpicture-Un-induced}
}

Here we prove that there exists an induced universal graph with $2n-1$
vertices and $4n-9$ edges for the family $\mathcal{G}_2$ of all graphs with $n$ vertices and maximum degree $2$.

\begin{definition}\label[definition]{def:Un}
  Let
  \begin{align*}
    s(x)&:=
    \begin{cases}
      x+4&\text{ if }x\equiv0\pmod{2}
      \\
      x+3&\text{ otherwise}
    \end{cases}
  \end{align*}
  and for any $n\in\mathbb{N}_0$ let $U_n$ be the graph with vertex set
  $[2n-1]$ and an edge $(u,v)$ iff $u,v\neq2$ and either
  $\abs{u-v}=1$ or $u=s(v)$ or $v=s(u)$. (See~\cref{fig:Un-example}).
\end{definition}

\begin{figure}[h!]
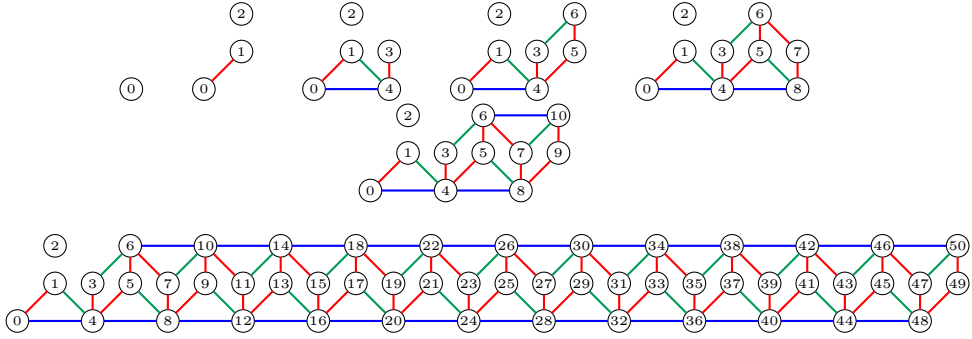

  \begin{center}
    \tikzUnInduced{1}~
    \tikzUnInduced{2}~
    \tikzUnInduced{3}~
    \tikzUnInduced{4}~
    \tikzUnInduced{5}~
    \tikzUnInduced{6}
    \\
    ~\\
    \tikzUnInduced{26}
    \caption{$U_1,\ldots,U_6$, and $U_{26}$ with each $(u,v)$ colored
      \textcolor{red}{red}/\textcolor{ForestGreen}{green}/\textcolor{blue}{blue}
      if $\abs{u-v}=1/3/4$.}
    \label{fig:Un-example}
  \end{center}
\end{figure}


\begin{theorem}
  The graph family $U_0, U_1, \ldots $ has the property that for $n\in\mathbb{N}_0$
  \begin{enumerate}[label=(\alph*)]
  \item $U_n$ has $\max\set{0,2n-1}$ vertices, and $\max\set{0,n-1,3n-5,4n-9}$ edges.
  \item $U_n$ is an induced subgraph of $U_{n+1}$.
  \item $U_n$ is an induced universal graph for the family of graphs with $n$ vertices and maximum degree $2$.
  \end{enumerate}
\end{theorem}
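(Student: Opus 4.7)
Parts (a) and (b) follow directly from \Cref{def:Un}. For (a), the vertex count equals $\abs{[2n-1]}$, while the edges split into consecutive pairs $(u,u+1)$ with $u,u+1 \neq 2$ (giving $2n-4$ edges for $n \geq 3$) and jump pairs $(u,s(u))$ with $u,s(u) \neq 2$ and $s(u) \leq 2n-2$ (giving $2n-5$ edges for $n \geq 4$); summing yields $4n-9$ for $n \geq 4$, and the small cases $n \in \{0,1,2,3\}$ are checked by inspection. For (b), $V[U_n] = [2n-1] \subseteq [2n+1] = V[U_{n+1}]$, and since the adjacency predicate refers only to the numerical values of the endpoints, the edges of $U_{n+1}$ restricted to $[2n-1]$ coincide with those of $U_n$.

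Part (c) is the main technical content. The plan is to decompose an arbitrary $H \in \mathcal{G}_2$ on $n$ vertices into its connected components---each an isolated vertex, a path of length $\geq 2$, or a cycle---and embed every component into a disjoint portion of $U_n$ via an explicit construction. Because every edge of $U_n$ has length at most $4$, placing the embeddings at sufficient distance along $[0,2n-2]$ automatically guarantees the induced subgraph on the image has no unwanted cross-edges between components.

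The key explicit cycle embeddings are the following. The triangle is embedded as $C_3 \cong \{v,v+1,v+4\}$ for any even $v$ with $v \neq 2$, using the three edges $(v,v+1)$, $(v,s(v))=(v,v+4)$, and $(v+1,s(v+1))=(v+1,v+4)$. For $k \geq 4$ and any odd $v \geq 3$ with $v+2k-5 \leq 2n-2$,
\[
  C_k \cong \{v,v+1\} \cup \{v+3,v+5,\ldots,v+2k-7\} \cup \{v+2k-6,v+2k-5\},
\]
whose $k$ cycle-edges consist of the consecutive edges at the two endpoints together with the $s$-jumps $s(v)=v+3$, $s(v+1)=v+5$, $s(v+3)=v+7$, \ldots, $s(v+2k-9)=v+2k-5$; a routine case analysis on the pairwise differences of elements of this set, which lie outside $\{1,3,4\}$ except for the listed cycle-edges, confirms the absence of chords. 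Paths are embedded by truncating the same cyclic pattern (deleting one vertex to break the cycle), and isolated vertices are placed on $\{2\} \cup \{1,3,5,\ldots,2n-3\}$, which forms an independent set since all its pairwise differences are even while $s$ only maps to even vertices.

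To assemble an embedding of $H$, the plan is to process its components in a fixed order (for instance, cycles first in decreasing length, then paths in decreasing length, and finally isolated vertices), allocating each to the leftmost available window of $U_n$ with a constant-size buffer separating it from the preceding component. The main obstacle is the tight accounting: the total number of used vertices must stay within the budget $2n-1$. This requires exploiting that each cycle $C_k$ occupies exactly $k$ vertices with no overhead, so that the surplus $n-1$ extra vertices of $U_n$ suffice to absorb the overhead of paths and of the inter-component buffers. A careful case analysis over component mixes---most delicately the configurations of many short components (e.g.~$P_3$s) that attain the Esperet et al.~lower bound $11\floor{n/6}$---is needed to verify that the budget is never exceeded.
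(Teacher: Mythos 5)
Parts (a) and (b) of your proposal are fine, and your local cycle pattern is in substance the one the paper uses (your set $\{v,v+1\}\cup\{v+3,v+5,\ldots,v+2k-7\}\cup\{v+2k-6,v+2k-5\}$ is, up to translation, exactly the paper's $V^C$). The gap is in part (c), and it is precisely the part that carries the theorem: the exact accounting inside $2n-1$ vertices. Your sketch defers it (``a careful case analysis \ldots is needed to verify that the budget is never exceeded''), and the concrete plan you do give would not deliver the bound. The claim that a cycle ``occupies exactly $k$ vertices with no overhead'' is not right: your $C_k$ pattern is spread over a window of $2k-4$ consecutive indices, and the $k-4$ skipped indices inside that window sit adjacent to used vertices, so they are not freely reusable; the true cost is about $2k-4$ indices plus whatever separation you add. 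Worse, the ``constant-size buffer'' idea is exactly what breaks down on graphs with many small components: edges of $U_n$ have lengths $1,3,4$, so a parity-oblivious separation must be as large as $4$, and then a graph consisting of $n/2$ copies of $P_2$ (or $n/3$ copies of $P_3$ or $C_3$) costs roughly $3n$ indices, well above $2n-1$. Fitting such graphs requires parity-aware packings at density exactly two indices per vertex, which your proposal does not supply. (A smaller slip: your list of cycle edges omits the consecutive edge $(v+2k-7,v+2k-6)$, and the stated jump list degenerates for $k\in\{4,5\}$, although the vertex set itself does induce $C_k$.)

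The paper closes this gap with two ingredients you would need analogues of. First, an induction on the number of $P_{\geq 3}$ and $C_{\geq 4}$ components: such a component on $k$ vertices is embedded by the explicit sets $V^P$ or $V^C$ inside the top $2k$ indices $I^+=[2n-1]\setminus[2(n-k)-1]$, chosen so that \emph{every} neighbor of every used vertex already lies in $I^+$; hence no buffer is needed at all, and the rest of the graph is handled recursively inside the copy of $U_{n-k}$ induced on $[2(n-k)-1]$, so each large component consumes exactly $2k$ fresh indices. Second, an explicit base case packs all remaining $C_3$, $P_2$, $P_1$ components at density exactly two indices per vertex, using the periodic offset patterns $\{0,1,4\}\pmod 6$ for triangles and $\{1,2,4,7\}\pmod 8$ for $P_2$'s, with $P_1$'s on odd indices plus the isolated vertex $2$ (this last trick is what makes $2n-1$ rather than $2n$ suffice). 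Without an argument of this precision, your approach establishes only a bound of the form $2n+O(\text{number of components})$, not $2n-1$.
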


\begin{proof}
  If $n=0$, $U_n$ has $0$ vertices. Otherwise the number of vertices in
  $U_n$ is trivially $2n-1$ from the definition.  It is also clear
  that $U_0$ and $U_1$ each have $0$ edges, $U_2$ has $1$ edge, $U_3$ has $4$ edges,
  and $U_4$ has $7$ edges.  Finally for $n>4$, $U_{n}$ has exactly
  $4$ edges more than $U_{n-1}$, and therefore has $4(n-4)+7 = 4n-9$
  edges, as desired.

  Since the existence of an edge $(u,v)$ does not depend on $n$, the
  subgraph of $U_{n+1}$ induced by all vertices with label $\leq2n-2$ is
  exactly $U_n$.

  For the final part, consider a graph $G\in\mathcal{G}_2$.  We
  need to show that $G$ is an induced subgraph of $U_n$.  The proof is
  by induction on the number of $P_{\set{\geq3}}$ and
  $C_{\set{\geq4}}$ components in $G$.  If there are no such
  components, all components are either $P_1$, $P_2$, or $C_3$.
  Suppose therefore that $G \simeq k_1\times{}P_1 + k_2\times{}P_2 +
  k_3\times{}C_3$ for some $k_1,k_2,k_3\in\mathbb{N}_0$, and let
  $n_1=k_1$, $n_2=2k_2$, and $n_3=3k_3$ (so $n=n_1+n_2+n_3$).
  Further, assume that $n>1$ since otherwise it is trivial.
  Informally, we will show that assigning labels greedily, smallest
  label first, in the order $C_3$, $P_2$, $P_1$ is sufficient.
  Formally, let $\set{I_3,I_2,I_1}$ be the partition of
  $\set{-1,\ldots,2n-2}$ into parts of size $2n_3$, $2n_2$, and $2n_1$
  such that $i_3<i_2<i_1$ for all $(i_3,i_2,i_1)\in
  I_3\times{}I_2\times{}I_1$, and let $A_3:=I_3\setminus\set{-1,2}$,
  $A_2:=I_2\setminus\set{-1,2}$, and $A_1:=(I_1\cup\set{2})\setminus\set{-1}$.
  Then $\set{A_3,A_2,A_1}$ is a partition of $V[U_n]$.  Now let
  \begin{itemize}
  \item $V_3:=\set{i\in A_3\cond i\in\set{0,1,4}\pmod{6}}$
  \item $V_2:=\set{i\in A_2\cond i-(6n_3-1)\in\set{1,2,4,7}\pmod{8}}$
  \item $V_1:=
    \begin{cases}
      \emptyset&\text{if }n_1=0\\
      \set{2}&\text{if }n_1=1\\
      \set{2,2(n-n_1)}\cup\set{i\in A_1\cond i\equiv1\pmod{2}\wedge i\geq2(n-n_1)+3}&\text{otherwise}
    \end{cases}$
  \end{itemize}
  Let $V=V_1\cup{}V_2\cup{}V_3$ and let $G'$ be the subgraph of $U_n$ induced by $V$.  We claim that $G\simeq{}G'$ (see~\cref{fig:V1V2V3}).
  \begin{figure}[h!]
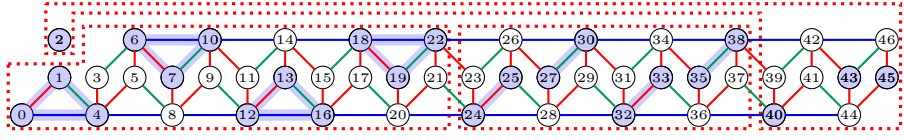

    \begin{center}
      \begin{tikzpicture-Un-induced}{24}{
          0,1,4, 6,7,10, 12,13,16, 18,19,22,
          24,25, 27,30, 32,33, 35,38,
          2,40,43,45}

        \draw[very thick,dotted,red] ($(v0.north west)+(-0.15,1.15)$)
        -- ($(v3.north west)+(-0.15,0.15)$)
        -- ($(v3.north west)+(-0.15,1.15)$)
        -- ($(v22.north east)+(+0.15,0.15)$)
        -- ($(v21.south east)+(+0.15,-1.15)$)
        -- ($(v0.south west)+(-0.15,-0.15)$)
        -- cycle;

        \draw[very thick,dotted,red] ($(v24.south west)+(-0.15,-0.15)$) rectangle  ($(v38.north east)+(0.15,0.15)$);

        \draw[very thick,dotted,red] ($(v2.north west)+(-0.15,0.75)$)
        -- ($(v46.north east)+(0.15,0.75)$)
        -- ($(v45.south east)+(0.15,-1.15)$)
        -- ($(v40.south west)+(-0.15,-0.15)$)
        -- ($(v39.north west)+(-0.15,1.5)$)
        -- ($(v2.north east)+(0.15,0.5)$)
        -- ($(v2.south east)+(0.15,-0.15)$)
        -- ($(v2.south west)+(-0.15,-0.15)$)
        -- cycle;

      \end{tikzpicture-Un-induced}
      \caption{$U_{24}$ with $4\times{}P_1+4\times{}P_2+4\times{}C_3$
        embedded.  The dotted red boxes represent $A_1$, $A_2$, and
        $A_3$ respectively.  $V_1$, $V_2$, and $V_3$ consist of the
        marked vertices in each of the dotted red boxes.}
      \label{fig:V1V2V3}
    \end{center}
  \end{figure}

  \noindent
  Now it follows from~\cref{def:Un} that for $v\in V$ the neighbors of $v$ in $G'$ are:
  \begin{align*}
    N(v) &:= (V\setminus\set{2})\cap
    \begin{cases}
      \emptyset &\text{ if }v=2\\
      \set{v-4,v-3,v-1,v+1,v+4} &\text{ if }v\neq2 \wedge v\equiv0\pmod{2}\\
      \set{v-1,v+1,v+3} &\text{ otherwise}
    \end{cases}
  \end{align*}
  To see that $G\simeq G'$, first note that $\abs{V_1}=n_1$ and $N(v_1)=\emptyset$ for all $v_1\in{}V_1$. Thus the component of $v_1$ in $G'$ is a $P_1$.
  Second, note that $\abs{V_2}=n_2$ and that each vertex $v_2\in{}V_2$
  has the form $v_2=b_2+k$ with $b_2=(6n_3-1)+8j\equiv1\pmod{2}$ for some
  $j\in[\floor{\frac{n_2}{4}}], k\in\set{1,2,4,7}$.  Now
  $N(b_2+1)=\set{b_2+2}\subseteq V_2$, $N(b_2+2)=\set{b_2+1}\subseteq V_2$,
  $N(b_2+4)=\set{b_2+7}\subseteq V_2$, and $N(b_2+7)=\set{b_2+4}\subseteq
  V_2$, so $v_2$ has exactly one neighbor in $V$, and this neighbor is
  also in $V_2$.  Thus the component of $v_2$ in $G'$ is a $P_2$.
  Third, note that $\abs{V_3}=n_3$, and that each vertex $v_3\in{}V_3$ has the
  form $v_3=b_3+k$ with $b_3=6j\equiv0\pmod{2}$ for some $j\in[\frac{n_3}{3}],
  k\in\set{0,1,4}$. Now $N(b_3+0)=\set{b_3+1,b_3+4}\subseteq V_3$,
  $N(b_3+1)=\set{b_3+0,b+4}\subseteq V_3$, and
  $N(b_3+4)=\set{b_3+0,b+1}\subseteq V_3$, so the component of $v_3$ in
  $G'$ consists of the $3$ vertices $\set{b_3,b_3+1,b_3+4}$, which form a
  $C_3$.
  Thus $G'\simeq n_1\times{}P_1 + \frac{n_2}{2}\times{}P_2 + \frac{n_3}{3}\times{}C_3 \simeq k_1\times{}P_1 + k_2\times{}P_2 +
  k_3\times{}C_3 \simeq G$.

  For the induction case, suppose $G$ has a component $X$ which is
  either a $P_k$ for some $k\geq3$, or a $C_k$ for some $k\geq4$.  In
  either case, let $I^-:=[2(n-k)-1]$ and
  $I^+:=[2n-1]\setminus{}I^-$.  Then $I^-$ induces an $U_{n-k}$
  subgraph in $U_n$, which by induction has $G-X$ as induced subgraph.
  Thus all we need to show is that we can extend this to an embedding of
  $G$ by using using only vertices from $I^+$ to embed $X$
  (see~\cref{fig:VP-VC}).
\begin{figure}[h!]
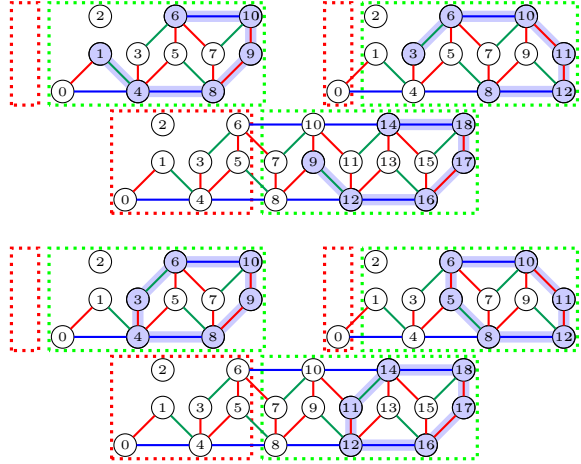
%
    \begin{center}%
      \begin{tikzpicture-Un-induced}{6}{1,4,6,8,9,10}
        \draw[very thick,dotted,red] ($(v0.south west)+(-1.15,-0.15)$) rectangle ($(v0.north east)+(-0.85,2.15)$);
        \draw[very thick,dotted,green] ($(v0.south west)+(-0.15,-0.15)$) rectangle ($(v10.north east)+(0.15,0.15)$);
      \end{tikzpicture-Un-induced}
      ~~~
      \begin{tikzpicture-Un-induced}{7}{3,6,8,10,11,12}
        \draw[very thick,dotted,red] ($(v0.south west)+(-0.15,-0.15)$) rectangle ($(v0.north east)+(0.15,2.15)$);
        \draw[very thick,dotted,green] ($(v2.north west)+(-0.15,0.15)$) rectangle ($(v12.south east)+(0.15,-0.15)$);
      \end{tikzpicture-Un-induced}
      ~~~
      \begin{tikzpicture-Un-induced}{10}{9,12,14,16,17,18}
        \draw[very thick,dotted,red] ($(v0.south west)+(-0.15,-0.15)$) rectangle ($(v6.north east)+(0.15,0.15)$);
        \draw[very thick,dotted,green] ($(v8.south west)+(-0.15,-0.15)$) rectangle ($(v18.north east)+(0.15,0.15)$);
      \end{tikzpicture-Un-induced}%
      \\
      ~
      \\
      \begin{tikzpicture-Un-induced}{6}{3,4,6,8,9,10}
        \draw[very thick,dotted,red] ($(v0.south west)+(-1.15,-0.15)$) rectangle ($(v0.north east)+(-0.85,2.15)$);
        \draw[very thick,dotted,green] ($(v0.south west)+(-0.15,-0.15)$) rectangle ($(v10.north east)+(0.15,0.15)$);
      \end{tikzpicture-Un-induced}
      ~~~
      \begin{tikzpicture-Un-induced}{7}{5,6,8,10,11,12}
        \draw[very thick,dotted,red] ($(v0.south west)+(-0.15,-0.15)$) rectangle ($(v0.north east)+(0.15,2.15)$);
        \draw[very thick,dotted,green] ($(v2.north west)+(-0.15,0.15)$) rectangle ($(v12.south east)+(0.15,-0.15)$);
      \end{tikzpicture-Un-induced}
      ~~~
      \begin{tikzpicture-Un-induced}{10}{11,12,14,16,17,18}
        \draw[very thick,dotted,red] ($(v0.south west)+(-0.15,-0.15)$) rectangle ($(v6.north east)+(0.15,0.15)$);
        \draw[very thick,dotted,green] ($(v8.south west)+(-0.15,-0.15)$) rectangle ($(v18.north east)+(0.15,0.15)$);
      \end{tikzpicture-Un-induced}%
      \\
      \caption{$P_6$ (top) and $C_6$ (bottom) embedded in $U_{6}$
        (left), $U_{7}$ (middle), and $U_{10}$ (right).  In each case
        the dotted red and green boxes represent $I^-$ and $I^+$
        respectively, and $V^P$ or $V^C$ is the marked vertices in the
        green box.  Notice that the contents of the red boxes is
        $U_0$ (left), $U_1$ (middle), and $U_4$ (right).}
      \label{fig:VP-VC}
    \end{center}
\end{figure}

  \noindent
  Now let
  \begin{itemize}
  \item $V^P:=\set{2(n-k)+1, 2n-3}\cup\set{i\in I^+\cond i\geq 2(n-k)+4 \wedge i\equiv0\pmod{2}}$
  \item $V^C:=\set{2(n-k)+3, 2n-3}\cup\set{i\in I^+\cond i\geq 2(n-k)+4 \wedge i\equiv0\pmod{2}}$
  \end{itemize}
  Now for $k\geq3$ the subgraph induced by $V^P$ in $U_n$ is $P_k$, and for all $v\in{}V^P$ all neighbors to $v$ are in $I^+$.
  Similarly, for $k\geq4$ the subgraph induced by $V^P$ in $U_n$ is $C_k$, and for all $v\in{}V^C$ all neighbors to $v$ are in $I^+$.
  Thus, either $V^P$ or $V^C$ can be used to extend the embedding of $G-X$ in $U_{n-k}$ to an embedding of $G$ in $U_n$.
\end{proof}


\subsection{$\frac{11}{6}n+\Oo(1)$ upper bound when all components are small}

Esperet \etal~\cite{Esperet2008} showed that $g_v(\mathcal{G}_2)\geq
11 \floor{n/6}$ by considering a specific family of graphs whose
largest component had $3$ vertices. We used the same idea in our proof of \Cref{thm:pathlower}.
A natural attempt
to improve the lower bound would be to include larger components.
However, as the following shows, considering components with $4$, $5$,
or $6$ vertices is not sufficient.

\begin{figure}[h!]
  \begin{tikzpicture*}{\textwidth}
    \begin{scope}[every node/.append style={
          draw,
          circle,
          minimum size=2mm,
          inner sep=0pt,
          outer sep=0pt%
      }]

      \node (v0) at (1,1) {};
      \node (v1) at (2,1) {};
      \node (v2) at (3,1) {};
      \node (v3) at (1,2) {};
      \node (v4) at (2,2) {};
      \node (v5) at (3,2) {};
      \node (v6) at (1,3) {};
      \node (v7) at (2,3) {};
      \node (v8) at (3,3) {};
      \node (v9) at (1,4) {};
      \node (v10) at (3,4) {};
      \node (v11) at (1,5) {};
      \node (v12) at (2,5) {};
      \node (v13) at (3,5) {};
      \node (v14) at (1,6) {};
      \node (v15) at (2,6) {};
      \node (v16) at (3,6) {};
      \node (v17) at (1,7) {};
      \node (v18) at (2,7) {};
      \node (v19) at (3,7) {};

      \draw (v0) -- (v1);
      \draw (v1) -- (v2);
      \draw (v0) -- (v3);
      \draw (v1) -- (v3);
      \draw (v2) -- (v4);
      \draw (v2) -- (v5);
      \draw (v3) -- (v4);
      \draw (v3) -- (v6);
      \draw (v4) -- (v6);
      \draw (v5) -- (v7);
      \draw (v5) -- (v8);
      \draw (v6) -- (v7);
      \draw (v7) -- (v8);
      \draw (v7) -- (v9);
      \draw (v7) -- (v10);
      \draw (v9) -- (v12);
      \draw (v10) -- (v12);
      \draw (v11) -- (v12);
      \draw (v11) -- (v14);
      \draw (v12) -- (v13);
      \draw (v12) -- (v14);
      \draw (v13) -- (v15);
      \draw (v13) -- (v16);
      \draw (v14) -- (v17);
      \draw (v15) -- (v16);
      \draw (v15) -- (v17);
      \draw (v16) -- (v18);
      \draw (v16) -- (v19);
      \draw (v17) -- (v18);
      \draw (v18) -- (v19);

      \node (v20) at (1+4,1) {};
      \node (v21) at (2+4,1) {};
      \node (v22) at (3+4,1) {};
      \node (v23) at (1+4,2) {};
      \node (v24) at (2+4,2) {};
      \node (v25) at (3+4,2) {};
      \node (v26) at (1+4,3) {};
      \node (v27) at (2+4,3) {};
      \node (v28) at (3+4,3) {};
      \node (v29) at (1+4,4) {};
      \node (v30) at (3+4,4) {};
      \node (v31) at (1+4,5) {};
      \node (v32) at (2+4,5) {};
      \node (v33) at (3+4,5) {};
      \node (v34) at (1+4,6) {};
      \node (v35) at (2+4,6) {};
      \node (v36) at (3+4,6) {};
      \node (v37) at (1+4,7) {};
      \node (v38) at (2+4,7) {};
      \node (v39) at (3+4,7) {};

      \draw (v20) -- (v21);
      \draw (v21) -- (v22);
      \draw (v20) -- (v23);
      \draw (v21) -- (v23);
      \draw (v22) -- (v24);
      \draw (v22) -- (v25);
      \draw (v23) -- (v24);
      \draw (v23) -- (v26);
      \draw (v24) -- (v26);
      \draw (v25) -- (v27);
      \draw (v25) -- (v28);
      \draw (v26) -- (v27);
      \draw (v27) -- (v28);
      \draw (v27) -- (v29);
      \draw (v27) -- (v30);
      \draw (v29) -- (v32);
      \draw (v30) -- (v32);
      \draw (v31) -- (v32);
      \draw (v31) -- (v34);
      \draw (v32) -- (v33);
      \draw (v32) -- (v34);
      \draw (v33) -- (v35);
      \draw (v33) -- (v36);
      \draw (v34) -- (v37);
      \draw (v35) -- (v36);
      \draw (v35) -- (v37);
      \draw (v36) -- (v38);
      \draw (v36) -- (v39);
      \draw (v37) -- (v38);
      \draw (v38) -- (v39);

      \node (v40) at (1+14,1) {};
      \node (v41) at (2+14,1) {};
      \node (v42) at (3+14,1) {};
      \node (v43) at (1+14,2) {};
      \node (v44) at (2+14,2) {};
      \node (v45) at (3+14,2) {};
      \node (v46) at (1+14,3) {};
      \node (v47) at (2+14,3) {};
      \node (v48) at (3+14,3) {};
      \node (v49) at (1+14,4) {};
      \node (v50) at (3+14,4) {};
      \node (v51) at (1+14,5) {};
      \node (v52) at (2+14,5) {};
      \node (v53) at (3+14,5) {};
      \node (v54) at (1+14,6) {};
      \node (v55) at (2+14,6) {};
      \node (v56) at (3+14,6) {};
      \node (v57) at (1+14,7) {};
      \node (v58) at (2+14,7) {};
      \node (v59) at (3+14,7) {};

      \draw (v40) -- (v41);
      \draw (v41) -- (v42);
      \draw (v40) -- (v43);
      \draw (v41) -- (v43);
      \draw (v42) -- (v44);
      \draw (v42) -- (v45);
      \draw (v43) -- (v44);
      \draw (v43) -- (v46);
      \draw (v44) -- (v46);
      \draw (v45) -- (v47);
      \draw (v45) -- (v48);
      \draw (v46) -- (v47);
      \draw (v47) -- (v48);
      \draw (v47) -- (v49);
      \draw (v47) -- (v50);
      \draw (v49) -- (v52);
      \draw (v50) -- (v52);
      \draw (v51) -- (v52);
      \draw (v51) -- (v54);
      \draw (v52) -- (v53);
      \draw (v52) -- (v54);
      \draw (v53) -- (v55);
      \draw (v53) -- (v56);
      \draw (v54) -- (v57);
      \draw (v55) -- (v56);
      \draw (v55) -- (v57);
      \draw (v56) -- (v58);
      \draw (v56) -- (v59);
      \draw (v57) -- (v58);
      \draw (v58) -- (v59);

      \node (v60) at (11,1) {};
      \node (v61) at (11,2) {};
      \node (v62) at (10,3) {};
      \node (v63) at (11,3) {};
      \node (v64) at (12,3) {};
      \node (v65) at (10,4) {};
      \node (v66) at (12,4) {};
      \node (v67) at ( 9,5) {};
      \node (v68) at (13,5) {};
      \node (v69) at (10,6) {};
      \node (v70) at (12,6) {};
      \node (v71) at (10,7) {};
      \node (v72) at (11,7) {};
      \node (v73) at (12,7) {};
      \node (v74) at (11,8) {};
      \node (v75) at (11,9) {};

      \draw (v60) -- (v61);
      \draw (v61) -- (v62);
      \draw (v61) -- (v63);
      \draw (v61) -- (v64);
      \draw (v62) -- (v65);
      \draw (v63) -- (v65);
      \draw (v63) -- (v66);
      \draw (v64) -- (v66);
      \draw (v65) -- (v66);
      \draw (v65) -- (v67);
      \draw (v66) -- (v68);
      \draw (v67) -- (v69);
      \draw (v68) -- (v70);
      \draw (v69) -- (v70);
      \draw (v69) -- (v71);
      \draw (v69) -- (v72);
      \draw (v70) -- (v72);
      \draw (v70) -- (v73);
      \draw (v71) -- (v74);
      \draw (v72) -- (v74);
      \draw (v73) -- (v74);
      \draw (v74) -- (v75);

      \node (v76) at (11,13) {};

      \node (v77) at ( 8,10) {};
      \node (v78) at ( 9,10) {};
      \node (v79) at ( 7,11) {};
      \node (v80) at ( 8,11) {};
      \node (v81) at (10,11) {};
      \node (v82) at ( 8,12) {};
      \node (v83) at ( 9,12) {};

      \draw (v77) -- (v78);
      \draw (v77) -- (v79);
      \draw (v78) -- (v80);
      \draw (v78) -- (v81);
      \draw (v78) -- (v83);
      \draw (v79) -- (v80);
      \draw (v79) -- (v82);
      \draw (v81) -- (v83);
      \draw (v82) -- (v83);

      \node (v84) at (22- 8,10) {};
      \node (v85) at (22- 9,10) {};
      \node (v86) at (22- 7,11) {};
      \node (v87) at (22- 8,11) {};
      \node (v88) at (22-10,11) {};
      \node (v89) at (22- 8,12) {};
      \node (v90) at (22- 9,12) {};

      \draw (v84) -- (v85);
      \draw (v84) -- (v86);
      \draw (v85) -- (v87);
      \draw (v85) -- (v88);
      \draw (v85) -- (v90);
      \draw (v86) -- (v87);
      \draw (v86) -- (v89);
      \draw (v88) -- (v90);
      \draw (v89) -- (v90);

      \draw (v75) -- (v78);
      \draw (v75) -- (v85);
      \draw (v76) -- (v83);
      \draw (v76) -- (v90);

      \node (v91) at ( 2, 9) {};
      \node (v92) at ( 6, 9) {};
      \node (v93) at ( 4,10) {};
      \node (v94) at ( 3,11) {};
      \node (v95) at ( 5,11) {};
      \node (v96) at ( 4,12) {};
      \node (v97) at ( 2,13) {};
      \node (v98) at ( 6,13) {};
      \node (v99) at ( 1,11) {};

      \draw (v91) -- (v92);
      \draw (v91) -- (v93);
      \draw (v91) -- (v97);
      \draw (v92) -- (v93);
      \draw (v93) -- (v94);
      \draw (v93) -- (v95);
      \draw (v94) -- (v96);
      \draw (v95) -- (v96);
      \draw (v96) -- (v97);
      \draw (v96) -- (v98);
      \draw (v97) -- (v98);

      \draw (v97) -- (v99);

      \node (v100) at (22- 2, 9) {};
      \node (v101) at (22- 6, 9) {};
      \node (v102) at (22- 4,10) {};
      \node (v103) at (22- 3,11) {};
      \node (v104) at (22- 5,11) {};
      \node (v105) at (22- 4,12) {};
      \node (v106) at (22- 2,13) {};
      \node (v107) at (22- 6,13) {};
      \node (v108) at (22- 1,11) {};

      \draw (v100) -- (v101);
      \draw (v100) -- (v102);
      \draw (v100) -- (v106);
      \draw (v101) -- (v102);
      \draw (v102) -- (v103);
      \draw (v102) -- (v104);
      \draw (v103) -- (v105);
      \draw (v104) -- (v105);
      \draw (v105) -- (v106);
      \draw (v105) -- (v107);
      \draw (v106) -- (v107);

      \draw (v106) -- (v108);

      \draw (v79) -- (v92);
      \draw (v79) -- (v98);
      \draw (v86) -- (v101);
      \draw (v86) -- (v107);

      \node (v109) at (19,4) {};

    \end{scope}
  \end{tikzpicture*}
  \caption{This graph on $110$ nodes embeds all graphs on $60$ nodes
    whose maximum degree is $2$ and whose largest component has size
    at most $6$.}
  \label{fig:116-small}
\end{figure}
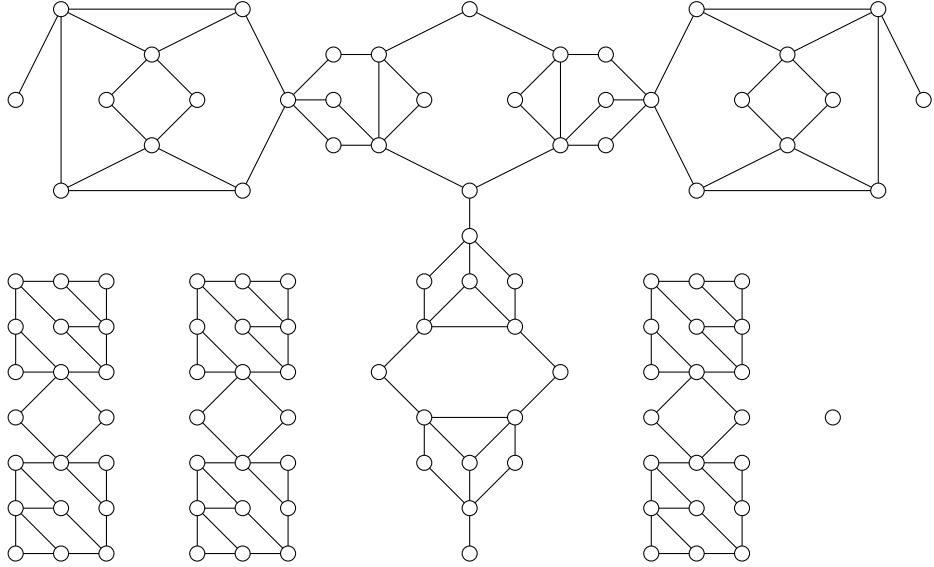

\begin{theorem}
  For any $n\in\mathbb{N}$ there exists a graph with
  $\frac{11}{6}n+\mathcal{O}(1)$ vertices, that contains as induced
  subgraphs all graphs on $n$ vertices whose maximum degree is $2$ and
  whose largest component has at most $6$ vertices.
\end{theorem}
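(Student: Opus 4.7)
The plan is to exhibit an explicit universal graph. For general $n$, take $\lceil n/60 \rceil$ vertex-disjoint copies of a fixed $110$-vertex base gadget of the type shown in Figure~\ref{fig:116-small}, plus a constant-size correction for the residue modulo $60$; this has $\frac{11}{6}n + O(1)$ vertices. The reduction to the base case is immediate: since every component of an input $G$ has at most $6$ vertices, a greedy packing partitions the components of $G$ into groups of at most $60$ vertices each, and each group is itself a valid input that can be embedded in a separate copy of the base gadget.

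The crux is therefore to show that the $110$-vertex base gadget embeds every valid $60$-vertex input. Such an input is determined up to isomorphism by the ten multiplicities $a_1,\dots,a_6,b_3,\dots,b_6$ of the component types $P_1,\dots,P_6,C_3,\dots,C_6$, subject to $\sum_{i=1}^{6} i\,a_i + \sum_{j=3}^{6} j\,b_j = 60$. The number of such multisets is finite, so the verification reduces to a finite case analysis. I would organize the base gadget into two kinds of regions: three $20$-vertex outer blocks (visible as the three grid-like patterns in Figure~\ref{fig:116-small}), each modelled on the construction underlying Esperet \etal's tight $11\floor{n/6}$ lower bound and tailored to host flexible mixes of $P_1$, $P_2$, $P_3$, and $C_3$; and a central ``flexibility'' region formed by the remaining $50$ vertices, through which the longer components $P_4,P_5,P_6,C_4,C_5,C_6$ are routed.

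The main obstacle is the verification for inputs in which long components and many short components coexist, since the long components must be embedded in the central region while leaving the outer blocks with enough slots for the remaining short components. To handle this I expect to prove an exchange lemma, showing that inside any single outer block a $C_3$ can be traded for a $P_3$, a $P_3$ for a $P_2+P_1$, and a $P_2$ for two $P_1$s without exceeding that block's capacity, and symmetrically that the central region can host any admissible multiset of long components. Combining an exchange argument among the three outer blocks with the flexibility of the central region should cover every case. A sanity check is that the coefficient $\frac{11}{6}$ matches the Esperet \etal~lower bound of $11\floor{n/6}$, so this ratio cannot be improved and the construction is tight up to lower-order terms.
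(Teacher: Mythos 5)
There is a genuine gap, and it sits exactly where you put the ``crux''. Your plan needs the $110$-vertex gadget $H$ to embed \emph{every} admissible mixed multiset of components totalling (at most) $60$ vertices, and this you do not prove: the two-region decomposition of $H$ and the ``exchange lemma'' are only conjectured, and checking all multisets $(a_1,\dots,a_6,b_3,\dots,b_6)$ with $\sum i\,a_i+\sum j\,b_j\le 60$ is a large verification that your sketch does not carry out. The paper's proof is arranged precisely so that this strong mixed statement is never needed. It only verifies, by inspection, ten \emph{single-type} facts: $H$ contains $60\times P_1$, $30\times P_2$, $20\times P_3$, $20\times C_3$, $15\times P_4$, $15\times C_4$, $12\times P_5$, $12\times C_5$, $10\times P_6$, and $10\times C_6$ as induced subgraphs. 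It then peels off $G$ in batches consisting of exactly $60$ vertices' worth of a \emph{single} component type (possible because every component size $1,\dots,6$ divides $60$), embedding each batch in its own copy of $H$; once no type has a full batch left, the residue has at most $(60-1)+(60-2)+2(60-3)+2(60-4)+2(60-5)+2(60-6)=561$ vertices and is absorbed, again split by type, into at most $10$ further copies. This gives at most $\lfloor n/60\rfloor+10$ copies, i.e.\ $\frac{11}{6}n+O(1)$ vertices, with no mixed-type embedding ever required.

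A second, quantitative problem is your counting. A greedy packing of components of size at most $6$ into groups of at most $60$ vertices does not yield $\lceil n/60\rceil$ groups: each group can be deficient by up to $5$ vertices, and in the worst case (e.g.\ orderings that repeatedly close groups at $55$ vertices) you get roughly $n/55$ groups, hence about $2n+O(1)$ vertices in total rather than $\frac{11}{6}n+O(1)$. The linear accumulation of per-group slack is exactly what the paper's per-type batching avoids, since single-type batches can always be made to contain exactly $60$ vertices and all slack is confined to the constant-size residue. You could repair the count by sorting components by type before packing, but then the boundary groups are mixed, so you are thrown back on the unproven mixed-embedding claim; the clean route is the paper's: never mix types within a copy of $H$.
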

\begin{proof}
  Let $G$ be a graph with $n$ vertices and maximum degree $2$ whose
  largest component has at most $6$ vertices.  Let $H$ be the graph on
  $110$ vertices depicted in Figure~\ref{fig:116-small}.  It is easy
  to see by inspection that this graph has each of $60\times P_1$,
  $30\times P_2$, $20\times P_3$, $20\times C_3$, $15\times P_4$,
  $15\times C_4$, $12\times P_5$, $12\times C_5$, $10\times P_6$, and
  $10\times C_6$ as induced subgraphs.  Now, any graph with the
  desired properties whose components do not include all the
  components of one of these graphs has at most
  $(60-1)+(60-2)+(60-3)+(60-3)+(60-4)+(60-4)+(60-5)+(60-5)+(60-6)+(60-6)=561$
  vertices, and can be embedded in at most $10$ copies of $H$.  The
  whole of $G$ can therefore be embedded in at most
  $c=\left\lfloor\frac{n}{60}\right\rfloor+10$ copies of $H$.  The
  total number of vertices in $c\times H$ is
  $110c\leq\frac{11}{6}n+1100$, which is
  $\frac{11}{6}n+\mathcal{O}(1)$ as desired.
\end{proof}

A more careful analysis shows that
$\left\lceil\frac{n}{60}\right\rceil$ copies of $H$ is always
sufficient to embed any $G\in\mathcal{G}_2$ with $n$ vertices, which
in particular means that the $11\floor{n/6}$ bound is achievable for
any such graph with $n$ divisible by $60$.


We are not sure if the above construction can be extended to handle
components of size $7$ or more.  It would involve constructing a graph
with $770$ vertices.  An interesting open question is: Is there a
function $f$, such that for any $n,s\in\mathbb{N}$ the family of
graphs with $n$ vertices, maximum degree $2$, and maximum component
size $s$ has an induced universal graph with at most
$\frac{11}{6}n+f(s)$ vertices?

\subsection{$\frac{11}{6}n+\Oo(1)$ upper bound when all components are large}

Since it appears difficult to improve the lower bound by considering
only small components, the next natural thing might be to consider
just large components.  As the following upper bound shows, this is also
unlikely to succeed.

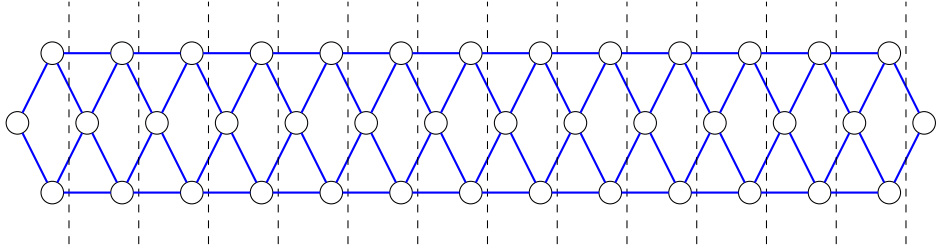
\begin{figure}[h]
  \begin{tikzpicture*}{\textwidth}

    \pgfmathtruncatemacro{\cols}{13};
    \pgfmathtruncatemacro{\p}{1};

    \begin{scope}[every node/.append style={
          draw,
          circle,
          minimum size=2mm,
          inner sep=0pt,
          outer sep=0pt%
        },
        vertex style/.style={
          draw,
          circle,
          minimum size=3mm,
          inner sep=0pt,
          outer sep=0pt%
        },
        selected vertex style/.style={
          draw,
          circle,
          fill=blue!20,
          minimum size=3mm,
          inner sep=0pt,
          outer sep=0pt%
        },
        selected edge style/.style={
          rounded corners,line width=1.5mm,blue!20,cap=round%
        }
      ]

      \node[vertex style] (v01) at (0,2) {};

      \foreach \i in {1,...,\cols}{%
        \node[vertex style] (v\i0) at (-1+2*\i,0) {};
        \node[vertex style] (v\i1) at ( 0+2*\i,2) {};
        \node[vertex style] (v\i2) at (-1+2*\i,4) {};
      };

      \foreach \i in {1,...,\cols}{%
        \pgfmathtruncatemacro{\previ}{\i-1};
        \draw[thick,color=blue] (v\i0) -- (v\previ1) -- (v\i2);
        \ifthenelse{\i>1}{%
          \draw[thick,color=blue] (v\previ0) -- (v\i0);
          \draw[thick,color=blue] (v\previ2) -- (v\i2);
        }{%
        }
        \draw[thick,color=blue] (v\i0) -- (v\i1) -- (v\i2);

        \pgfmathtruncatemacro{\imod}{mod(\i,\p)};

        \ifthenelse{\imod=0}{%
          \draw[dashed] ($(v\i0.south east)+(0.25,-1.25)$) -- ($(v\i2.north east)+(0.25,1.25)$);
        }{%
        }
      };

    \end{scope}
  \end{tikzpicture*}
  \caption{The graph with this repeated pattern embeds in the first
    $\frac{3}{2}n-2$ vertices all graphs on $n$ vertices whose
    components are all even cycles.}
  \label{fig:32-even}
\end{figure}

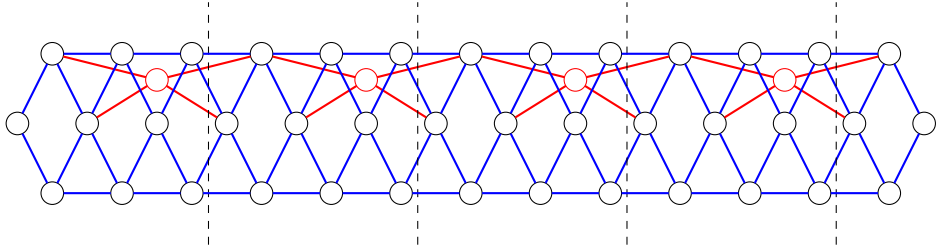
\begin{figure}[h]
  \begin{tikzpicture*}{\textwidth}

    \pgfmathtruncatemacro{\cols}{13};
    \pgfmathtruncatemacro{\p}{3};

    \begin{scope}[every node/.append style={
          draw,
          circle,
          minimum size=2mm,
          inner sep=0pt,
          outer sep=0pt%
        },
        vertex style/.style={
          draw,
          circle,
          minimum size=3mm,
          inner sep=0pt,
          outer sep=0pt%
        },
        selected vertex style/.style={
          draw,
          circle,
          fill=blue!20,
          minimum size=3mm,
          inner sep=0pt,
          outer sep=0pt%
        },
        selected edge style/.style={
          rounded corners,line width=1.5mm,blue!20,cap=round%
        }
      ]

      \node[vertex style] (v01) at (0,2) {};

      \foreach \i in {1,...,\cols}{%
        \node[vertex style] (v\i0) at (-1+2*\i,0) {};
        \node[vertex style] (v\i1) at ( 0+2*\i,2) {};
        \node[vertex style] (v\i2) at (-1+2*\i,4) {};
        \pgfmathtruncatemacro{\imod}{mod(\i,\p)};
        \pgfmathtruncatemacro{\itop}{\i+1};
        \ifthenelse{\imod=1 \AND \itop<\cols}{%
          \node[vertex style,color=red] (v\i3) at (2+2*\i,3.25) {};
        }{%
        }
      };

      \foreach \i in {1,...,\cols}{%
        \pgfmathtruncatemacro{\previ}{\i-1};
        \draw[thick,color=blue] (v\i0) -- (v\previ1) -- (v\i2);
        \ifthenelse{\i>1}{%
          \draw[thick,color=blue] (v\previ0) -- (v\i0);
          \draw[thick,color=blue] (v\previ2) -- (v\i2);
        }{%
        }
        \draw[thick,color=blue] (v\i0) -- (v\i1) -- (v\i2);

        \pgfmathtruncatemacro{\imod}{mod(\i,\p)};
        \pgfmathtruncatemacro{\itop}{\i+2};
        \ifthenelse{\imod=1 \AND \itop<\cols}{%
          \draw[thick,color=red] (v\i2) -- (v\i3);
          \draw[thick,color=red] (v\i3) -- (v\itop1);
          \pgfmathtruncatemacro{\inext}{\i+3};
          \draw[thick,color=red] (v\i3) -- (v\inext2);
          \draw[thick,color=red] (v\i3) -- (v\i1);
        }{%
        }

        \ifthenelse{\imod=0}{%
          \draw[dashed] ($(v\i0.south east)+(0.25,-1.25)$) -- ($(v\i2.north east)+(0.25,1.25)$);
        }{%
        }
      };

    \end{scope}
  \end{tikzpicture*}
  \caption{The graph with this repeated pattern embeds in the first
    $\frac{11}{6}n+\mathcal{O}(1)$ vertices all graphs on $n$ vertices
    whose maximum degree is $2$ and whose smallest component has size
    at least $10$.}
  \label{fig:116-large}
\end{figure}

\begin{theorem}
  For any $n\in\mathbb{N}$ there exists a graph with at most
  $\frac{11}{6}n+\mathcal{O}(1)$ vertices, that contain as induced
  subgraphs all graphs on $n$ vertices whose maximum degree is $2$ and
  whose smallest component has at least $10$ vertices.
\end{theorem}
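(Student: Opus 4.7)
The plan is to construct the induced universal graph by augmenting a "double-ladder" base structure $B$ with sparse auxiliary vertices. The base $B$, shown in \Cref{fig:32-even}, is a zigzag pattern of top, middle, and bottom vertices that, as indicated in its caption, embeds any union of even cycles on $n$ vertices within its first $\frac{3}{2}n - 2$ vertices. Adding one auxiliary (red) vertex every three columns of $B$, as shown in \Cref{fig:116-large}, contributes an extra $\frac{1}{3}n + O(1)$ vertices, giving an augmented graph of size $\frac{11}{6}n + O(1)$, which we will prove is a valid induced universal graph for the specified family.

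First I would formalize the embedding of individual even components in $B$: an even cycle $C_{2k}$ embeds into a block of consecutive columns of $B$ by tracing a canonical closed zigzag walk, using exactly $3k$ vertices of $B$, and a disjoint union of even cycles is embedded by laying such blocks side by side with only $O(1)$ boundary overhead per whole graph. Second I would explain how each red vertex enables the embedding of odd cycles and paths: its adjacencies to specific nearby base vertices allow a zigzag walk either to make a detour through the red vertex (converting a closed walk of even length into one of odd length, thereby embedding some $C_{2k+1}$) or to terminate at the red vertex (thereby embedding some path $P_k$).

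Third I would combine these ingredients to embed an arbitrary $G\in\mathcal{G}_2$ whose components all have size $\geq 10$. I would embed the components in a canonical order—say even cycles first, then odd cycles, then paths—into consecutive column-blocks of the augmented graph, allocating to each component a block of base columns proportional to its size. Since the red vertices appear with periodic spacing, every component of size $\geq 10$ receives in its block at least one red vertex, which is exactly what is needed when the component is an odd cycle or a path; when the component is an even cycle, the extra red vertex in the block is simply unused. Summing over components, the total number of used vertices is bounded by $\frac{11}{6}n + O(1)$.

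The main obstacle will be verifying that the local adjacencies of each red vertex and its immediate base neighbors are flexible enough to realize detours and terminations for \emph{every} allowed length $\geq 10$, while simultaneously ensuring the induced-subgraph condition: no two non-adjacent vertices of $G$ may be mapped to adjacent vertices of the augmented graph, which requires checking the transitions between consecutive component-blocks and the behavior of the red vertex on an otherwise-"even" block. The threshold $10$ on component sizes appears to be chosen precisely so that each component's block is long enough to absorb its constant boundary overhead within the amortized $\frac{11}{6}$ ratio, and so that each block contains the red vertex needed for parity adjustment or path termination whenever required.
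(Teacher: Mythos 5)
Your high-level plan for the cycle components matches the paper: embed unions of even cycles in the pattern of \Cref{fig:32-even} at ratio $\tfrac{3}{2}$, and use the periodically placed red vertices of \Cref{fig:116-large} as a parity detour (replacing a short stretch of the top row by the red vertex) to turn an embedded even cycle into an odd one. But your treatment of \emph{paths} is a genuine gap, and it is exactly the place where the paper's argument has its one extra idea that you are missing. In the paper, paths never touch the red vertices at all: all path components are first concatenated into a single even cycle by inserting dummy ``connector'' vertices between consecutive paths, and the minimum component size $10$ is used precisely to bound the number of connectors by $\tfrac{1}{10}n+O(1)$; this auxiliary even cycle is then embedded like any other even cycle, and one simply does not use the images of the connectors, so the induced subgraph on the remaining images is exactly the disjoint union of the original paths. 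Your alternative mechanism --- ``terminate the zigzag walk at the red vertex'' --- is unverified and does not obviously work: the red vertex is adjacent to two top-row vertices three columns apart and to two middle vertices, so using it as a path endpoint constrains which nearby base vertices may be used, and it does nothing about the other endpoint of the path (a bottom- or top-row vertex whose base neighbours include middle vertices of the adjacent column), which is where the cross-block inducedness problem actually sits. You explicitly defer this verification, so the path case of your proof is not established.

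The gap also shows up in your accounting. One red vertex per three columns of the base is one extra vertex per nine base vertices, i.e.\ $\tfrac{1}{6}n+O(1)$ extra vertices, not $\tfrac{1}{3}n+O(1)$; the bound $\tfrac{11}{6}n$ in the paper arises as $\tfrac{11}{10}\cdot\tfrac{3}{2}\cdot\tfrac{10}{9}$, where the factor $\tfrac{11}{10}$ is the cost of the path-to-even-cycle merging step you omitted and the factor $\tfrac{10}{9}$ is the red vertices. Relatedly, you attribute the size-$10$ hypothesis to ``each block containing a red vertex''; in the paper its main role is to bound the number of connector dummies for paths (for odd cycles one only needs length $\ge 11$, so that the split cycle has length $\ge 12$ and its embedding uses at least three consecutive top-row edges, one of which is associated with a red vertex). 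To repair your write-up, either adopt the paper's merge-and-discard trick for paths, or supply the missing local analysis (endpoint adjacencies, block boundaries, parity of the zigzag path embedding, and per-component overhead amortization) that your direct approach requires.
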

\begin{proof}
  Consider the graph family $A$ implied by the pattern in
  Figure~\ref{fig:32-even}. Observe that the leftmost
  $\frac{3}{2}n-2$ vertices of this pattern embeds all graphs on $n$
  vertices whose components are all even cycles.  In particular, each
  (even) cycle of length $s\geq10$ will be embedded in such a way that
  it uses exactly $\frac{s-4}{2}\geq3$ consecutive edges from the top
  or bottom rows.

  Now consider the modified graph family $B$ in
  Figure~\ref{fig:116-large}.  Every third edge along the top row is
  associated with one of the added red vertices.  Now consider an odd
  cycle $C$ of length $s>10$.  By splitting one vertex in the cycle,
  we can extend it to an even cycle $C'$ of length $\geq12$.  When we
  embed this $C'$, it will contain an edge associated with one of the
  red vertices.  We can therefore arrange that the split vertices are
  the endpoints of such an edge, and we can obtain an embedding of $C$
  by replacing them with the associated red vertex.

  Similarly, since the minimum component size is assumed to be $10$,
  we can turn any number of paths of total length $s$ into a cycle of
  even length by adding at most $\frac{1}{10}n+\mathcal{O}(1)$
  vertices.

  Thus any graph $G$ with $n$ vertices, maximum degree $2$ and minimum
  component size $10$ can be converted into a graph $G'$ with at most
  $n'=\frac{11}{10}n+\mathcal{O}(1)$ vertices whose components are all
  even cycles.

  The graph $G'$ can then be embedded in a graph from family $A$ with
  $\frac{3}{2}n'+\mathcal{O}(1)$ vertices.  This graph can then be
  converted to a graph in family $B$ with at most
  $\frac{10}{9}(\frac{3}{2}n'+\mathcal{O}(1))=\frac{11}{6}n+\mathcal{O}(1)$
  vertices, and the embedding of $G'$ in $A$ can be converted to an
  embedding of $G$ in $B$ as previsously described.
\end{proof}

\section{Cycle graphs}\label{sec:cycles}
We consider the family of graphs consisting of one cycle of length $\leq n$
(and no other edges or vertices).
We discuss both of the cases where the decoder is aware and oblivious of the value of $n$, as discussed in \Cref{pro:aware}. In particular we show that oblivious decoding requires a larger induced universal graph for this problem. Our new bounds leave small gaps which are interesting open problems to tighten.

First we consider the case where the decoder is aware of $n$. The new upper bound for the same setting is $n + \lg n + O(1)$ shown in \Cref{nKnown} below. The argument in \Cref{thm:awarelower} can be summarized as follows. The induced universal graph $G$ must have $n+k$ vertices for some $k$, and we wish to lower bound $k$. We use an encoding argument to show that a small encoding of the number of induced cycles on the $n+k$ vertices implies a lower bound on $k$. We show the following lower bound.

\begin{theorem}\label[theorem]{thm:awarelower}
    Let $G$ be an induced universal graph for the family of cycles of length 
    $\le n$. (I.e. the family of graphs, where each graph consists of a single
    cycle of length $\le n$.) Then $G$ has
    $n + \Omega \!\left ( \log \log n \right)$ vertices.
\end{theorem}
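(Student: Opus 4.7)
Let $N = |V(G)| = n + k$; the plan is to show $k = \Omega(\log\log n)$ via an encoding argument that bounds the number of distinct induced cycle lengths a graph on $N$ vertices can simultaneously realize. Since $G$ embeds $C_m$ as an induced subgraph for every $m\in\{3,\ldots,n\}$, the number of distinct lengths of induced cycles in $G$ is at least $n-2$, and I aim to establish an upper bound of the form $2^{2^{O(k)}}$ on this quantity, from which $k = \Omega(\log\log n)$ will follow.

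First, fix one specific induced $C_n$ in $G$ on a vertex set $V^*$, and let $W = V(G)\setminus V^*$, so $|W|=k$. A basic observation is that any shorter induced cycle $C_m$ ($m<n$) in $G$ must use at least one vertex of $W$: removing any nonempty vertex subset from the cycle $C_n$ leaves a disjoint union of paths, never a single cycle. Hence for each $m\in\{3,\ldots,n-1\}$ one can fix an induced $C_m$ whose ``$W$-trace'' $S_m := V(C_m)\cap W$ is nonempty.

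The heart of the plan is to bound, for each fixed nonempty $S\subseteq W$, the number of cycle lengths realizable through $S$ by a function depending only on $|S|$ and $k$, not on $n$. Once $S$ is fixed the adjacency structure $G[S\cup V^*]$ is fixed; any induced cycle with $W$-trace $S$ alternates between maximal runs inside $S$ and maximal runs inside $V^*$, where the $V^*$-runs are sub-paths of $C_n$ whose endpoints are prescribed by the $S$-to-$V^*$ adjacencies. A careful case analysis on the possible cyclic interleaving patterns is expected to yield a bound of roughly $2^{O(2^{|S|})}$ distinct realizable lengths per $S$. Summing over the $2^k-1$ nonempty subsets of $W$, the total number of induced cycle lengths in $G$ is at most $2^{2^{O(k)}}$, yielding
\[
n-2\;\le\;2^{2^{O(k)}}
\]
and hence $k = \Omega(\log\log n)$.

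The main obstacle is the per-subset bound: showing that no fixed $S\subseteq W$ can produce too many distinct cycle lengths. The induced-subgraph constraint (forbidding chords among the cycle vertices) is crucial, since without it even a single extra vertex attached to $C_n$ could close off cycles of many different lengths. A careful worst-case counting over the cyclic interleaving patterns, combined with the fact that each interleaving pattern contributes exactly one cycle length, is expected to give the required double-exponential bound while respecting the no-chord condition; handling this cleanly is the main technical content of the proof.
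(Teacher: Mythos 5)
The overall framework you set up (fix an induced $C_n$ on a vertex set $V^*$, observe that every shorter induced cycle must meet $W=V(G)\setminus V^*$ because $G[V^*]$ is exactly the cycle, and classify cycles by their trace on $W$) is sound and matches the paper's starting point, but the central lemma you defer to ``a careful case analysis'' is false: the number of distinct induced-cycle lengths with a fixed trace $S$ cannot be bounded by any function of $|S|$ and $k$ alone. Concretely, let $G$ be $C_n$ plus a single extra vertex $w$ adjacent to cycle vertices chosen so that the gaps between cyclically consecutive neighbours of $w$ are $2,3,\ldots,m$ with $2+3+\cdots+m\approx n$, so $m=\Theta(\sqrt n)$. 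An induced cycle through $w$ is $w$ together with one arc of $C_n$ joining two consecutive neighbours of $w$ (the induced condition forbids any other neighbour of $w$ inside the arc), so with $|S|=1$ and $k=1$ this graph already realizes $\Theta(\sqrt n)$ distinct lengths. Your claim that ``each interleaving pattern contributes exactly one cycle length'' fails for the same reason: for a fixed pattern the $V^*$-arcs can still run between many different pairs of neighbours. Hence $n-2\le 2^{2^{O(k)}}$ does not follow from your plan; any per-trace bound must involve the degrees of the vertices of $W$, which can be as large as $n$.

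The ingredient your proposal is missing, and which drives the paper's proof, is to count only the $x$ \emph{longest} induced cycles, of lengths $n-1,\ldots,n-x$. A vertex $v\in W$ lying on an induced cycle of length $n-t$ has at most $2$ neighbours on that cycle while only $k+t$ vertices lie off it, so $\deg_G(v)\le t+k+2$: near-Hamiltonian induced cycles can only pass through low-degree vertices of $W$. The paper then bounds the number of such cycles by decomposing each into single $W$-vertices and arcs of the fixed $C_n$ whose endpoints are neighbours of those vertices, getting a bound of roughly $(O(k))^{2k}\prod_i \deg_G(v_i)^2$ over the admissible low-degree vertices; since the $x$ lengths give $x$ distinct cycles, this upper-bounds $x$ in terms of those degrees. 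Plugging in $x=\deg_G(v_{k_0})-k-3$ yields a recursion in which the sorted degree products satisfy $p_{k_0}\le p_{k_0-1}^{3}$, and then taking $x=n$ gives $n\le \left(O(k)^k\right)^{2\cdot 3^{k-1}+2}$, i.e.\ $k=\Omega(\log\log n)$. Without some mechanism of this kind — tying the count to degrees and exploiting that very long induced cycles force those degrees to be small — counting lengths by their $W$-trace alone cannot reach the stated bound.
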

\begin{proof}
$G$ must contain an induced cycle of length $n$, say $(u_1,u_2,\ldots,u_n)$, and
hence it must have at least $n$ nodes, say $n+k$ for some non-negative integer $k$.
Let the remaining nodes of $G$ be ordered as $v_1,\ldots,v_k$ such that
$\deg_G(v_1) \le \ldots \le \deg_G(v_k)$.

For each $i \in \set{3,4,\ldots,n}$ let $H_i$ be an induced subgraph of $G$ that
is a cycle of length $i$. Assume that $v_j \in H_i$. Then at most $2$ of $v_j$'s
neighbours in $G$ are in $H_i$, and hence $i \le (n+k)-(\deg_G(v_j)-2)$, i.e.
$\deg_G(v_j) \le (n-i)+k+2$.

Now let $x \in \set{1,2,3,\ldots,n}$ be a parameter to be chosen later, and let $k' \in \set{1,2,\ldots,k}$ be
the largest integer such that $\deg_G(v_{k'}) \le x+k+2$.
For each $t \in \set{1,2,\ldots,x}$ we see that if $v_j \in H_{n-t}$ then
$\deg_G(v_j) \le t+k+2$ and hence $j \le k'$.

Consider $H_{n-t}$ for some $t \in \set{1,2,\ldots,x}$.
We can split the cycle into two types of paths:
\begin{itemize}
  \item[1)] A path containing a single node, one of $v_1,\ldots,v_{k'}$.
  \item[2)] A subpath of the cycle $(u_1,u_2,\ldots,u_n)$ starting and ending at
    a neighbour of one of the nodes $v_1,\ldots,v_{k'}$.
\end{itemize}
Furthermore we can split it in a way such that no two paths of type
2 are adjacent -- if they were we can merge them. Hence we can split $H_{n-t}$ in 
at most $2k'$ of such paths. The goal is now to bound the number of ways to choose
such a splitting of $H_{n-t}$ in terms of $k'$ and $\deg_G(v_i), i =1,\ldots,k'$.
Since a splitting uniquely determines $H_{n-t}$ for $t \in \set{1,2,\ldots,x}$ we
know that we can choose such a splitting in at least $x$ ways and in this way we will
get an upper bound on $x$.

Fix a cycle $H_{n-t}, t \in \set{1,\ldots,x}$ and say that it can be split into paths
in this way as $(P_1,P_2,\ldots,P_s)$ for some $s \le 2k'$. We can describe all paths
of type $1$ with a vector $w \in \set{0,1,\ldots,k'}^s$ in the following way.
For each $i$ we let $w_i = 0$ if $P_i$ is of type $2$ and otherwise we choose $w_i$
such that $P_i$ consists of the single node $v_{w_i}$. Now given $w$ we can describe
any path of type $2$, say $P_i$, by a neighbour of $w_{i-1}$, a neighbour of 
$w_{i+1}$, and a direction. (Where the indices $w_i$ are taken ${}\bmod s$.) So given
$w$, a path $P_i$ of type $2$ can be chosen in at most
$2\deg_G(v_{w_{i-1}})\deg_G(v_{w_{i+1}})$ ways. Therefore,
given $w$ the number of ways
to choose $H_{n-t}$ is no more than:
\begin{align*}
  \prod_{i=1}^s 2\deg_G(v_{w_{i-1}})\deg_G(v_{w_{i+1}}) \leq
  \prod_{i=1}^{k'} 2 (\deg_G(v_i))^2
\end{align*}
Since $w$ can be chosen in at most $\sum_{s=1}^{2k'} (k'+1)^s$ ways we conclude that:
\begin{align}
  \label{eq:boundOnX}
  x \le 
  \left ( \sum_{s=1}^{2k'} (k'+1)^s \right )
  \left ( \prod_{i=1}^{k'} 2 (\deg_G(v_i))^2 \right )
\end{align}
Now let $k_0 \in \set{2,\ldots,k}$, and suppose $x = \deg_G(v_{k_0})-k-3 > 0$. Then
$k' \le k_0-1$ and \eqref{eq:boundOnX} gives:
\begin{align}
  \deg_G(v_{k_0}) &\le 
  k+3+
  \left ( \sum_{s=1}^{2(k_0-1)} (k_0)^s \right )
  \left ( \prod_{i=1}^{k_0-1} 2 (\deg_G(v_i))^2 \right ) \notag \\
  &=
  (O(k))^{2k}
  \left ( \prod_{i=1}^{k_0-1} \deg_G(v_i) \right )^2 \label{eq:firstStepTowardsRec}
\end{align}
Furthermore, note that if $x\leq 0$ \eqref{eq:firstStepTowardsRec} trivially holds, so \eqref{eq:firstStepTowardsRec} is true for all $k_0\in\set{2,\ldots,k}$.
Let $c = c(k) = \sqrt{(O(k))^{2k}}$ be the square root of the term from 
\eqref{eq:firstStepTowardsRec}.
For $i = 1,2,\ldots,k$ let $p_i = c \prod_{j=1}^{i} \deg_G(v_j)$. Then 
\eqref{eq:firstStepTowardsRec} can be restated as $\frac{p_{k_0}}{p_{k_0-1}}\leq p_{k_0-1}^2$ or $p_{k_0} \le p_{k_0-1}^3$ for 
$k_0=2,3,\ldots,k$, and hence $p_k \le p_1^{3^{k-1}}$. On the other hand, letting
$x=n$ in \eqref{eq:boundOnX} gives $p_k^2 \cdot (O(k))^{2k} \ge n$,
and hence we have:
\begin{align*}
  n \le 
  p_1^{2 \cdot 3^{k-1}} \cdot (O(k))^{2k} =
  \left ( O(k)^k \right )^{2 \cdot 3^{k-1}+2}
\end{align*}
where we use that $\deg_G(v_1) \le k+3$ which is guaranteed by the existence of
$H_{n-1}$. Taking $\log$ twice gives $k = \Omega(\log \log n)$ as desired.
\end{proof}

Next we consider the case where the decoder is oblivious to the value $n$. The idea in the proof of \Cref{thm:oblower} is to consider a longest cycle in
each of induced universal graph $G_n$, say $a_n$. We then consider two cases:
When $a_n$ increases seldomly, and when $a_n$ increases frequently. As $a_n \ge n$
we see that if $a_n$ increases seldomly then $a_n-n$ must be large at some point,
so $\abs{G[V]}-n$ must be large. If $a_n$ increases frequently we consider the
union of all these large cycles, and prove that it must have many edges. Again
we split it into two cases. Either there exists a high degree node or all nodes
have small degree. If there exists a high degree node we use the fact that it
can only be in an induced cycle with at most two of its neighbours to conclude
that $\abs{G[V]}-n$ must be large. In the other case we look at a long induced
cycle and consider the cut between that and the rest of the graph. Since all
nodes have low degree, and there are many edges, this allows us to conclude that
$\abs{G[V]}-n$ must be large. We show the following lower bound.
\begin{theorem}\label[theorem]{thm:oblower}
    Let $G_3,G_4,G_5,\ldots$ be an infinite family of graphs, such that $G_i$
    is an induced universal graph for the family of graphs consisting of a 
    single cycle of length $\le i$. Furthermore assume that $G_i$ is an
    induced subgraph of $G_{i+1}$ for all positive integers $i$. For each $N$ 
    there exists $n \ge N$ such that $G_n$ has at least 
    $n + \Omega\!\left(\sqrt[3]{n}\right)$ vertices. In other words:
    $
        \limsup_{n \to \infty} (\abs{V[G_n]}-n)/\sqrt[3]{n}
        = \Omega(1).
    $
\end{theorem}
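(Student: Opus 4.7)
The plan is to argue by contradiction: suppose $|V[G_n]| - n < \eps\sqrt[3]{n}$ for all $n \ge n_0$, where $\eps > 0$ is a small constant to be fixed later, and write $k(n) := |V[G_n]| - n$. Let $a_n$ denote the length of a longest induced cycle in $G_n$. Since $G_n$ embeds $C_n$ we have $a_n \ge n$, and nestedness $G_n \subseteq G_{n+1}$ makes $a_n$ non-decreasing.

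The first step is a dichotomy on the growth of $a_n$. If there is a window $[m, m+L]$ with $L \ge \eps\sqrt[3]{m}$ on which $a_n$ is constant, then $a_m = a_{m+L} \ge m+L$, so $k(m) \ge a_m - m \ge L \ge \eps\sqrt[3]{m}$, already a contradiction. So we may assume $a_n$ strictly increases in every window of length $\eps\sqrt[3]{n}$. Fix a large $N$ and write $k := k(N) < \eps\sqrt[3]{N}$. By nestedness, $G_N$ contains an induced cycle $H_m$ of length $m$ for every $m \in [3, N]$. Pick a longest induced cycle $H$ in $G_N$ (so $|V(H)| \ge N$) and let $S := V[G_N] \setminus V(H)$ with $|S| \le k$. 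Since no proper subset of $V(H)$ induces a cycle of $H$, every $H_m$ with $m < |V(H)|$ must meet $S$.

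Now split on the maximum degree of $G_N$. In the high-degree case, if some $v$ has degree $d \ge k+3$, then any induced cycle $H_m$ through $v$ admits only two of $v$'s $d$ neighbours (its cycle-neighbours), forcing $d - 2 \le |V[G_N]| - m$, i.e.\ $m \le N + k - d + 2$. Hence $v$ avoids the $d - k - 2$ cycles $H_{N+k-d+3}, \ldots, H_N$, all of which sit in the induced subgraph $G_N \setminus \{v\}$ on $N + k - 1$ vertices; iterating, or invoking \Cref{thm:awarelower} on this sub-structure, forces $k = \Omega(\sqrt[3]{N})$. In the low-degree case, every vertex of $G_N$ has degree at most $k + 2 = o(\sqrt[3]{N})$, so the bipartite cut between $V(H)$ and $S$ has at most $|S|(k+2) = O(k^2) = o(N^{2/3})$ edges. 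Every short cycle $H_m$ (with $|S| < m < |V(H)|$) uses at least two of these cut edges, giving $\ge 2(N - O(k))$ cycle-cut incidences in total; averaging forces some cut edge to lie on $\Omega(N/k^2) = \omega(N^{1/3})$ of the chosen $H_m$'s.

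The concluding step I would carry out is an upper bound, by a careful structural analysis, on the number of induced cycles of $G_N$ through a fixed cut edge: in the low-degree regime such a cycle is locally determined by a bounded-depth walk through the tiny set $S$ together with a few choices for where to re-attach to the spine $H$, giving a bound that is $o(N^{1/3})$ and hence contradicts the averaging lower bound. Balancing the $O(k^2)$ cut budget against the $\Omega(N)$ cycle-length demand is exactly what produces the cubic root $k \gtrsim N^{1/3}$ in the final inequality, which contradicts the standing assumption and establishes the theorem. The main obstacle is precisely this last structural bound — one must control how many long induced cycles in $G_N$ can share a single cut edge tightly enough to beat the $\omega(N^{1/3})$ threshold — while the other two dichotomies (fast vs.\ slow growth of $a_n$, high vs.\ low maximum degree) are comparatively routine bookkeeping.
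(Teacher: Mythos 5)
There is a genuine gap, and it is larger than the one missing step you flag at the end. Your opening dichotomy (no long window on which the longest-induced-cycle length $a_n$ stalls) is sound and parallels the paper's case $r<N^{2/3}$, and it is the only place where you use the hypothesis that $G_n$ is an induced subgraph of $G_{n+1}$. Everything after that happens inside the single graph $G_N$ and uses only two facts: $G_N$ contains an induced cycle of every length $3,\ldots,N$, and $|V[G_N]|=N+k$. But by \Cref{nKnown} there is a graph with $n+\log n+O(1)$ vertices containing induced cycles of every length at most $n$, so no argument resting only on these two facts can force $k=\Omega(N^{1/3})$; nestedness must re-enter after the dichotomy, and in your proof it never does. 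Concretely, both branches break. In the high-degree branch, a vertex of degree $d\ge k+3$ merely misses the cycles of length greater than $N+k-d+2$; invoking \Cref{thm:awarelower} yields only $k=\Omega(\log\log N)$, and no amount of ``iterating'' can give $\Omega(N^{1/3})$, since the graph of \Cref{nKnown} has exactly such a vertex (the vertex $u$, of degree about $2\log n$) while $k=O(\log n)$. In the low-degree branch the averaging step is fine, but the structural bound you still need --- that only $o(N^{1/3})$ of the chosen cycles $H_m$ pass through any fixed cut edge --- is false for graphs of this type: take a shortcut construction as in \Cref{nKnown} with shortcut gaps $2^{i-1}$, replacing the high-degree vertex $u$ by a recursive low-degree gadget for the short lengths; then every length whose binary representation requires the first shortcut forces its (unique, induced) cycle through the single edge $v_1w_1$, so that one cut edge carries $\Theta(N)$ of the $H_m$'s even though all degrees are $O(\log\log N)\le k+2$.

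The paper's proof uses nestedness in a way your within-$G_N$ argument cannot replicate: it works in the window $[N,2N]$, and for each of the $r\ge N^{2/3}$ indices $k_i$ at which $a_{k_i}$ increases it takes a \emph{longest} induced cycle $C_i$ of $G_{k_i}$. By nestedness all these cycles of pairwise distinct lengths sit in one graph; either two are disjoint (done), or the quantity $f(H)=2|E[H]|-2|V[H]|$ of their union grows by at least $2$ per cycle, so the union has many edges. Combining this with either a vertex of degree $>N^{1/3}$ on some $C_i$ (which immediately forces many extra vertices, because at most two of its neighbours lie on $C_i$) or, in the low-degree case, a count of the edges crossing the cut between $C_r$ and the remaining $b$ vertices, gives $b=\Omega(N^{1/3})$. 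The crucial point is that the extra vertices are wrung out of many \emph{maximal} cycles coming from different graphs $G_{k_i}$, not out of the many short cycles inside one $G_N$ --- the latter can share spine arcs and a few shortcut vertices almost for free, which is precisely what the $n+\log n+O(1)$ aware construction exploits.
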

\begin{proof}
First we make some remarks. We can wlog assume that $G_i$ only has nodes of degree
at least $2$. We let $a_i$ denote the length of the longest induced cycle of $G_i$,
and we note that $(a_i)_{i \ge 3}$ is non-decreasing. Furthermore $a_i \ge i$.

Fix $N$ and consider $G_N,G_{N+1},\ldots,G_{2N}$. Let $k_1,k_2,\ldots$ be
defined in the following way: $k_1 = N$, and $k_{i+1}$ is the smallest index such that
$a_{k_{i+1}}$ is greater than $a_{k_i}$. We let $r$ be the highest index such that
$k_r < 2N$. We \textbf{redefine} $k_{r+1}$ such that $k_{r+1} = 2N$.
We will prove that there exists $n \in \set{N,N+1,\ldots,2N}$ such that 
\begin{align}
    \label{eqGoal}
    \abs{V[G_n]} \ge n + \Omega\!\left(\sqrt[3]{n}\right)
\end{align}
First we argue that this is true if $r < N^{2/3}$. We note that:
\[
    N = k_{r+1}-k_1 =\sum_{i = 1}^{r} k_{i+1}-k_i
\]
So there must exist $i$ such that $k_{i+1}-k_i \ge \frac{N}{r} \ge N^{1/3}$. And we
know that:
\begin{align}
    a_{k_i} = a_{k_{i+1}-1} \ge k_{i+1}-1 =
    (k_{i+1}-k_i) + k_i - 1 =
    k_i + \Omega\!\left(\sqrt[3]{k_i}\right)
\end{align}
so \eqref{eqGoal} holds when $r < N^{2/3}$ and we may from now on assume that
$r \ge N^{2/3}$.

For $i=1,2,\ldots,r$ let $C_i$ be an induced cycle in $G_{k_i}$ of length $a_{k_i}$.
Let $H_i$ be the union of $C_1,C_2,\ldots,C_i$. Then $H_i$ is an induced subgraph
of $G_{k_i}$. First assume that $C_i$ and $C_j$ are disjoint for some $i < j$. Then
$H_j$ must have at least $a_{k_i}+a_{k_j} \ge k_i+k_j$ nodes. Hence 
$\abs{V[G_{k_j}]} \ge k_j + k_i \ge k_j+N$ and \eqref{eqGoal} holds.
So assume that no two cycles $C_i,C_j$ are disjoint. Now assume that there exists a 
node in $H_i$ for some $i=1,2,\ldots,r$ with degree $>N^{1/3}$. Let $i$ be the
smallest such index and say that $\deg_{H_i}(v) > N^{1/3}$. By the minimality of $i$,
$v$ is contained in $C_i$. At most $2$ of $v$'s neighbours in $H_i$ can be contained
in $C_i$. So the number nodes in $H_i$ is at least:
\[
    a_{k_i} + \deg_{H_i}(v) - 2 >
    k_i + N^{1/3} - 2 = 
    k_i + \Omega \!\left ( \sqrt[3]{k_i} \right )
\]
Since $H_i$ is an induced subgraph of $G_{k_i}$ we see that \eqref{eqGoal} is
satisified. So from now on we can assume that $\deg_{H_i}(v) \le N^{1/3}$ for all 
$i = 1,2,\ldots,r$ and $v \in H_i$.

For a graph $G$ let $f(G)$ be defined by:
\[
    f(G) = \sum_{v \in G} \deg_G(v) - 2 = 2\abs{E[G]}-2\abs{V[G]}
\]
We note that $f(H_{i+1}) \ge f(H_i)+2$ since no two of the cycles $C_1,\ldots,C_r$ 
are disjoint. Hence $f(H_r) \ge 2(r-1)$. Let $B = V[H_r] \setminus C_r$,
and say that $b = \abs{B}$. The number of edges going between nodes in $C_r$ is 
exactly $\abs{C_r}$ since $C_r$ is a node induced cycle in $H_r$. The number of edges
going between nodes in $B$ is at most $\binom{b}{2}$. So the number of edges going
across the cut $(C_r,B)$ is at least:
\begin{align}
    \label{eqAcrossCut}
    \abs{E[H_r]} - \abs{C_r} - \binom{b}{2}
\end{align}
We see that $\frac{1}{2} f(H_r) = \abs{E[H_r]} - \abs{C_r} - b$. So 
\eqref{eqAcrossCut} is bounded from below by:
\begin{align}
    \label{eqAcrossCutTwo}
    r-1 + b - \binom{b}{2} \ge 
    r - \frac{1}{2}b^2
\end{align}
Assume that $b \le \sqrt{r}$. Then there must be at least
$\frac{1}{2}r$ edges going across 
the cut $(C_r,B)$ by \eqref{eqAcrossCutTwo}.
Since there are $b$ nodes in $B$ and each node has degree $\le N^{1/3}$ we see that
$bN^{1/3} \ge \frac{1}{2}r$. So we conclude that:
\[
    b \ge \min\set{\sqrt{r}, \frac{r}{2N^{1/3}}}
    = \Omega \!\left ( N^{1/3} \right )
\]
Since $G_{k_r}$ has $a_{k_r}+b \ge r+b$ edges we see that $n=r$ satisifies
\eqref{eqGoal}.
\end{proof}

We show constructions of induced universal graphs in the size aware and oblivious
case, respectively. 

The following lemma will turn out useful when we argue that our upper bound
constructions contain cycles of every length in a certain interval.

\begin{lemma}\label[lemma]{sumLemma}
Let $a_1\leq a_2\leq\ldots\leq a_n$ be integers such that
$a_1=1$ and $a_{i+1}\leq 2a_i$ for each
$i=1,\ldots,n-1$. Let $A=\sum_{i=1}^n a_i$. For each
$x\in\{0,\ldots,A\}$, there exists $S\subseteq\{1,\ldots,n\}$ such that
$x=\sum_{s\in S} a_s$.
\end{lemma}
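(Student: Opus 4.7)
My plan is to prove the lemma by induction on $n$, with a small auxiliary monotonicity fact as the key ingredient. The base case $n=1$ is trivial: $a_1=1$ and $A=1$, so $x\in\{0,1\}$ is represented by $\emptyset$ or $\{1\}$.

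For the inductive step, set $A_i := \sum_{j=1}^i a_j$. Given $x\in\{0,\ldots,A\}$, I would split into two cases. If $x \le A_{n-1}$, apply the induction hypothesis to the sequence $a_1,\ldots,a_{n-1}$ (which still satisfies $a_1=1$ and $a_{j+1}\le 2a_j$, and has total sum $A_{n-1}$) to obtain $S\subseteq\{1,\ldots,n-1\}$ with $x = \sum_{s\in S} a_s$. If $x > A_{n-1}$, then I would set $y := x - a_n$ and represent $y$ using the first $n-1$ indices by induction; adding $n$ to that subset produces $S$. For this to work I must check that $0 \le y \le A_{n-1}$. The upper bound is immediate since $x \le A = A_{n-1} + a_n$. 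The lower bound $y \ge 0$ requires $a_n \le x$, so it suffices to establish
\[
 a_n \;\le\; A_{n-1} + 1,
\]
because the assumption $x > A_{n-1}$ then gives $x \ge A_{n-1}+1 \ge a_n$.

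The main (and only real) obstacle is proving this auxiliary inequality $a_{i+1} \le A_i + 1$ for every $i$, since the hypothesis $a_{i+1}\le 2a_i$ does not yield it directly. I would prove it by a separate induction on $i$: the base $i=1$ gives $a_2 \le 2a_1 = 2 = A_1+1$. Assuming $a_i \le A_{i-1}+1$, I rearrange to $A_{i-1} \ge a_i - 1$, and then
\[
 A_i + 1 \;=\; A_{i-1} + a_i + 1 \;\ge\; 2a_i \;\ge\; a_{i+1},
\]
which closes the induction. Plugging $i = n-1$ into this gives exactly $a_n \le A_{n-1}+1$, completing the missing piece of the main induction and hence the proof.
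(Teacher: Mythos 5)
Your proof is correct and is essentially the paper's argument: both are inductions that peel off the largest usable term and both hinge on the same derived inequality $a_{i+1}\le 1+\sum_{j=1}^{i}a_j$, which the paper obtains by telescoping $a_{i+1}\le 2a_i\le a_i+2a_{i-1}\le\cdots$ and you obtain by a short separate induction. The only organizational difference is that the paper picks the maximal index $i$ with $a_i\le x$ and recurses on the prefix $\{a_1,\ldots,a_{i-1}\}$ (using a contradiction with maximality), whereas you always case-split on whether $x\le A_{n-1}$ and strip off $a_n$; this is a cosmetic variation, not a different route.
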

\begin{proof}
The statement is trivial if $n=1$ and if $x=0$, so assume that $n>1$ and $x\geq 1$.
Let $x\in\{0,\ldots,A\}$ and choose the maximum $i\in\{1,\ldots,n\}$ such that
$a_i\leq x$.
If $i=n$, we have $x-a_n\in\{1,\ldots,A-a_n\}$ and it follows by
induction that $x-a_n$ is the sum of a subset of the numbers
$\{a_1,\ldots,a_{n-1}\}$, so the statement follows.
Consider now the case $i<n$.
Observe that
$a_{i+1}\leq 2a_i\leq a_i+2a_{i-1}\leq\ldots\leq \sum_{j=1}^i a_j+1$.
Assume for contradiction that $\sum_{j=1}^{i-1} a_j<x-a_i$.
Then $a_{i+1}-1\leq\sum_{j=1}^{i} a_j<x$ which means that
$a_{i+1}\leq x$, contradicting the maximality of $i$.
We can therefore conclude that $\sum_{j=1}^{i-1} a_j\geq x-a_i$, and it follows
by induction that $x-a_i$ is the sum of a subset of the
numbers $\{a_1,\ldots,a_{i-1}\}$, so the statement follows.
\end{proof}
The family of graphs described in \Cref{nKnown} can be thought of as a cycle $C$ of length $n$ where shortcuts are added such that also cycles of size less than $n$ are induced.
If we add new edges directly between nodes in
$C$, the graph no longer induces a cycle of size $n$.
Therefore, we have to use an extra vertex for each such shortcut.
There is a minimum length of the cycles induced in $C$ using the shortcuts,
and smaller cycles are embedded using one portion of $C$ and
a special extra vertex (labeled $u$ in \Cref{nKnownFig}).
Longer cycles are embedded in $C$ using $O(\log n)$ shortcuts.
 
\begin{theorem}\label{nKnown}
    There exists
    an infinite family $G_3,G_4,\ldots$ of graphs such that
    \begin{itemize}
    \item
    for each $n\geq 3$ and each $\ell \in\{3,\ldots,n\}$,
    the cycle of length $\ell$ is an induced subgraph of
    $G_n$, and
    \item
    $G_n$ has $n + \log n+O(1)$ nodes.
    \end{itemize}
\end{theorem}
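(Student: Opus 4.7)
The plan is to construct $G_n$ as the cycle $C=v_0v_1\cdots v_{n-1}v_0$ on $n$ vertices augmented by $k=\lfloor\log_2 n\rfloor+O(1)$ ``shortcut'' vertices $w_1,\ldots,w_k$ together with two auxiliary vertices $u$ and $u'$, giving $n+\log n+O(1)$ vertices in total. Each shortcut $w_i$ will have exactly two neighbours on $C$, namely $v_{P_{i-1}}$ and $v_{P_i}$ where $P_0=0$ and $P_i-P_{i-1}=2+2^{i-1}$; consecutive shortcuts share the endpoint $v_{P_i}$, so they form a chain of total arc length $P_k=2k+2^k-1$, and I pick $k$ maximal subject to $P_k\le n$. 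The vertex $u'$ is adjacent only to $v_0$ and $v_{P_k}$, while $u$ is adjacent to $v_j$ for $j$ in the first $t+1$ triangular numbers $\{0,1,3,6,10,\ldots,\binom{t+1}{2}\}$, for some $t=\Theta(\log n)$ to be chosen below.

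Embedding the cycle of length $\ell\in\{3,\ldots,n\}$ splits into three regimes. For $\ell\in[n-2^k+1,n]$ I apply \Cref{sumLemma} to the saves $s_i=2^{i-1}$ to pick $S\subseteq\{1,\ldots,k\}$ with $\sum_{i\in S}s_i=n-\ell$, and take as induced subgraph $\{w_i:i\in S\}$ together with every vertex of $C$ not strictly inside a skipped arc; including $w_i$ ``replaces'' the arc $v_{P_{i-1}}\cdots v_{P_i}$ by the length-$2$ path $v_{P_{i-1}}w_iv_{P_i}$, and the total length works out to exactly $\ell$. For $\ell\in[2k+2,P_k+2]$ I include $u'$ together with the arc $v_0\cdots v_{P_k}$ and apply the same subset-sum trick with the shortcuts (all of which live inside that arc) to tune the length to $\ell$. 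For $\ell\in[3,t+2]$ I include $u$ together with the arc between two \emph{consecutive} $u$-neighbours at distance exactly $\ell-2$; the triangular-number spacing makes the gaps $1,2,\ldots,t$, so every such $\ell$ is attained. Choosing $t=2k$ makes the three ranges collectively cover $\{3,\ldots,n\}$: the second and third meet at $2k+2$, and the maximality of $k$ forces $2^{k+1}>n-2k-1$, which after a short calculation makes the first and second overlap as well.

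The main obstacle is the induced-subgraph verification. Every $v_j$ has at most four ``extra'' neighbours in $G_n$: the at-most-two shortcut vertices immediately to its left and right in the chain, possibly $u$ (if $j$ is one of the $t+1$ chosen triangular positions), and possibly $u'$ (if $j\in\{0,P_k\}$). For each of the three regimes one checks vertex by vertex that exactly two of these neighbours are present in the candidate subgraph, so it is an induced cycle. The crucial placement choice that keeps these checks clean is that $u$'s neighbourhood lives inside the short arc $[v_0,v_{\binom{t+1}{2}}]$ with $\binom{t+1}{2}=O(\log^2 n)\ll P_k$; this guarantees that the two intermediate $v_j$ of any ``consecutive-neighbour'' arc used by $u$ are not themselves $u$-neighbours (so $u$ has degree exactly $2$ in the small-cycle subgraph), and whenever $u$ is absent from the chosen subgraph its adjacencies do not create chords. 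The remaining bookkeeping is routine case analysis.
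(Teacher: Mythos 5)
Your construction is correct and is essentially the paper's own: a length-$n$ cycle with $O(\log n)$ shortcut vertices whose savings $1,2,4,\ldots,2^{k-1}$ are combined via \cref{sumLemma}, plus a hub vertex attached at triangular-number positions to realize the short cycle lengths. The only deviation is cosmetic: the paper runs the shortcut chain (with a truncated last gap) all the way around $C$ so that every length $\ge 2k+1$ comes from a single regime, whereas you stop the chain at $P_k\le n$ and add the extra vertex $u'$ to cover the middle range, using maximality of $k$ to make the ranges overlap—both give $n+\log n+O(1)$ vertices.
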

\begin{proof}
In the following we describe the construction of $G_n$ for all
$n\geq N$ for some sufficiently large $N$ to be specified later.
For $n<N$, we only need $O(1)$ nodes to make a graph
that satisfies the requirements in the theorem.

See Figure \ref{nKnownFig}.
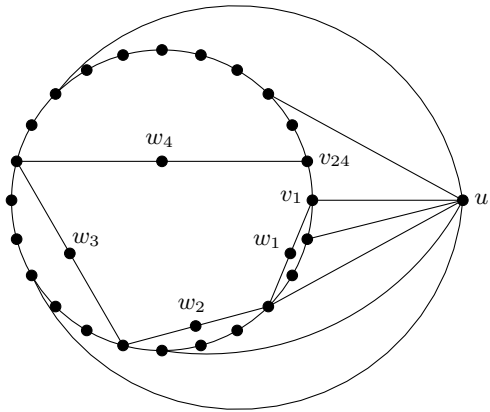
\begin{figure}[h!]
\centering
\begin{tikzpicture}[scale=2]
\clip(-1.23,-1.5) rectangle (2.3,1.5);
\draw(0.,0.) circle (1.cm);
\draw (1.,0.)-- (0.7071067811865476,-0.7071067811865468);
\draw (0.7071067811865476,-0.7071067811865468)-- (-0.25881904510252035,-0.9659258262890666);
\draw (-0.25881904510252035,-0.9659258262890666)-- (-0.9659258262890659,0.2588190451025212);
\draw (-0.9659258262890659,0.2588190451025212)-- (0.9659258262890683,0.25881904510252074);
\draw (1.,0.)-- (2.,0.);
\draw (0.965925826289068,-0.2588190451025213)-- (2.,0.);
\draw (0.7071067811865476,-0.7071067811865468)-- (2.,0.);
\draw [shift={(0.295905773966337,0.9081884520673289)}] plot[domain=4.55854279278696:5.793530385984031,variable=\t]({1.*1.9309954416491306*cos(\t r)+0.*1.9309954416491306*sin(\t r)},{0.*1.9309954416491306*cos(\t r)+1.*1.9309954416491306*sin(\t r)});
\draw [shift={(0.5045565729224794,0.10785608871552355)}] plot[domain=3.559029865141043:6.211186825080418,variable=\t]({1.*1.4993278425555803*cos(\t r)+0.*1.4993278425555803*sin(\t r)},{0.*1.4993278425555803*cos(\t r)+1.*1.4993278425555803*sin(\t r)});
\draw [shift={(0.494381579549986,-0.22861649443561366)}] plot[domain=0.15069117414675737:2.479910732145992,variable=\t]({1.*1.522876334285363*cos(\t r)+0.*1.522876334285363*sin(\t r)},{0.*1.522876334285363*cos(\t r)+1.*1.522876334285363*sin(\t r)});
\draw (0.707106781186548,0.7071067811865468)-- (2.,0.);
\draw [fill=black] (1.,0.) circle (1.0pt);
\draw[color=black] (0.862542226047113,0.022477115675780198) node {$v_1$};
\draw [fill=black] (2.,0.) circle (1.0pt);
\draw[color=black] (2.1233580965509407,0.01151349941052951) node {$u$};
\draw [fill=black] (0.9659258262890683,0.25881904510252074) circle (1.0pt);
\draw[color=black] (1.15,0.2691584816439207) node {$v_{24}$};
\draw [fill=black] (0.8660254037844389,0.5) circle (1.0pt);
\draw [fill=black] (0.707106781186548,0.7071067811865468) circle (1.0pt);
\draw [fill=black] (0.5,0.8660254037844377) circle (1.0pt);
\draw [fill=black] (0.258819045102522,0.9659258262890672) circle (1.0pt);
\draw [fill=black] (0.,1.) circle (1.0pt);
\draw [fill=black] (-0.258819045102519,0.9659258262890675) circle (1.0pt);
\draw [fill=black] (-0.5,0.866025403784438) circle (1.0pt);
\draw [fill=black] (-0.7071067811865455,0.7071067811865471) circle (1.0pt);
\draw [fill=black] (-0.8660254037844363,0.5) circle (1.0pt);
\draw [fill=black] (-0.9659258262890659,0.2588190451025212) circle (1.0pt);
\draw [fill=black] (-1.,0.) circle (1.0pt);
\draw [fill=black] (-0.9659258262890663,-0.25881904510251963) circle (1.0pt);
\draw [fill=black] (-0.8660254037844368,-0.5) circle (1.0pt);
\draw [fill=black] (-0.7071067811865461,-0.707106781186546) circle (1.0pt);
\draw [fill=black] (-0.5,-0.8660254037844372) circle (1.0pt);
\draw [fill=black] (-0.25881904510252035,-0.9659258262890666) circle (1.0pt);
\draw [fill=black] (0.,-1.) circle (1.0pt);
\draw [fill=black] (0.2588190451025209,-0.9659258262890672) circle (1.0pt);
\draw [fill=black] (0.5,-0.8660254037844379) circle (1.0pt);
\draw [fill=black] (0.7071067811865476,-0.7071067811865468) circle (1.0pt);
\draw [fill=black] (0.8660254037844384,-0.5) circle (1.0pt);
\draw [fill=black] (0.965925826289068,-0.2588190451025213) circle (1.0pt);
\draw [fill=black] (0.,0.25881904510252096) circle (1.0pt);
\draw[color=black] (-0.014547075172941137,0.4) node {$w_4$};
\draw [fill=black] (-0.6123724356957931,-0.35355339059327273) circle (1.0pt);
\draw[color=black] (-0.5024279989765963,-0.26) node {$w_3$};
\draw [fill=black] (0.2241438680420136,-0.8365163037378067) circle (1.0pt);
\draw[color=black] (0.19924344199944705,-0.7) node {$w_2$};
\draw [fill=black] (0.8535533905932737,-0.3535533905932734) circle (1.0pt);
\draw[color=black] (0.7,-0.275) node {$w_1$};
\end{tikzpicture}
\caption{The graph $G_{24}$ from the proof of Theorem \ref{nKnown}.}
\label[figure]{nKnownFig}
\end{figure}
$G_n$ contains a cycle $C$ of $n$ nodes $v_1,v_2,\ldots,v_{n}$,
where the edges are $v_1v_2, v_2v_3, \ldots, v_{n}v_1$.
We define $v_{n+1}=v_1$.
There are some
additional nodes and edges as described in the following.

We define a sequence $x(1),x(2),\ldots,x(k+1)$ below and insert
shortcuts into the cycle $C$ accordingly to create shorter cycles in $G_n$.
Let $x(1)=1$. As long as $x(i)+2^{i-1}+2<n$, we define
$x(i+1)=x(i)+2^{i-1}+2$. Let $k$ be maximum such that $x(k)$ is defined
by this procedure. Let $x(k+1)=\min\{x(k)+2^{k-1}+2,n+1\}$.
Note that $x(k+1)\in\{n,n+1\}$.
$G_n$ contains $k$ \emph{shortcut nodes}
$w_1,\ldots,w_k$ and each $w_i$ is connected to $C$ by the two edges
$v_{x(i)}w_i$ and $w_iv_{x(i+1)}$.
Since $x(k)>2^{k-2}$, we have
$k\leq \log n+O(1)$.

As we shall see, the shortcut nodes ensure that $G_n$ contain cycles of all
lengths above $2k$. To make sure that $G_n$ also embeds cycles
of size at most $2k$, we define the sequence $y(2),y(3),\ldots$ as follows.
Let $y(2)=1$ and define $y(i+1)=y(i)+i-1$ for $i\geq 2$.
We see that $y(i)=2+3+\ldots+(i-2)=O(i^2)$ and hence
$y(2k)=O(\log^2 n)$.
Choose $N$ sufficiently large that $n\geq y(2k)$ for
all $n\geq N$.
Let $u$ be an extra node of $G_n$ and
connect $u$ to $C$ by the edges $uv_{y(2)},uv_{y(3)},\ldots,uv_{y(2k)}$.

It is now time to check that $G_n$ embeds cycles of all lengths $3,4,\ldots,n$.
Note that for each $\ell\in\{3,\ldots,2k\}$,
the nodes $u,v_{y(\ell-1)},v_{y(\ell-1)+1},\ldots,v_{y(\ell)}$ form a cycle
of length $\ell$.
We now show that also the cycles of lengths $\ell\in\{2k+1,\ldots,n\}$ are induced by
$G_n$. Assume first that $x(k+1)=n$.
To embed a cycle of length $\ell\geq 2k+1$, we use the nodes
$v_{x(1)},v_{x(2)},\ldots,v_{x(k+1)}$.
For each $i=1,\ldots,k$, we either use the shortcut
$v_{x(i)} w_i v_{x(i+1)}$ or the longer path
$v_{x(i)}v_{x(i)+1}\cdots v_{x(i+1)}$ on $C$.
It follows that $G_n$ induces cycles of lengths
of the form
$2k+1+\sum_{i=1}^k \delta_i\left(x(i+1)-x(i)-2\right)$ where $\delta_i\in\{0,1\}$.
Note that $x(i+1)-x(i)-2=2^{i-1}$ for $i=1,\ldots,k$.
Therefore, Lemma \ref{sumLemma} gives that the sum
$\sum_{i=1}^k \delta_i\left(x(i+1)-x(i)-2\right)$ can evaluate to any
value in the set $\{0,\ldots,n-2k-1\}$. Hence, $G_n$ induces cycles of each length
$\ell\in\{2k+1,2k+1,\ldots,n\}$.
A similar reasoning applies if $x(k+1)=n+1$.
\end{proof}

We consider the case where the decoder is oblivious of $n$. The intuition behind the construction of the graphs defined in the proof of \Cref{nUnknown} is similar to that used in the proof of \Cref{nKnown}, but we must build the graphs in a more structured manner such that $G_n$ is an induced subgraph of $G_{n+1}$. We do this by creating a long path that we carefully shortcut to create long cycles.
Each of these cycles is shortcut in a way similar to the method used in
the proof of \Cref{nKnown}. We show the following upper bound.
\begin{theorem}\label{nUnknown}
    There exists an infinite family $G_3,G_4,\ldots$
    of graphs such that
    \begin{itemize}
    \item
    for each $n\geq 3$ and each $\ell\in\{3,\ldots,n\}$,
    the cycle of length $\ell$ is an induced subgraph of
    $G_n$,
    \item
    for each $n\geq 3$, $G_n$ is an
    induced subgraph of $G_{n+1}$, and
    \item
    $G_n$ has $n + O(\sqrt{n})$ nodes.
\end{itemize}
\end{theorem}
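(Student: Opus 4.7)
The plan is to build $G_n$ as a long path equipped with two families of chord vertices all anchored at $v_0$. Set $q=\lceil\sqrt{n}\rceil$ and $s=\min(2q+4,q^2)$, and define $G_n$ to consist of the path vertices $v_0, v_1, \ldots, v_{q^2}$, the short chords $\sigma_1, \ldots, \sigma_s$, and the long chords $\tau_1, \ldots, \tau_q$; the only edges are those of the path together with $\sigma_i v_0$ and $\sigma_i v_i$ for each $i$, and $\tau_j v_0$ and $\tau_j v_{j^2}$ for each $j$. Since the adjacency of each named vertex depends only on its index (not on $n$), and the index sets $\set{v_0, \ldots, v_{q^2}}$, $\set{\sigma_1, \ldots, \sigma_s}$, $\set{\tau_1, \ldots, \tau_q}$ are monotone in $n$, $G_n$ is automatically an induced subgraph of $G_{n+1}$. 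Using $q^2 \le n + 2\sqrt{n} + 1$, the vertex count is $\abs{V[G_n]} = q^2 + 1 + s + q = n + O(\sqrt{n})$.

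It remains to show that $G_n$ induces a cycle of every length $\ell \in \set{3, \ldots, n}$. If $\ell \le s+2$, then $\set{\sigma_{\ell-2}, v_0, v_1, \ldots, v_{\ell-2}}$ induces a cycle of length $\ell$ through the single chord $\sigma_{\ell-2}$. Otherwise, let $j$ be minimal with $j^2 + 2 \ge \ell$ and set $i = j^2 + 4 - \ell$; I claim the set $S = \set{\sigma_i, \tau_j, v_0, v_i, v_{i+1}, \ldots, v_{j^2}}$ induces a cycle of length $j^2 - i + 4 = \ell$. The induced-subgraph check is routine: each vertex of $S$ has exactly two neighbours in $S$, giving the cycle $v_0 - \sigma_i - v_i - v_{i+1} - \cdots - v_{j^2} - \tau_j - v_0$. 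The only subtleties are that $v_1 \notin S$ (because $i \ge 2$), so $v_0$ does not pick up an unwanted path edge, and that any chord attached to an interior path vertex $v_l$ is different from $\sigma_i$ and $\tau_j$ and hence lies outside $S$.

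The heart of the argument is showing that the combo choice is always admissible, i.e.\ $i \in [2, s]$. The lower bound $i \ge 2$ follows from $\ell \le j^2 + 2$, which holds by the choice of $j$. For the upper bound $i \le s$, minimality of $j$ yields $\ell > (j-1)^2 + 2$, hence
\[
  i = j^2 + 4 - \ell < j^2 - (j-1)^2 + 2 = 2j + 1,
\]
so $i \le 2j$; and $\ell \le n \le q^2$ forces $j \le q$, so $i \le 2q \le s$. The upper endpoint $q^2 + 2 \ge n$ of the combo range at $j=q$ ensures length $n$ is reached, completing the coverage. The handful of small $n$ where the constants in $s$ are edge cases can be handled with minor bookkeeping without affecting the asymptotic size bound.
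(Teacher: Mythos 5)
Your construction is correct, and it shares the paper's skeleton while replacing its key gap-filling mechanism with a different (and arguably more self-contained) one. Both proofs use a long path together with one extra ``chord'' vertex per shortcut, anchored at a common endpoint, whose other attachment points are the perfect-square positions, so that cycle lengths near every square are available and monotonicity ($G_n$ induced in $G_{n+1}$) comes for free because adjacency depends only on vertex labels. The difference is how the lengths \emph{between} consecutive squares are realized: the paper inserts $O(\log n)$ ``doubling'' shortcut vertices $w_i$ at positions $x(i+1)=x(i)+2^{i-1}+2$ along the path and invokes Lemma~\ref{sumLemma} (a subset-sum argument) to show every intermediate length is a realizable sum, with a separate special vertex handling cycle lengths below roughly $2\log n$ and an extra triangle for the smallest cases; you instead add $\Theta(\sqrt{n})$ short chords $\sigma_1,\dots,\sigma_s$ near the anchor and exploit the fact that consecutive squares differ by $2j+1\le 2q\le s$, so shifting the starting offset $i=j^2+4-\ell\in[2,2j]$ interpolates between square lengths, and the same family handles all short lengths uniformly (your case $\ell\le s+2$). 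Your admissibility checks ($i\ge 2$ from $j^2+2\ge\ell$, $i\le 2j\le s$ from minimality of $j$, $j\le q$ from $\ell\le n\le q^2$) and the inducedness verification (each vertex of $S$ has exactly two neighbours in $S$; the excluded chords $\sigma_l$, $\tau_{j'}$ attached to interior path vertices are not in $S$; $v_1\notin S$ since $i\ge2$; $i<j^2$ since $\ell\ge7$ in that case) are all sound, and in fact I see no small-$n$ edge cases left over: for $q\ge2$ one has $s\ge4$, so case~1 already covers $n=3,4$ and all $\ell\le s+2$. What you give up is only a constant factor in the lower-order term (about $n+5\sqrt{n}$ vertices versus the paper's roughly $n+\sqrt{n}+O(\log n)$), and what you gain is avoiding Lemma~\ref{sumLemma} and the rounding of $n$ to $\lceil\sqrt{n-1}\rceil^2+1$, since your $G_n$ is defined directly and monotonically for every $n$.
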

\begin{proof}
We describe the construction of $G_n$ for indices on the form
$n=k^2+1$. If $n$ does not have that form, we define $G_n=G_m$, where
$m=\left\lceil\sqrt {n-1}\right\rceil^2+1$. Since
$m+O\!\left(\sqrt{m}\right)=n+O\!\left(\sqrt{n}\right)$, this is sufficient to
prove the theorem.

See Figure \ref{nUnknownFig}.
\begin{figure}[h!]
\begin{tikzpicture*}{\textwidth}
\clip(-1.6521023143510027,-9.) rectangle (38.14236031157811,3.);
\draw [shift={(1.5,2.0169425469512143)}] plot[domain=4.072932134255385:5.351845826513994,variable=\t]({1.*2.5135745936220095*cos(\t r)+0.*2.5135745936220095*sin(\t r)},{0.*2.5135745936220095*cos(\t r)+1.*2.5135745936220095*sin(\t r)});
\draw [shift={(4.,5.758772423028549)}] plot[domain=4.105301453514885:5.319476507254494,variable=\t]({1.*7.01166598036687*cos(\t r)+0.*7.01166598036687*sin(\t r)},{0.*7.01166598036687*cos(\t r)+1.*7.01166598036687*sin(\t r)});
\draw [shift={(7.5,9.556336550518099)}] plot[domain=4.046973179394067:5.377804781375312,variable=\t]({1.*12.147986181535117*cos(\t r)+0.*12.147986181535117*sin(\t r)},{0.*12.147986181535117*cos(\t r)+1.*12.147986181535117*sin(\t r)});
\draw [shift={(12.,12.819868222579434)}] plot[domain=3.960011548900102:5.464766411869277,variable=\t]({1.*17.55986962492324*cos(\t r)+0.*17.55986962492324*sin(\t r)},{0.*17.55986962492324*cos(\t r)+1.*17.55986962492324*sin(\t r)});
\draw [shift={(18.,16.61743235006898)}] plot[domain=3.887073551152627:5.537704409616752,variable=\t]({1.*24.497735771069113*cos(\t r)+0.*24.497735771069113*sin(\t r)},{0.*24.497735771069113*cos(\t r)+1.*24.497735771069113*sin(\t r)});
\draw [shift={(1.5,-1.2275487834023724)}] plot[domain=0.6858397412000377:2.4557529123897557,variable=\t]({1.*1.9382662396153538*cos(\t r)+0.*1.9382662396153538*sin(\t r)},{0.*1.9382662396153538*cos(\t r)+1.*1.9382662396153538*sin(\t r)});
\draw [shift={(5.,-1.5982115174043037)}] plot[domain=0.6741954352351088:2.4673972183546846,variable=\t]({1.*2.5601328196724027*cos(\t r)+0.*2.5601328196724027*sin(\t r)},{0.*2.5601328196724027*cos(\t r)+1.*2.5601328196724027*sin(\t r)});
\draw [shift={(10.,-3.451525187413961)}] plot[domain=0.8552715822702969:2.286321071319496,variable=\t]({1.*4.573076220592981*cos(\t r)+0.*4.573076220592981*sin(\t r)},{0.*4.573076220592981*cos(\t r)+1.*4.573076220592981*sin(\t r)});
\draw [shift={(18.,-7.575148103185446)}] plot[domain=0.9873863316245483:2.154206321965245,variable=\t]({1.*9.076500910879371*cos(\t r)+0.*9.076500910879371*sin(\t r)},{0.*9.076500910879371*cos(\t r)+1.*9.076500910879371*sin(\t r)});
\draw (0.,0.)-- (36.,0.);
\draw (1.,-6.)-- (1.5,-5.133974596215562);
\draw (2.,-6.)-- (1.5,-5.133974596215562);
\draw (1.,-6.)-- (2.,-6.);
\draw [fill=black] (0.,0.) circle (4.0pt);
\draw [fill=black] (1.,0.) circle (4.0pt);
\draw [fill=black] (2.,0.) circle (4.0pt);
\draw [fill=black] (3.,0.) circle (4.0pt);
\draw [fill=black] (4.,0.) circle (4.0pt);
\draw [fill=black] (5.,0.) circle (4.0pt);
\draw [fill=black] (6.,0.) circle (4.0pt);
\draw [fill=black] (7.,0.) circle (4.0pt);
\draw [fill=black] (8.,0.) circle (4.0pt);
\draw [fill=black] (9.,0.) circle (4.0pt);
\draw [fill=black] (10.,0.) circle (4.0pt);
\draw [fill=black] (11.,0.) circle (4.0pt);
\draw [fill=black] (12.,0.) circle (4.0pt);
\draw [fill=black] (13.,0.) circle (4.0pt);
\draw [fill=black] (14.,0.) circle (4.0pt);
\draw [fill=black] (15.,0.) circle (4.0pt);
\draw [fill=black] (16.,0.) circle (4.0pt);
\draw [fill=black] (17.,0.) circle (4.0pt);
\draw [fill=black] (18.,0.) circle (4.0pt);
\draw [fill=black] (19.,0.) circle (4.0pt);
\draw [fill=black] (20.,0.) circle (4.0pt);
\draw [fill=black] (21.,0.) circle (4.0pt);
\draw [fill=black] (22.,0.) circle (4.0pt);
\draw [fill=black] (23.,0.) circle (4.0pt);
\draw [fill=black] (24.,0.) circle (4.0pt);
\draw [fill=black] (25.,0.) circle (4.0pt);
\draw [fill=black] (26.,0.) circle (4.0pt);
\draw [fill=black] (27.,0.) circle (4.0pt);
\draw [fill=black] (28.,0.) circle (4.0pt);
\draw [fill=black] (29.,0.) circle (4.0pt);
\draw [fill=black] (30.,0.) circle (4.0pt);
\draw [fill=black] (31.,0.) circle (4.0pt);
\draw [fill=black] (32.,0.) circle (4.0pt);
\draw [fill=black] (33.,0.) circle (4.0pt);
\draw [fill=black] (34.,0.) circle (4.0pt);
\draw [fill=black] (35.,0.) circle (4.0pt);
\draw [fill=black] (36.,0.) circle (4.0pt);
\draw[color=black] (-0.8,0) node {$v_1$};
\draw[color=black] (37,0) node {$v_{36}$};
\draw [fill=black] (1.5,-0.4966320466707952) circle (4.0pt);
\draw [fill=black] (4.,-1.2528935573383215) circle (4.0pt);
\draw [fill=black] (7.5,-2.591649631017018) circle (4.0pt);
\draw [fill=black] (12.,-4.740001402343806) circle (4.0pt);
\draw [fill=black] (18.,-7.880303421000132) circle (4.0pt);
\draw [fill=black] (1.5,0.7107174562129814) circle (4.0pt);
\draw [fill=black] (5.,0.9619213022680989) circle (4.0pt);
\draw [fill=black] (10.,1.1215510331790193) circle (4.0pt);
\draw [fill=black] (18.,1.5013528076939253) circle (4.0pt);
\draw [fill=black] (1.,-6.) circle (4.0pt);
\draw [fill=black] (2.,-6.) circle (4.0pt);
\draw [fill=black] (1.5,-5.133974596215562) circle (4.0pt);
\end{tikzpicture*}
\caption{The graph $G_{37}$ from the proof of Theorem \ref{nUnknown}.
The nodes $u_2,\ldots,u_6$ are below the horizontal path $v_1v_2\cdots v_{36}$.
The nodes $w_1,\ldots,w_4$ are above.}
\label[figure]{nUnknownFig}
\end{figure}
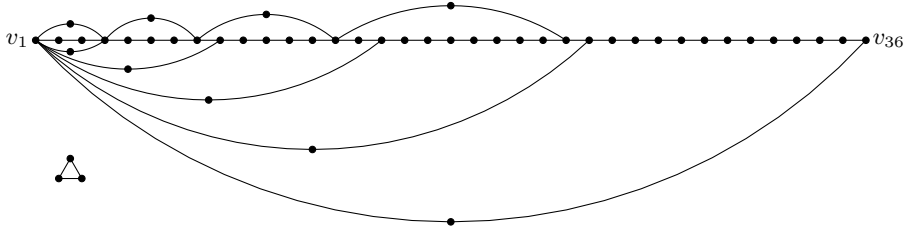
Let $n=k^2+1$ and define $G_n$ as follows. $G_n$ contains a path $P$ consisting
of the nodes
$v_1v_2\cdots v_{k^2}$.
Let $I_{\square}=\{1,4,\ldots,k^2\}$.
For each $i^2\in I_{\square}$ where $i^2>1$, there is a node $u_i$ and edges
$v_{i^2}u_i$ and $u_iv_1$. Note that the graph described so far embeds
cycles of each length $2^2+1,3^2+1,\ldots,k^2+1$.

Similar to the construction described in the proof of Theorem \ref{nKnown},
we define shortcuts in $P$ between nodes with
indices $x(1),x(2),\ldots$ as described below.
Define $x(1)=1$ and recursively $x(i+1)=x(i)+2^{i-1}+2$.
It follows that $x(i+1)=2^i+2i$.
Let $X=\{x(i)\mid i\in\mathbb N\textrm{ and }x(i)\leq k^2\}$.
Whenever $\{x(i),x(i+1)\}\subset X$ for some $i$, $G_n$ has
a node $w_i$ and edges $v_{x(i)}w_i$ and $w_iv_{x(i+1)}$.

For $p\leq k$,
let $L(p)$ be the set of lengths of cycles that 
contain 
$u_p$.
In the following, we see that
$\{7,8,\ldots,n\}\subset\bigcup_{p=3}^k L(p)$, and it follows that
we can obtain the required properties by adding $O(1)$ nodes to
$G_n$.

For a number $4\leq p^2\leq k^2$, we consider the maximum
$x(r)\in X$ such that $x(r)\leq p^2$, and note that $r\geq2$. We write
$p^2=x(r)+q$, where $0\leq q<x(r+1)-x(r)=2^{r-1}+2$.
Therefore $q/p^2=\frac{q}{x(r)+q}\leq
\frac{2^{r-1}+1}{2^r+2r-1}<1/2$, and
the path $v_{x(r)}v_{x(r)+1}\cdots v_{p^2}$ consists of
less than $p^2/2+1$ nodes.

Since $x(i)>2^{i-1}$ for all $i$,
there are no more than $\lfloor\log p^2\rfloor$ numbers
in $X\cap\{2,3,\ldots,p^2\}$.
We now construct a cycle in the following way. For each $i=1,2,\ldots,r-1$, we
either use the shortcut $v_{x(i)}w_iv_{x(i+1)}$ or the longer path
$v_{x(i)}v_{x(i)+1}\cdots v_{x(i+1)}$ on $P$. Then follows the path
$v_{x(r)}v_{x(r)+1}\cdots v_{p^2}u_p$ which completes the cycle.
Let $c(p)$ be the length of the smallest such cycle and let
$d(p)=p^2/2+2\lfloor\log p^2\rfloor+2$.
By the above discussion, we have $c(p)\leq d(p)$.
Furthermore, Lemma \ref{sumLemma} implies that one can construct the cycle to have
any length in the set $\{c(p),c(p)+1,\ldots,p^2+1\}$. It is seen that
$d(p)\leq (p-1)^2+1$ for $p\geq 6$. Hence, when $n\geq 5^2+1=26$,
$G_n$ embeds any cycle of length $\{c(5),c(5)+1,\ldots,k^2+1\}$.
We also have $c(3)=7$, $c(4)=10\leq 3^2+1$, $c(5)=11<4^2+1$. Therefore,
the only relevant cycles not induced by $G_n$ have length less than $7$.
It is seen that $G_5$ is lacking a cycle of length $3$ -- otherwise,
the graphs contain all the required cycles.
This is handled by adding a triangle to $G_n$.
\end{proof}

\bibliographystyle{amsplain}
\bibliography{max2bib}

\end{document}